\documentclass[reqno]{amsart}
\usepackage{amssymb,mathtools,color,url}
\usepackage[utf8]{inputenc}
\usepackage{natbib} 
\usepackage{microtype}
\usepackage{enumitem}
\usepackage[hidelinks]{hyperref}

\newcommand{\ExerciseText}{Exercise}

\newcommand{\E}{\mathbb{E}}

\DeclareMathOperator{\Proj}{Proj}

\theoremstyle{plain}
\newtheorem{theorem}{Theorem}
\newtheorem{proposition}{Proposition}
\newtheorem{lemma}{Lemma}
\newtheorem{corollary}{Corollary}
\newtheorem{condition}{Condition}

\newtheorem{assumption}{Assumption}

\theoremstyle{definition}
\newtheorem{definition}{Definition}
\theoremstyle{remark}
\newtheorem{remark}{Remark}

\newtheorem{example}{Example}
\newtheorem{convention}{Convention}

\newenvironment{proof*}[1]{%
  \begin{proof}[#1]}{\end{proof}}
\newcounter{exercisenr} 

\newcommand{\remarkparagraph}[1]{\textbf{#1}\ }

\begin{document}
\renewcommand{\theenumi}{\roman{enumi}}%

\title{Short-horizon Duesenberry Equilibrium}
\thanks{This research did not receive any specific grant
  from funding agencies in the public, commercial, or not-for-profit sectors.
  Declarations of interest: none.}

\author{Jaime A. Londo\~no}
\address{Jaime A. Londo\~no, Departamento de Matem\'aticas y Estad\'\i{}stica,
  Facultad de Ciencias Exactas y Naturales,
  Universidad Nacional de Colombia, Sede Manizales,
  Manizales 170003, Colombia}
\email{jaime.a.londono@gmail.com}
\begin{abstract}
  We develop a continuous-time general equilibrium framework for an infinite
  heterogeneous population whose household types are transported by a Brownian
  flow. Agents repeatedly solve short-horizon optimization problems over
  state-price-discounted consumption and wealth, which admit an endogenous
  relative-income criterion in equilibrium. For the resulting Radner-type
  equilibrium---a short-horizon Duesenberry equilibrium---we obtain a complete
  characterization and an existence theorem; market completeness and the
  absence of state-tame arbitrage emerge endogenously from market clearing.
  Finally, the equity premium puzzle is shifted to a question about the
  volatility of aggregate total wealth, while the risk-free rate puzzle admits
  a plausible resolution.
\end{abstract}
\keywords{Duesenberry equilibrium, Brownian flow, State-dependent utility}
\subjclass[2020]{91B50, 91B55, 91G10, 60H10}
\date{\today}
\maketitle
\tolerance=200 \setlength{\emergencystretch}{2em}

\section{Introduction}
\label{sec:introduction}

This paper investigates intertemporal equilibrium in
economies in which agents repeatedly solve short-horizon optimization problems
whose equilibrium representation is grounded in relative income.
The relative-income, short-horizon optimization mechanism proposed here breaks
the tight link between aggregate consumption volatility and asset returns. By
allowing preferences to move with demographics, the model can produce low
risk-free rates together with sizable equity premia, thereby offering a
unified framework that translates the classic puzzles into a question about
the volatility of aggregate total wealth---a quantity whose magnitude is
empirically plausible (see Remark~\ref{rem:equity_premium})---and provides a
plausible explanation of stylized facts related to population growth,
inflation, and interest rates (see
Remark~\ref{rem:population_growth_inflation}).

The present paper builds on two distinct components
of~\citet{Londono2020a}. First, the optimal consumption--investment result is
extended to an evolving continuum of heterogeneous household types. Second,
we develop the short-horizon mechanism suggested
in~\citet[Section~5.2]{Londono2020a} into an equilibrium model. In that paper, the
mechanism appears as a proposed extension, represented through a single stochastic
impatience process rather than a heterogeneous population field and analyzed by
means of a discretized limiting argument,
without a characterization or an existence theorem for the short-horizon
equilibrium.
Before introducing the technical apparatus, we describe the economy in words.
Each $x\in\mathbb{D}$, an open subset of $\mathbb{R}^{d}$, indexes a household
type. Households of the same type share a preference profile, and the mass
associated with each type may grow, shrink, or migrate over time. Households
receive labor-income endowment streams, trade a market portfolio and a bond, and
choose consumption and investment. Preferences are isoelastic, with an effective
impatience rate $\gamma(x)$ that might vary across types and evolves as the
cross-section ages and turns over. Equilibrium is a sequential-trading,
Radner-type notion: markets clear at every date, while households repeatedly
solve vanishing-horizon optimization problems.

The motivation for developing a new approach to equilibrium in markets based on
relative income arises from the misalignment between classical consumption
theories, equilibrium models of financial markets, and empirical data. This
misalignment is evidenced by several well-known puzzles, such as the ``equity
premium puzzle'' \citep{Mehra_and_Prescott1985}, the ``risk-free rate
puzzle'' \citep{Weil1989}, and the ``risk-aversion puzzle''
\citep{Jackwerth2000}. For a discussion of the theoretical problems with the
approaches to the existence of equilibrium,
see~\citet{Anderson_Raimondo2008},~\citet{Hugonnier_etal2010},~\citet{Kramkov_and_Prodoiu2012},~\citet{Herzberg_and_Reidel2012}
and~\citet{Raimondo2005}.

The use of utilities that compare one's own consumption to that of others is
not new; references date back to~\citet{Veblen1915} and~\citet{Duesenberry1949}
and include~\citet{Gali1994},~\citet{Abel1999},~\citet{Frank85},
\citet{Frank97},~\citet{Easterlin2002},~\citet{Scitovsky92},~\citet{Sen97},
and~\citet{Schor92} to cite a few. Closest to the present setting among these
is~\citet{ChanKogan2002}, who study a continuous-time exchange economy populated
by a continuum of agents with ``catching up with the Joneses'' preferences and
heterogeneous curvature. Evidence on relative-income effects in consumption and
investment is provided by~\citet{Easterlin1974},~\citet{Easterlin2002}
and~\citet{Easterlin2010}.

To address the misalignment described above, \citet{Londono2009, Londono2020a} proposed an
approach for optimal consumption and investment based on optimizing a
functional on consumption and wealth discounted by the state-price process. In
this paper, we generalize this solution to infinite populations, incorporating
population growth (see Theorem~\ref{thm:optimal_consumption_investment}).
\emph{A posteriori}, the optimal aggregate wealth and consumption can be
interpreted as a comparison of personal consumption with society's consumption
(or wealth), or with that of other consumers (see~\citealt[Remark~17]{Londono2020a}
and Remark~\ref{rem:keeping_up_joneses}). How this construction relates to the classical
keeping-up-with-the-Joneses formulations of~\citet{Gali1994}
and~\citet{Abel1999} is set out in Remark~\ref{rem:relation_joneses}.

To place our contribution in context, it is important to clarify the role of
the utility specification. At first sight, the preference structure resembles a
standard isoelastic formulation. The crucial difference, however, lies in its
argument. Following the valuation framework of~\citet{Londono2008} and the
state-dependent utility approach of~\citet{Londono2009}, utility is applied to
nominal consumption and wealth discounted by the state-price process, such as
$H_t c_t$, rather than to undiscounted consumption quantities $c_t$.

This formulation admits an endogenous relative-income representation in
equilibrium: although no social benchmark is introduced as a primitive argument
of utility, the local criterion can be written \emph{ex post} as utility from
individual consumption relative to aggregate total wealth, normalized by the
population-weighted effective aggregate $\eta_t$
(Corollary~\ref{cor:optimal_aggregate} and Remark~\ref{rem:keeping_up_joneses};
$\eta_t$ is deterministic under the fixed preference structure of
Theorem~\ref{thm:optimal_consumption_investment} and becomes stochastic in
Section~\ref{sec:equilibrium}, where the impatience field is transported by the
flow).

The solution obtained by optimizing relative wealth and consumption does not,
by itself, resolve the risk-free rate puzzle or the equity premium puzzle. To
address this shortcoming, \citet[Section~5.2]{Londono2020a} suggested an
approach---developed in this paper in Section~\ref{sec:equilibrium}---that
retains the optimal behavior described above and allows
consumers to solve their consumption and investment problem over short
horizons, and to reset their optimization continuously as preferences and
tastes change. This dynamic preference structure
may reflect changes in agents' tastes. These changes can arise from aging or
demographic shifts.  Since households never commit to a lifetime plan, the
resulting equilibrium path need not solve any single global relative-income
maximization problem.

As shown in Remark~\ref{rem:equity_premium}, this mechanism
provides an insight into both puzzles and suggests a resolution. The present study
develops and generalizes the proposed approach to a setting with infinitely
many heterogeneous agents and population growth, obtaining a solution to the
short-horizon optimization problem (see
Theorem~\ref{thm:local_optimal_consumption_investment}) and establishing the
characterization and existence of equilibrium under weak hypotheses (see
Theorem~\ref{thm:local_equilibrium}).

A two-type economy, developed in Remark~\ref{rem:two_type_example}, separates
two mechanisms that are easily conflated. The aggregate Duesenberry loading
$-\partial_t\eta_t$ is defined, under the fixed preference structure, in
Corollary~\ref{cor:optimal_aggregate}, and, under the dynamic preference
structure transported by the flow, in
Corollary~\ref{cor:optimal_local_aggregate} and in
Equation~\eqref{eq:varkappa_equilibrium} of
Theorem~\ref{thm:local_equilibrium}. When the effective impatience field is
locally constant, the relative-income representation is already available, yet
$-\partial_t\eta_t$ is deterministic and
carries no risk---and rolling the short-horizon problems forward does not change
this. What activates the impatience-risk premium of
Remark~\ref{rem:equity_premium} is not heterogeneity of
impatience as such, but its variation along the flow: once $\gamma$ is
non-constant on a connected type space, transport moves each type through
regions of differing impatience, $-\partial_t\eta_t$ acquires the diffusion
generated by $\nabla\gamma$, and its volatility enters the pricing kernel.

The contributions of the paper are the following.

\begin{enumerate}[label=(\roman*),leftmargin=*]

  \item \textbf{Optimal consumption and investment for infinite heterogeneous
          populations.} As a preliminary step toward our proposed approach, we provide an
        optimization result for the optimal consumption and investment (see
        Theorem~\ref{thm:optimal_consumption_investment}), which \emph{a posteriori}
        implies a maximization behavior on relative income (see
        Remark~\ref{rem:keeping_up_joneses}). The market is allowed to admit arbitrage opportunities and to be incomplete,
        and the population may be large and heterogeneous.
        Theorem~\ref{thm:optimal_consumption_investment} assumes that preferences
        remain unchanged as time evolves. This optimization result is a
        generalization to a large (heterogeneous) population
        of \citet[Theorem~11]{Londono2020a}.

  \item \textbf{Characterization and existence of the short-horizon Duesenberry
          equilibrium.} As a natural step to address equilibrium, we include a second
        group of results consisting of
        Theorem~\ref{thm:local_optimal_consumption_investment} and
        Theorem~\ref{thm:local_equilibrium}. From these two results, we derive
        consumption, investment, and portfolios as the limiting outcomes of the optimal
        processes over time intervals in which each agent solves the optimization
        problem of Theorem~\ref{thm:optimal_consumption_investment}. Finally, as a
        result of the aggregate consumption and wealth characterization, we obtain a
        characterization and existence theorem (Theorem~\ref{thm:local_equilibrium})
        for the implied intertemporal equilibrium under weak conditions.

  \item \textbf{Asset-pricing implications.}
        Corollary~\ref{cor:properties_local_equilibrium} synthesizes the macroeconomic
        consequences: the state-price process factorizes both as the ratio of the effective aggregate
        $\eta_t$ to aggregate total wealth and as the ratio of the aggregate Duesenberry
        loading $-\partial_t\eta_t$ to aggregate consumption; the latter is the direct
        analogue of the classical stochastic discount factor. Both the equity premium
        and the short interest rate admit explicit decompositions into a classical
        consumption-risk component and an impatience-risk
        component.  Remark~\ref{rem:equity_premium} shows that this
        decomposition identifies the source of the equity premium and risk-free
        rate puzzles: the market price of risk is the volatility of aggregate
        total wealth, and the values implied by long-run U.S.\ data cannot be
        reconciled with discounting labor income at the risk-free rate. The
        puzzles thus become a question about the rate at which households
        discount future consumption streams---plausibly an unsecured borrowing
        rate rather than $r_t$---which we take up in
        \citet{Londono2026w}. Remark~\ref{rem:population_growth_inflation}
        discusses qualitative implications for inflation pass-through and the
        secular decline of interest rates associated with demographic trends.

  \item \textbf{A dynamically consistent aggregation framework for evolving
          populations.}
        We develop a population-flow and aggregation structure under which
        heterogeneous household types can be transported over time, aggregates
        recomputed from any reset date coincide pathwise with their calendar-time
        counterparts, and population-weighted quantities remain semimartingales.
        Because aggregation is defined pathwise through the common Brownian flow, the
        particular measurability and exact-aggregation difficulties identified
        by~\citet{Judd1985} and addressed through the Fubini-extension framework
        of~\citet{Sun2006} do not arise in the present setting. The relative
        transparency of the resulting equilibrium formulas is therefore a consequence
        of the population-flow and consistency framework, rather than an indication
        that the underlying aggregation problem is immediate. The construction and its
        role in the equilibrium characterization are set out in
        Remark~\ref{rem:aggregation_framework}, with proofs collected in
        Appendix~\ref{appendix:aggregation}.

\end{enumerate}

Also, we review the leading approaches to the equity premium puzzle---habit
formation, long-run risk, and idiosyncratic-income heterogeneity---and argue
that each relies on at least one latent state variable or poorly identified
preference parameter to amplify the stochastic discount factor (see
Remark~\ref{rem:comparison_literature}).

We use the pricing theory based on state-tameness
\citep[see][]{Londono2004,Londono2008,Londono2020a}, which does not require
completeness or \emph{a priori} conditions on the volatility of the price
process---a feature that constitutes a major obstacle in the theory of the
existence of intertemporal equilibrium (see~\citealt{Londono2020a} and references
therein). A distinctive feature of the short-horizon Duesenberry equilibrium
framework is that market completeness (in the sense of no state-tame arbitrage
and well-defined pricing) emerges \emph{endogenously} from the equilibrium
conditions, rather than being imposed as an exogenous assumption. This contrasts
sharply with the classical literature \citep{Anderson_Raimondo2008, Hugonnier_etal2010,Kramkov_and_Prodoiu2012, Herzberg_and_Reidel2012}, where ``potentially dynamically complete
markets'' is a condition with no direct economic interpretation and must
therefore be assumed \emph{a priori}.

In our setting, the condition that implies completeness (the Smooth Market
Condition~\ref{cond:smooth_market}) is merely a regularity requirement
(continuity of $\kappa_t$), not a structural restriction on the market
architecture. The existence of $\kappa_t$ satisfying
$\kappa_t\sigma_t = \vartheta_t$, where $\sigma_t$ is the volatility of the
market price and $\vartheta_t$ is the market price of risk, is guaranteed for
any financial market; only its continuity is assumed. The no-arbitrage and
completeness property then follows as a \emph{consequence} of equilibrium, not
as a prerequisite for its existence. On the other hand, for the results studied
here, we only require that the income of any agent be hedgeable or insurable in
a rich enough actuarial market (see Remark~\ref{rem:hedgeable_income}), and we
do not require the hedgeability of any other financial instruments. Also, we
focus on the behavior of the aggregate market price rather than on that of
individual stocks. As shown by~\citet{Londono2020a}, Duesenberry equilibria do
not imply any particular behavior for individual securities. Instead, they
imply the absence of arbitrage opportunities when hedging with the market
portfolio and the bond.

To model changing preferences for relative consumption and wealth over time, we
define stochastic, consistent isoelastic preference structures for a
population. However, we point out that most definitions and results can be
extended to numerous types of utilities, including logarithmic utility
functions and homogeneous preference structures
\citep[see][Condition 1]{Londono2009}. We define stochastic dynamic preference
structures in two steps.

The first step is to define a \emph{time-consistent state-preference structure}:
a family of pairs of utilities for consumption and for wealth discounted by the
state-price process. Time consistency means that the same optimal consumption
and investment solution is obtained across all common investment horizons (see
Definition~\ref{def:consistent_preference}), allowing us  to address naturally
the problem of an infinitely lived agent.

The second step models the preference of each type of individual in the society
(represented by a point $x\in\mathbb{D}$) as a process with values in a family
of time-consistent preference structures $U_s$, whose realization
$U_s(\omega)(x)$ describes the preference of the agent at time~$s$ in
state~$\omega$. This stochastic dynamic preference structure captures the
possibility that individuals can change their preferences over time, for
instance, due to aging.

Some references on progressive, dynamic utilities to model changes over time of
preference behavior are~\citet{Musiela2007, Musiela2010}
and~\citet{Nicoleetal2018}, among others.

The economic model just described requires a state space rich enough to carry a
continuum of types together with their population dynamics, and this is what
dictates the mathematical apparatus of Section~\ref{sec:model}.
This work develops a framework to address state spaces characterizing finite
and infinite populations of heterogeneous consumers. We assume a state space
(an open set $\mathbb{D}\subset \mathbb{R}^d$), a Brownian flow of
two-parameter processes $\varphi_{s,t}(x)$ defined on
$\mathbb{D}\subset \mathbb{R}^d$ which models the stochastic behavior of the
state variables, and provides a natural framework for modeling infinite
populations. Indeed, our approach is an extension of the classical framework
where the relevant variables in the economy are of the form
$\int_U f(t,\varphi_t(x))\, d\mu(x)$ where $\varphi_t$ is an It\^o process
describing the evolution of state variables, $f$ is a differentiable function
describing the relation between the state variables and the quantities of
interest, and $\mu$ is the initial distribution on the state variables at time
$0$. In the context of mean field games and optimal portfolio management under
competition and relative performance
criteria,~\citet{LackerandZariphopolou2019} address the problem of mean field
games for portfolio optimization with infinite populations.
A separate literature studies portfolio choice and pricing under primitive
relative-performance concerns using stochastic games; see, among others,
\citet{EspinosaTouzi2015}, \citet{BielagkLionnetDosReis2017},
and~\citet{LackerSoret2020}.

The structure of the paper is as follows.

Section~\ref{sec:model} introduces the population dynamics, the pathwise
aggregation structure, and the financial market.

In Section~\ref{sec:utility}, we define consistent state preference structures
and stochastic consistent preference structures; we also extend a result
from~\citet{Londono2009, Londono2013, Londono2020a} on the optimal consumption
and investment theorem for utilities on state-price discounted values for
infinite populations.

In Section~\ref{sec:equilibrium}, we establish
Theorem~\ref{thm:local_optimal_consumption_investment}, which characterizes the
optimal consumption and investment strategy for consumers who continuously
reset their optimization problem, optimizing over state-price-discounted
consumption and wealth. Finally, we reach a characterization theorem on an
intertemporal equilibrium where agents use the locally optimal behavior of
Theorem~\ref{thm:local_optimal_consumption_investment} to determine their
consumption and investment decisions.

In Section~\ref{sec:examples}, we develop two examples that illustrate two
complementary parametrizations: one treating the present value of human wealth
as the primitive, the other starting from the labor-income process itself.

Finally, Section~\ref{sec:conclusions} presents several conclusions and
possible extensions of the framework proposed in this paper. Among the most
promising extensions are the application of the equilibrium framework to the
term structure of interest rates (fixed-income pricing), its generalization to
production economies in which the aggregate income process is endogenously
determined by production technology, and the development of \emph{Duesenberry
  equilibria with endogenous borrowing premia}, in which state-dependent
household borrowing rates generate large fluctuations in total wealth without
altering the aggregate pricing kernel.

The stochastic-flow preliminaries underlying the population dynamics are
collected in Appendix~\ref{appendix:flows}, while the cocycle and
semimartingale aggregation results are proved in
Appendix~\ref{appendix:aggregation}.

\section{The Economy}\label{sec:model}

Throughout, $(\Omega,\mathcal{F},(\mathcal{F}_t)_{t\ge0},P)$ is the canonical
filtered Wiener space with coordinate process $W$, and $(\mathcal{F}^T_{s,t})$
denotes the associated two-parameter filtration. Household types are indexed by
a point $x$ of an open set $\mathbb{D}\subset\mathbb{R}^{d}$, and their
evolution is described by a Brownian flow $\varphi_{s,t}(x)$ on $\mathbb{D}$: a
two-parameter family of continuous semimartingales, smooth in the spatial
variable, that transports each type forward in time. The construction of the flow, its regularity conditions, and the precise
notion of $\varphi$-consistency are collected in
Appendix~\ref{appendix:flows}.

\subsection{Population Dynamics and Aggregation}\label{subsec:population_dynamics}

The semimartingale properties of population-weighted aggregates and calendar-time consistency are fundamental to
the equilibrium analysis. We establish these properties in Appendix~\ref{appendix:aggregation},  under suitable
integrability conditions.  We
now introduce the population-flow and aggregation structure that underlies the equilibrium analysis.

\begin{definition}[Population-weighted aggregation]\label{def:aggregation}
  Let $\varphi$ be a Brownian flow satisfying Assumption~\ref{ass:Lipschitz},
  and let $h\colon\mathbb{D}\to\mathbb{R}$ be a deterministic population growth rate,
  bounded from above and below, of class $C^{2,0+}$.
  Let $\mathcal{P}_{\varphi}\subset\mathcal{P}(\mathbb{D})$ be a family of
  probability measures closed under transport by $\varphi$, that is, for every
  $0\le s\le t$ and every $\nu\in\mathcal{P}_{\varphi}$,
  \[
    \varphi_{s,t}(\nu)\in\mathcal{P}_{\varphi},
  \]
  where $\varphi_{s,t}(\nu)$ denotes the image of $\nu$ under
  $\varphi_{s,t}$ as in Definition~\ref{def:consistent_measure}.

  Define the population weight
  \[
    \Lambda_{s,t}(x)
    \coloneqq
    \exp\!\Big(\int_s^t h(\varphi_{s,u}(x))\,du\Big),
    \qquad 0\le s\le t,\; x\in\mathbb{D},
  \]
  and the measure-valued kernel $\nu_{s,t}(x,\cdot)\in\mathcal{M}(\mathbb{D})$ by
  \[
    \nu_{s,t}(x,B)
    \coloneqq
    \Lambda_{s,t}(x)\,\delta_{\varphi_{s,t}(x)}(B),
    \qquad B\in\mathcal{B}(\mathbb{D}).
  \]

  For any finite Borel measure $\mu$ on $\mathbb{D}$, define the action of the
  kernel $\nu_{s,t}$ on $\mu$ by
  \[
    (\mu\nu_{s,t})(B)
    \coloneqq
    \int_{\mathbb{D}} \nu_{s,t}(x,B)\,d\mu(x),
    \qquad B\in\mathcal{B}(\mathbb{D}).
  \]

  A \emph{population structure after time $s$} is a triple
  $\boldsymbol{\mu}=(f_{\mu},\nu_{s,t},\bar\mu_{s,t})$, where
  $f_{\mu}$ is an $\mathcal{F}_s$-measurable random variable such that
  $f_{\mu}\ge 0$ and $\E[|f_{\mu}|]<\infty$, and
  \[
    \mu \coloneqq f_{\mu}\,\bar\mu_{s,s},
    \qquad
    \bar\mu_{s,s}\in\mathcal{P}_{\varphi}(\mathbb{D}).
  \]

  We then define the structural no-growth composition by
  \[
    \bar\mu_{s,t}
    \coloneqq
    \varphi_{s,t}(\bar\mu_{s,s}),
    \qquad s\le t,
  \]
  and the associated size-weighted population measure by
  \[
    \mu_{s,t}
    \coloneqq
    \mu\nu_{s,t},
    \qquad\text{that is,}\qquad
    \mu_{s,t}(B)
    =
    \int_{\mathbb{D}}
    \mathbf{1}_{\{\varphi_{s,t}(x)\in B\}}\Lambda_{s,t}(x)\,d\mu(x).
  \]

  Assume $\bar\psi=(\bar\psi_{s,t}(x))$ is a continuous
  $C(\mathbb{D};\mathbb{R})$ semimartingale random field.
  The \emph{population-weighted aggregation of $\bar\psi$} is the
  $\varphi$-consistent family defined by
  \begin{equation}\label{eq:aggregation}
    \psi^\mu_{s,t}
    \coloneqq
    \int_{\mathbb{D}} \bar\psi_{s,t}(x)\,\Lambda_{s,t}(x)\,d\mu(x)
    =
    \int_{\mathbb{D}} \psi_{s,t}(x)\,d\mu(x),
  \end{equation}
  whenever the right-hand side is a well-defined semimartingale, where
  \[
    \psi_{s,t}(x)\coloneqq \bar\psi_{s,t}(x)\Lambda_{s,t}(x)
  \]
  is the mass-weighted typewise total field.
\end{definition}

\begin{convention}
  \label{conv:mu_notation}
  Throughout the paper, unless stated otherwise, we work with mass-weighted
  (typewise total) fields $X_{s,t}(x)$. Their per-capita counterparts are
  denoted by $\bar X_{s,t}(x)=X_{s,t}(x)/\Lambda_{s,t}(x)$.  For any typewise (per-capita) quantity $\bar{X}_{s,t}(x)$ or $\bar X_t(x)\coloneqq  \bar X_{0,t}(x)$,
  we write $X_{s,t}^\mu$ or $X_{t}^\mu\coloneqq X_{0,t}^\mu$ for its
  corresponding population-weighted aggregate. The same convention applies to the
  mass-weighted (typewise total) fields $X_{s,t}(x)$ and $X_t(x)\coloneqq X_{0,t}(x)$,
  whose population-weighted aggregates are denoted $X^\mu_{s,t}$ and $X^\mu_t$.
  This superscript notation is used only whenever a
  type-level counterpart exists, or when emphasizing the dependence on the
  population measure $\mu$; it may be omitted when the measure is clear from the context.
\end{convention}

In this paper, we focus on the behavior of the aggregate market price rather than on that of individual stocks. As shown by~\citet{Londono2020a}, Duesenberry equilibria do not imply any particular behavior for individual securities.

Since we assume that markets result from aggregated decisions, we require that the price processes and all their coefficients satisfy the fundamental  consistency condition given by the following definition (for a motivation, see Corollary~\ref{cor:aggregate_semimartingale_app}).

\begin{definition}\label{def:consistency_population}
  Assume  a population structure $\boldsymbol{\mu}$ with initial measure
  $\mu$ after time $0$ (see Definition~\ref{def:aggregation}).
  For each
  random measure  $\mu_t=\mu_{0,t}$, $t\geq 0$  we assume a  family of (random) positive continuous $\mathcal{F}_{t+s}$-adapted
  semimartingale processes $X^{\mu_t}=(X^{\mu_t}_{s})$ for $0\leq s< \infty$,
  with the property
  that
  \begin{equation}\label{eq:aggregate_consistency}
    X^{\mu_t}_u=X^{\mu}_{t+u}.
  \end{equation}
  We say that the family with the above properties is a
  \emph{family of processes consistent with the population structure}
  $\boldsymbol{\mu}$.  Note that a family as above is uniquely
  characterized by the process $X^{\mu}_t$ for $t\geq 0$.  In this
  case we use interchangeably the family $X^{\boldsymbol{\mu}}$ and the process $X_t^{\mu}$.

  Here $X^{\mu_t}_s$ denotes the evolution of the process $s$ units of time after date $t$,
  when the population aggregate is given by the random measure $\mu_t$.
  Throughout this paper, we adopt the notation
  $X^{\mu}_t$, for a family of processes satisfying this definition for
  a population structure $\boldsymbol{\mu}$, with initial measure $\mu$ at time $0$.
\end{definition}

\begin{remark}\label{rem:aggregation_framework}
  \remarkparagraph{Role of the aggregation framework.}
  The cocycle property of the population weights
  (Lemma~\ref{lem:aggregation_cocycle_app}) provides pathwise consistency across
  successive short-horizon problems, while the general aggregation result
  (Proposition~\ref{prop:semimartingaleProperty}) and its discrete and continuous
  specializations (Propositions~\ref{prop:disc_weighted_app}
  and~\ref{prop:cont_app}) provide the stochastic representations required by the
  equilibrium characterization of Section~\ref{sec:equilibrium}, in particular the
  semimartingale dynamics of the effective aggregate $\eta_t$ and of the aggregate
  Duesenberry loading $-\partial_t\eta_t$. These properties allow the sequence of
  locally optimal decisions to define a single intertemporal equilibrium process.

  The construction also avoids a device often used in models with a continuum of
  independently shocked agents. Aggregation here is pathwise: every household is transported by the
  same Brownian flow, heterogeneity enters only through the initial type
  $x\in\mathbb{D}$, and the realized profile $x\mapsto\varphi_{s,t}(x)$ is
  measurable by construction under Assumption~\ref{ass:Lipschitz}. Hence, unlike
  models that aggregate a continuum of independently shocked agents, the present
  framework does not require an exact law of large numbers or a Fubini-extension
  construction (see~\citet{Judd1985} and~\citet{Sun2006}). What the population structure contributes is heterogeneity in
  preferences and its transport over time, not idiosyncratic uncertainty.
\end{remark}

\subsection{Financial Market Structure}\label{subsec:financial_market_structure}

Next, we describe a financial market.
\begin{definition}[Financial market]\label{def:financial_market}
  Let $\boldsymbol{\mu}$ be a population structure with initial measure $\mu$
  after time $0$ satisfying Definition~\ref{def:aggregation}.

  A \emph{financial market} with population structure $\boldsymbol{\mu}$ is a structure
  $\mathfrak{M} = (P^\mu, \boldsymbol{\mu}, b^\mu, \sigma^\mu, \delta^\mu, \vartheta^\mu, r^\mu)$
  where each component is a continuous semimartingale process consistent with
  $\boldsymbol{\mu}$ in the sense of Definition~\ref{def:consistency_population},
  satisfying the following:

  \begin{enumerate}[label=(\roman*)]
    \item \textbf{Price process.} The \emph{market price process} $P^\mu = (P^\mu_t)_{t \geq 0}$
          is a positive continuous semimartingale with decomposition
          \begin{equation}\label{eq:evolution_market_price}
            dP^\mu_t = P^\mu_t \, b^\mu_t \, dt
            +P^\mu_t(\sigma^\mu_t)^\intercal\, dW(t)
            \qquad P^\mu_0 > 0,
          \end{equation}
          where the \emph{return process} $b^\mu = (b^\mu_t)$ and the \emph{volatility
            matrix (column vector) process} $\sigma^\mu = ((\sigma^\mu)^j_t)_{1 \leq j \leq n}$ are continuous
          processes consistent with $\boldsymbol{\mu}$, and $dW(t)=(dW_1(t),\cdots, dW_n(t))^\intercal$.

    \item \textbf{Dividend process.} The \emph{dividend yield process}
          $\delta^\mu = (\delta^\mu_t)$ is a continuous process consistent with $\boldsymbol{\mu}$.

    \item \textbf{Interest rate and bond.} The \emph{interest rate process}
          $r^\mu = (r^\mu_t)$ is a continuous process consistent with $\boldsymbol{\mu}$,
          and the \emph{bond price process} $B^\mu = (B^\mu_t)$ is the positive continuous
          semimartingale satisfying
          \begin{equation}\label{eq:Bond}
            dB^\mu_t = r^\mu_t \, B^\mu_t \, dt, \qquad B^\mu_0 = 1,
            \qquad t \geq 0.
          \end{equation}

    \item \textbf{Market price of risk.} The \emph{market price of risk process}
          $\vartheta^\mu = (\vartheta^\mu_t)$ is a continuous $\mathbb{R}^n$-valued (as column vector) process
          consistent with $\boldsymbol{\mu}$, with
          $\vartheta^\mu_t \in \big(\ker(\sigma^\mu_t)\big)^\perp$ for all $t \geq 0$,
          satisfying the no-arbitrage relation
          \begin{equation}\label{E:wviability}
            b^\mu_t + \delta^\mu_t - r^\mu_t
            - \Proj_{\ker(\sigma^\mu_t)^\intercal}\big(b^\mu_t + \delta^\mu_t - r^\mu_t\big)
            = (\sigma^\mu_t)^\intercal \, \vartheta^\mu_t.
          \end{equation}
  \end{enumerate}

  Given such a financial market $\mathfrak{M}$, the \emph{state-price process}
  $H^\mu = (H^\mu_t)$ is defined as the positive continuous semimartingale
  \begin{equation}\label{eq:state_price}
    H^\mu_t \coloneqq  \frac{Z^\mu_t}{B^\mu_t}, \qquad t \geq 0,
  \end{equation}
  where
  \begin{equation}\label{eq:density_process}
    Z^\mu_t \coloneqq  \exp\left\{
    -\int_0^t (\vartheta^\mu_u)^\intercal \, dW(u)
    - \frac{1}{2} \int_0^t \|\vartheta^\mu_u\|^2 \, du
    \right\}.
  \end{equation}
  By It\^o's Lemma, the inverse state-price process $(H^\mu_t)^{-1}$ satisfies
  \begin{equation}\label{eq:sde_consumptionprice}
    d(H^\mu_t)^{-1} = (H^\mu_t)^{-1} \big(r^\mu_t + \|\vartheta^\mu_t\|^2\big) \, dt
    + (H^\mu_t)^{-1} (\vartheta^\mu_t)^\intercal\, dW(t),
    \qquad (H^\mu_0)^{-1} = 1.
  \end{equation}
\end{definition}
Let us define $H_{s,t}^\mu=H^\mu_{t}/H^\mu_{s}$ for $s\leq t$. It follows
that  for $s\leq u\leq t$,  $H^\mu_{u,t}H_{s,u}^{\mu}=H_{s,t}^{\mu}$
and
\begin{equation*}
  d(H^{\mu}_{s,t})^{-1}=(H^{\mu}_{s,t})^{-1}\left(r^{\mu}_{t}+\|\vartheta^{\mu}_{t}\|^2\right)\, dt +(H^{\mu}_{s,t})^{-1}(\vartheta_t^\mu)^\intercal dW(t),\qquad (H^{\mu}_{s,s})^{-1}=1
\end{equation*}
for $s\leq t\leq T$, so $H^{\mu}_{s,t}$ for $s\leq t$ is the state-price process of the market reset at $s$.

Throughout this paper, we assume that the market satisfies Condition~\ref{cond:smooth_market} given below, which we call the smooth market condition.  Note that since
$\vartheta^\mu_{t}\in\ker^{\perp}(\sigma^\mu_{t})=\mathrm{Im}((\sigma^\mu_{t})^{\intercal})$
the existence of a progressively measurable function $\kappa$ with the property expressed in Equation~\eqref{eq:smooth_market} follows for any financial market \citep[see][]{Londono2004}.
Hence, the requirement below is a weak condition on the smoothness of the aforementioned property.

\begin{condition}[Smooth Market Condition]\label{cond:smooth_market}
  There exists a \emph{continuous} (scalar) semimartingale process $\kappa^{\mu}_{t}$
  consistent with $\boldsymbol{\mu}$ in the sense of Definition~\ref{def:consistency_population},
  taking values in $\mathrm{span}((\sigma^\mu)^1_{t},\cdots,(\sigma^\mu)_t^n)$,
  such that for all $t\ge 0$,
  \begin{equation}\label{eq:smooth_market}
    \kappa^\mu_{t}\sigma^\mu_{t}=\vartheta^\mu_{t}.
  \end{equation}
\end{condition}

\begin{remark}\label{rem:state_tameness}
  In general, the notion of Duesenberry equilibrium implies the dynamics of the market index (the whole market) and the dynamics of the  bond prices  (\citet[Theorem 16]{Londono2020a}).
  However, the Duesenberry equilibrium implies no restriction on the dynamics of individual stocks.
  Therefore, we assume that we can only trade bonds and the market index.

  We point out that unless
  $\Proj_{\ker((\sigma^\mu_{t})^{\intercal})}(b^\mu_{t}+\delta^\mu_{t}-r^\mu_{t})=0$, there are state-tame
  arbitrage opportunities.  Therefore, in this paper, we do not impose the absence of arbitrage at the aggregate level \emph{a priori}.
  See~\citet{Londono2004} for definitions and characterization of state arbitrage opportunities,
  and state-tame portfolios.
\end{remark}

Next, we review and extend some definitions from~\citet{Londono2009} needed to describe the Duesenberry equilibrium in the setting of this paper.
These extensions are natural adaptations of the classical theory of consumption and investment \citep{Karatzas1998}
to the new setting proposed herein. For a detailed description of
consistent and related processes, see~\citet{Londono2009} and~\citet{Londono2008}.

\begin{definition}\label{def:endowment}
  Assume a market $\mathfrak{M}$ with a population structure
  $\boldsymbol{\mu}$.
  Assume continuous, consistent with the population structure $\boldsymbol{\mu}$, real-valued random fields
  $\xi^{\boldsymbol{\mu}}\coloneqq (\xi^{\boldsymbol{\mu}}(x))$,
  $c^{\boldsymbol{\mu}}\coloneqq (c^{\boldsymbol{\mu}}(x))$,
  $Q^\mu\coloneqq (Q^\mu(x))$, ${\pi}^{\boldsymbol{\mu}}\coloneqq ({\pi}^{\boldsymbol{\mu}}(x))$
  and $L^{\boldsymbol{\mu}}\coloneqq (L^{\boldsymbol{\mu}}(x))$, with:
  \begin{description}
    \item[Rate of consumption, rate of endowment, wealth, portfolio, and hedgeability]
          Assume non-negative processes as above $c^\mu_{t}(x)$ and  $Q^\mu_{t}(x)$ such that  for all $x\in \mathbb{D}$,
          $\E\left[\int_{0}^{\infty}H^\mu_{t}c^\mu_{t}(x)\,dt\right]<\infty$ and $\E\left[\int_0^{\infty}H^\mu_{t}Q^\mu_{t}(x)\,dt\right]<\infty$.
          We say that a continuous random field $\pi_t^{\boldsymbol{\mu}}(x)$ is
          a portfolio that finances $\xi_t^\mu(x)$ with \emph{hedgeable rate of
            consumption and rate of endowment structure} if the family of processes
          $\xi^\mu_{t}(x)$ can be financed by $\pi^\mu_t(x)$ using the endowment $Q^\mu_{t}(x)$ and consumption $c^\mu_{t}(x)$, namely:
          \begin{multline}\label{eqn:financiability}
            (B^\mu_{t})^{-1}\xi^\mu_{t}(x)=\xi^\mu_{0}(x)+\int_{0}^t (B^\mu_{u})^{-1}
            \left(Q^\mu_{u}(x)-c^\mu_{u}(x)\right)\,du +\\
            \int_{0}^t(B^\mu_{u})^{-1}\pi^\mu_{u}(x)(\sigma^\mu_{u})^\intercal\,dW(u)
            +\int_0^t(B^\mu_{u})^{-1}\pi^\mu_{u}(x)(b^\mu_{u}+\delta^\mu_{u}-r^\mu_{u})\,du
          \end{multline}
          for all  $x\in\mathbb{D}$
          and $0\leq t<\infty$, and say that the family $(\xi^{\boldsymbol{\mu}},c^{\boldsymbol{\mu}}, Q^{\boldsymbol{\mu}})$ as above is a wealth, consumption, and income structure, where $\xi^{\boldsymbol{\mu}}$ is the \emph{wealth process structure},
          $c^{\boldsymbol{\mu}}$ is the \emph{rate of consumption structure}, and $Q^{\boldsymbol{\mu}}$
          is the \emph{rate of endowment structure}. Also,
          $\delta^\mu_{t}\pi^\mu_{t}(x) $ is the rate of dividend income accruing to the portfolio $\pi^\mu_t(x)$.

          We say that $\pi_t^\mu(x)$ is a no-arbitrage portfolio if for each $T>0$, and $0\leq t\leq T$,
          $H^\mu_{t}G^\mu_{t}(x)$  is uniformly bounded below,
          where the bound may depend on $x$ and $T$.  Here $G^\mu_{t}(x)$ denotes the  gain-in-excess process  \citep[see][Remark 1]{Londono2008} and it is defined by
          \begin{multline*}
            G^\mu_{t}(x)=B^\mu_{t}\int_{0}^t(B^\mu_{u})^{-1}\pi^\mu_{u}(x)(\sigma^\mu_{u})^\intercal\,dW(u)
            +\\
            B^\mu_{t}\int_0^t(B^\mu_{u})^{-1}\pi^\mu_{u}(x)(b^\mu_{u}+\delta^\mu_{u}-r^\mu_{u})\,du.
          \end{multline*}
          If $\pi_t^\mu(x)$ is a no-arbitrage portfolio, it follows \citep[see][Remark~6]{Londono2020a} that
          \begin{multline}\label{eqn:non_arbitrage_portfolio}
            H_t^\mu \xi_t^\mu(x) + \int_0^tH_u^\mu(c_u^\mu(x)-Q^\mu_u(x))\,du=\\
            \xi_0^\mu(x)+\int_0^tH^\mu_u\Big(\pi^\mu_u(x)(\sigma^\mu_u)^\intercal-\xi^\mu_u(x)(\vartheta^\mu_u)^\intercal\Big)\, dW(u).
          \end{multline}

          If a wealth process $(\xi^\mu,0,Q^\mu)$ is a
          hedgeable rate of consumption and rate of endowment structure,
          we say that $(\xi^\mu, Q^\mu)$ is a hedgeable
          wealth and rate of endowment structure.

    \item[Admissibility, and Subsistence random fields]
          A \emph{subsistence random field structure} $L^{\boldsymbol{\mu}}=(L^{\mu}_{t}(x))$
          $x\in\mathbb{D}$  for the market $\mathfrak{M}$
          is a wealth process structure where for each $T>0$, $L^\mu_{t}(x)H^\mu_{t}$,  $0\leq t\leq T$ is uniformly bounded
          below (where the bound may depend on $x$ and $T$) such that
          \[
            \E\left[H^\mu_{t}L^{\mu}_{t}(x)\right]<\infty
          \]
          for all $t$ and $x\in\mathbb{D}$.  We say that the couple
          $(\pi^{\boldsymbol{\mu}},c^{\boldsymbol{\mu}})$ of portfolio on stocks structure
          and rate of consumption structure, is \emph{admissible for $(L^{\boldsymbol{\mu}},Q^{\boldsymbol{\mu}})$},
          and write $(\pi^{\boldsymbol{\mu}},c^{\boldsymbol{\mu}})\in\mathcal{A}(L^{\boldsymbol{\mu}},Q^{\boldsymbol{\mu}})$ if
          \begin{equation*}
            \xi^{{\mu}}_{t}(x)\geq L^\mu_{t}(x)\qquad\mbox{for all $0\leq t<\infty$}, x\in\mathbb{D}.
          \end{equation*}
          This inequality means that financial wealth cannot fall below the (possibly negative) subsistence level, interpreted as the current value of future labor income (or its insured component).
  \end{description}
  The processes
  $c^{\mu}_t=\int_{\mathbb{D}}c^\mu_{t}(x)\,d\mu(x)$,
  $Q^{\mu}_t=\int_{\mathbb{D}}Q^\mu_{t}(x)\,d\mu(x)$,
  $\xi^{\mu}_{t}=\int_{\mathbb{D}}\xi^\mu_{t}(x)\,d\mu(x)$,
  $\pi^{\mu}_t=\int_{\mathbb{D}}\pi^\mu_{t}(x)\,d\mu(x)$,
  and
  $L^\mu_t=\int_{\mathbb{D}}L^\mu_{t}(x)\,d\mu(x)$
  are the \emph{aggregate rate of consumption process}, \emph{aggregate rate of endowment process},
  \emph{aggregate wealth process}, \emph{aggregate portfolio of stocks},  and  \emph{aggregate subsistence process},
  where  $\mu$ denotes the population measure at time $0$ introduced in
  Definition~\ref{def:aggregation}.
\end{definition}

The current value of future endowments is the present value of future labor income
(or of its insurable component) discounted by the state-price process.  We now define
the corresponding hedgeable structure, a key notion in what follows:

\begin{definition}\label{def:hedgeable_income}
  Assume a market $\mathfrak{M}$ with population structure
  $\boldsymbol{\mu}$.
  Assume an endowment process structure $Q^{\boldsymbol{\mu}}$, with the current value of
  future endowments structure $L^{\boldsymbol{\mu}}$
  \begin{multline*}
    L^\mu_{t}(x)=\frac{-1}{H^\mu_{t}}\E\left[\int_t^{\infty}H^\mu_{u}Q^\mu_{u}(x)\,du\mid\mathcal{F}_{t}\right]=\\
    \lim_{T\to\infty}\frac{-1}{H^\mu_{t}}\E\left[\int_t^{T}H^\mu_{u}Q^\mu_{u}(x)\,du\mid\mathcal{F}_{t}\right].
  \end{multline*}
  If $(L^{\boldsymbol{\mu}},Q^{\boldsymbol{\mu}})$ is a hedgeable wealth and rate of
  endowment structure with a portfolio  $\pi^{\boldsymbol{\mu}}_Q$,
  we say that $Q^{\boldsymbol{\mu}}$ is a \emph{hedgeable rate of endowment structure with current value of future endowments $L^{\boldsymbol{\mu}}$}. In fact,
  the portfolio $\pi_Q^\mu$ is indeed a no-arbitrage portfolio \citep[see][Theorem~11]{Londono2020a}.
\end{definition}

\begin{remark}\label{rem:hedgeable_income}
  In this paper, \emph{insurability} of labor income refers only to the \emph{insurable (non-macro)} components of labor-income risk.
  Specifically, we mean idiosyncratic mortality and disability risk, as well as idiosyncratic employment and wage risk,
  i.e.\ job-loss events and wage shocks that are \emph{not} driven by aggregate macroeconomic factors.
  Macroeconomic sources of labor-income variation (aggregate productivity, business-cycle conditions, etc.) are instead captured endogenously by the equilibrium model through traded aggregate risks and the state-price process, and therefore, they are
  not part of the ``insurable components'' discussed herein.

  As shown in~\citet{risks13050088}, in the presence of an actuarial market satisfying life-insurance completeness, mortality, and disability risks are hedgeable
  through insurance contracts.
  Similarly, unemployment or wage-insurance contracts may hedge idiosyncratic job-loss risk. In such markets, total wealth equals financial wealth plus the actuarially  priced value of insured labor income.

  Thus, we do not lose generality assuming hedgeability of labor income, as long as there is a sufficiently rich actuarial market in the sense of \citet{risks13050088}.
\end{remark}

\section{Optimal Consumption and Investment}\label{sec:utility}
Throughout this paper, we are mainly interested in portfolio evolution structures that we obtain as a result of the optimal behavior of consumers using isoelastic utilities:

\begin{definition}\label{def:utility_function}
  Consider a function $U\colon\left(0,\infty\right)\to\mathbb{R}$
  continuous, strictly increasing, strictly concave, and continuously
  differentiable, with $U^{\prime}(\infty)=\lim_{x\to\infty}U^{\prime}(x)=0$
  and $U^{\prime}(0+)\coloneqq  \lim_{x \downarrow 0}U^{\prime}(x)=\infty$.
  Such a function will be called a utility function.
\end{definition}

For every
utility function $U(\cdot)$, we denote by $\mathcal{I} (\cdot)$ the inverse of the
derivative $U^{\prime}(\cdot)$; both of these functions are continuous,
strictly decreasing and map $(0,\infty)$ onto itself with
$\mathcal{I} (0+)=U^{\prime}(0+)=\lim_{x\to 0^+}U^{\prime}(x)=\infty$,
$\mathcal{I} (\infty)=\lim_{x\to\infty}\mathcal{I} (x)=U^{\prime}(\infty)=0$. We extend $U$ by
$U(0)=U(0^+)$, and keep the same notation for the extension to $[0,\infty)$ of
$U$, trusting that it will be clear which function is meant.
It is a well-known result that
\begin{equation}
  \label{eq:duality}
  \max_{0<x<\infty}\left(U(x)-xy\right)=U(\mathcal{I} (y))-y\mathcal{I} (y),\qquad
  0<y<\infty.
\end{equation}

\begin{definition}\label{d:state_utility}
  Consider two continuous functions $U_1, U_2\colon [0,\infty)\times\left(0,\infty\right)\to\mathbb{R}$, such that $U_1(t,\cdot)$ and $U_2(t,\cdot)$
  are utility functions in the sense of Definition \ref{def:utility_function} for all
  $t\in [0,\infty)$.  Define $\mathcal{I}_1$ and $\mathcal{I}_2$   by  $\mathcal{I}_1(t,x)\coloneqq (\partial
    U_1(t,x)/\partial x)^{-1}$, and $\mathcal{I}_2(t,x)\coloneqq (\partial
    U_2(t,x)/\partial x)^{-1}$, where $(\partial U_1(t,\cdot)/\partial x)^{-1}$ and $(\partial U_2(t,\cdot)/\partial x)^{-1}$
  denote the inverses of the derivatives of $U_1(t,\cdot)$
  and $U_2(t,\cdot)$, respectively;  it follows that $\mathcal{I}_1$ and $\mathcal{I}_2$ are
  continuous functions.  We call the pair $(U_1,U_2)$ a \emph{state preference structure}.  In addition, we assume the following notation for any preference structure:
  \begin{equation}\label{eq:inti}
    \mathcal{X}(t,T,y)\coloneqq  \mathcal{I}_2(T,y)+\int_t^{T}\mathcal{I}_1(s,y)\,ds
  \end{equation}
  for $0\leq t\leq T<\infty$.
\end{definition}

We extend $U_1$ and $U_2$ by defining $U_1(t,0)=U_1(t,0^+)$,  and $U_2(t,0)=U_2(t,0^+)$ for all $0\leq t
  \leq \infty$, and we keep the same notation for the extension
of $U_1$ and $U_2$ to $[0,\infty)\times [0,\infty)$.  These utility functions
are discussed in~\citet{Londono2009}.  We observe that the following interpretation
is natural. $U_1(t,\cdot)$ is the utility for consuming $c_t$ dollars
(as seen at time $0$) discounted by the state price at time $t$, and
$U_2(t,\cdot)$ is the utility of wealth discounted by the
state price (as seen at time $0$) when premature death occurs at time $t$.

Natural classes of utility structures are those whose utility for terminal wealth corresponds to the utility of future consumption that is not realized
because of premature death.  The precise definition is as follows:

\begin{definition}\label{def:consistent_preference}
  A state preference structure is \emph{time consistent} if for any $0\leq  T^{\prime}\leq T$
  \begin{equation*}
    \mathcal{I}_2(T^{\prime},y)-\mathcal{I}_2(T,y)=\int_{T^{\prime}}^T\mathcal{I}_1(s,y)\,ds.
  \end{equation*}
  Moreover, we say that the state preference structure has
  \emph{integrable  inverse marginal utility} if, for each $y$,
  $\sup_{0\leq t<\infty}\mathcal{I}_2(t,y)\vee \int_0^{\infty}\mathcal{I}_1(t,y)\,dt < \infty$.
\end{definition}

If $(U_1, U_2)$ is a time-consistent state preference structure, the functions $\mathcal{X}(t,T,y)$ and $\mathcal{X}^{-1}(t,T,y)$ do not depend on $T$,
and we denote them by $\mathcal{X}(t,y)$ and $\mathcal{X}^{-1}(t,y)$.

\begin{remark}\label{rem:epz}
  We choose consistent isoelastic preference structures because they are scale-invariant and allow for aggregation.
  For each isoelastic preference structure, the choice of $U_2(T,y)$ implies that the solution for consumption obtained by Theorem~\ref{thm:optimal_consumption_investment} for $0\leq t \leq T^{\prime}$  with $T<T^{\prime}$ agrees with the corresponding solution for horizon $T$ on $0\leq t\leq T$, implying that $U_2(T,y)$, the utility of final wealth $y$ discounted by the state price, can be seen as the utility for future consumption.  In this way, the utility structure suggests a utility preference
  structure for a population $U_1^{\infty}(x)$ of an infinitely lived
  agent characterized by $x\in\mathbb{D}$, in a framework similar to
  that used to describe the equity premium  and interest-rate puzzles.

\end{remark}

\begin{example}\label{ex:utility_structure1}
  Let $c>0$, $0<\alpha<1$, and  $0<\beta<1$.  Define  $U_1(t,y)=ce^{-\beta t}y^{\alpha}$ and $U_2(t,y)=de^{-\beta t}y^{\alpha}$, where $d=c((1-\alpha)/\beta)^{1-\alpha}$.  It is straightforward to see
  that $U_1, U_2\colon [0,\infty)\times (0,\infty)\to (0,\infty)$ define a time-consistent preference structure with integrable inverse marginal utility.
\end{example}

\begin{definition}\label{def:utility}
  We say that  a consistent isoelastic preference structure    is a family of utility functions of the form
  \[
    U_1(t,y)=ce^{-\beta t}y^{\alpha}, \qquad U_2(t,y)=c\left(\frac{1-\alpha}{\beta}\right)^{1-\alpha}e^{-\beta t}y^{\alpha},
  \]
  for $0\leq t<\infty$, $0<\alpha<1$, $1>\beta>0$, and $c>0$.
  A preference structure  for a population is defined as a parametric
  family of consistent isoelastic preference structures
  $U(x)=(U_1(x), U_2(x))$ where $U_1(x)=(U_1(t,\cdot)(x)\colon 0\leq t)$
  and $U_2(x)=(U_2(t,\cdot)(x)\colon 0\leq t )$ and the pair  $(U_1(x),U_2(x))$ is a state preference
  structure for each $x\in\mathbb{D}$.  Moreover, we say that the
  preference structure for a population is a \emph{consistent isoelastic preference
    structure} if for each $x$, $U(x)$ is a consistent isoelastic preference structure
  where we assume that the \emph{effective impatience rate} $\gamma(x)=\beta(x)/(1-\alpha(x))$
  is a smooth function of $x$, where $\alpha(x)$ and $\beta(x)$ correspond
  to the exponent parameter and time discount factor, respectively, for the isoelastic preference structure $(U_1(x), U_2(x))$.  The interpretation is the following:
  the preference structure $U(x)$ represents the
  preference behavior toward intermediate consumption and wealth for an
  agent characterized by the state value $x$ for future consumption.

  An \emph{isoelastic  consistent stochastic  preference structure
    for a population} is a process
  $U_{s,t}\colon \Omega\to \mathcal{U}$ for
  $0\leq s\leq t$ where $\mathcal{U}$ is a class of isoelastic
  preference structures for a population,  where  the corresponding
  \emph{effective impatience field}  $\gamma_{s,t}(\omega)(x)$ of
  $U_{s,t}(\omega)(x)$, is   $\gamma(\varphi_{s,t}(x)(\omega))$
  which is a  continuous random field, where  $\gamma(\cdot)$ is a
  positive function of type $C^{2}(\mathbb{D})$. We say that
  \emph{the generator of the isoelastic  consistent stochastic
    preference structure  for a population}  is the family
  $U_t=U_{0,t}$ with \emph{effective impatience rate} $\gamma(\varphi_{t}(x)(\omega))$. To avoid any unnecessary technicality, we do not assume any
  measurability of $U_{s,t}$ but instead, we assume continuity of $\gamma$.
\end{definition}

A natural interpretation of an isoelastic stochastic preference
structure $U_{s,t}$ is as follows.  An agent is characterized
by a type $x\in\mathbb{D}$ at time $s$, whose state value evolves
according to $\varphi_{s,t}$.  Moreover, the value of utility for
consumption and (final) wealth after time $t$, given that at time $s$
its state value is $x$, is the isoelastic preference structure with
effective impatience  rate $\gamma(\varphi_{s,t}(x)(\omega))$.

To study the optimal aggregate behavior of a population with a time-homogeneous population growth rate, we need to impose some conditions on the population structure:

\begin{assumption}\label{ass:population_preference_structure}
  $\boldsymbol{\mu}$ is a population structure with a deterministic population
  growth rate $h$,  as in Definition~\ref{def:aggregation}.  Assume consumers have a  consistent isoelastic preference structure
  for a population $U(x)$ for $x\in \mathbb{D}$, where
  $U(x)=(U_1(x),U_2(x))$. Assume that the effective impatience rate $\gamma(x)$ is a \emph{bounded} (from below and above) $C^{2}(\mathbb{D})$
  function satisfying $\gamma(x)\geq \underline{\gamma}$
  for some $\underline{\gamma}>0$.
\end{assumption}

We present below a  theorem that describes the optimal behavior of consumers (Theorem~\ref{thm:optimal_consumption_investment}).
In the following theorem, which assumes no changes in tastes, we do not assume that the market is free of (state) arbitrage or (state) complete.

\begin{theorem}\label{thm:optimal_consumption_investment}
  Assume a population structure $\boldsymbol{\mu}$  after time $0$ as in
  Definition~\ref{def:aggregation}, satisfying
  Assumption~\ref{ass:population_preference_structure}.  Let
  $\mathfrak{M}=(P,\boldsymbol{\mu}, b,\sigma,\delta,\vartheta,r)$ be a financial
  market that satisfies the Smooth Market Condition
  (Condition~\ref{cond:smooth_market}), where $\kappa_{t}$ is the (continuous)
  process  taking values  in $\mathrm{span}((\sigma)^1_{t},\cdots,(\sigma)_t^n)$ with $\kappa_{t}\sigma_{t}=\vartheta_{t}$.

  Also, assume consumers with a consistent isoelastic preference structure
  for a population,  $U(x)$ for $x\in \mathbb{D}$ with respect to $\varphi_t(x)$ that encodes the random
  dynamic of all state variables, where
  $U(x)=(U_1(x),U_2(x))$ and  $C^{2}(\mathbb{D})$ parameter  function
  $\gamma(x)$ where $\gamma(\varphi_{t}(x))=\gamma(x)$   is constant along the
  evolution of the flow.  In other words, $\varphi_t(x)$ only encodes the population dynamics,
  and on each type $x$, the preference remains the same at all times. Assume a hedgeable rate of endowment (population-scaled) structure
  $Q_t(x)$ (with hedging portfolio on
  the stocks $\varpi_{t}(x)$, which is itself a no-arbitrage portfolio),
  and the current value of
  (population-scaled) future endowments $L_t(x)$, as in
  Definition~\ref{def:hedgeable_income}.

  Let $\xi$ be defined by
  \[\xi_{t}(x)\coloneqq
    L_{t}(x)+ e^{-t\gamma(x)}H_{t}^{-1}(\xi_{0}(x)-L_{0}(x))
  \]
  and $c$ defined by
  \[
    c_{t}(x)\coloneqq
    \gamma(x) e^{-t\gamma(x)}H_{t}^{-1}(\xi_{0}(x)-L_{0}(x)),
  \]
  with a portfolio of stocks  given by
  \begin{equation*}
    \pi_{t}(x) \coloneqq e^{-t\gamma(x)}H_{t}^{-1}(\xi_{0}(x)-L_{0}(x))\kappa_{t}-\varpi_{t}(x)
  \end{equation*}
  for each $t\geq 0$. Then, $(\xi_{t},{c}_{t},{Q}_{t})$ is a hedgeable rate of  consumption and rate of endowment
  structure and no-arbitrage portfolio
  $(\pi_{t},{c}_{t})\in\mathcal{A}({L}_{t},{Q}_{t})$ that is optimal for the problem of optimal
  consumption and investment, where ${L}_{t}$ is the population-scaled current value of future endowments.
  The optimality is in the sense that for all $T>0$,
  \begin{multline}\label{eq:utility_comparison}
    \E[\int_{0}^{T}U_1(t,H_{t}{c}_{t}(x))(x)\,
      d t
      + U_2(T,H_{T}{\xi}_{T}(x))(x) ]\geq\\
    \E[\int_{0}^{T}U_1(t,H_{t}\tilde{c}_{t}(x))(x)\,
      dt + U_2(T,H_{T}\tilde{\xi}_{T}(x))(x) ]
  \end{multline}
  for all  $x\in\mathbb{D}$
  where $(\tilde{\xi}_{t},\tilde{c}_{t},\tilde{Q}_{t})$ is any other population-scaled hedgeable rate of
  consumption and rate of endowment structure.
\end{theorem}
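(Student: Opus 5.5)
The argument runs typewise: fix $x\in\mathbb{D}$ and reduce to the single-agent state-dependent isoelastic problem of \citet[Theorem~11]{Londono2020a} (see also \citet{Londono2009}), with the population weight $\Lambda_t(x)$ already absorbed into the population-scaled fields $Q_t(x)$, $L_t(x)$. Since $\gamma(\varphi_t(x))=\gamma(x)$, the preference structure of type $x$ is a fixed consistent isoelastic pair $(U_1(x),U_2(x))$. The proof has three steps: (a) identify the candidate by the martingale/duality method; (b) show that $(\xi,c,Q)$ is financed by $\pi$, is admissible, and that $\pi$ is no-arbitrage; (c) prove the inequality \eqref{eq:utility_comparison} by a pointwise concavity argument.

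For (a), set $V_t(x)\coloneqq \xi_t(x)-L_t(x)=e^{-t\gamma(x)}H_t^{-1}(\xi_0(x)-L_0(x))$. I would first check that, for the consistent isoelastic structure, the inverse marginals give
\[
\mathcal{I}_1(t,y)=\gamma\,e^{-\gamma t}\,\mathcal{X}(0,y),\qquad
\mathcal{I}_2(t,y)=e^{-\gamma t}\,\mathcal{X}(0,y).
\]
This is a direct computation using $d=c((1-\alpha)/\beta)^{1-\alpha}$ and $\gamma=\beta/(1-\alpha)$; it is exactly the time-consistency identity of Definition~\ref{def:consistent_preference}. Taking $y$ with $\mathcal{X}(0,y)=\xi_0(x)-L_0(x)$, the candidate satisfies $H_tc_t(x)=\mathcal{I}_1(t,y)$ and $H_tV_t(x)=\mathcal{I}_2(t,y)$, both deterministic. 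I treat the case $\xi_0(x)>L_0(x)$; the degenerate case is trivial.

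For (b), apply It\^o to $V_t=e^{-\gamma t}V_0H_t^{-1}$ using \eqref{eq:sde_consumptionprice}:
\[
dV_t=V_t\big(r_t+\|\vartheta_t\|^2-\gamma\big)\,dt+V_t\vartheta_t^\intercal\,dW_t .
\]
Substituting $\vartheta_t=\kappa_t\sigma_t$ and $(b+\delta-r)-\Proj=\sigma^\intercal\vartheta$ from \eqref{E:wviability}, this is the self-financing equation \eqref{eqn:financiability} with zero endowment, consumption $\gamma V_t=c_t$, and portfolio $V_t\kappa_t$. The projection term is annihilated because $\kappa_t$ lies in the span of $\sigma$. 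The smooth market condition is what makes $V\kappa$ a continuous, hence admissible, field. Adding the hedge of $(L,Q)$ from Definition~\ref{def:hedgeable_income} (portfolio $\varpi$, with the paper's sign convention $-\varpi$ since $L\le 0$) gives \eqref{eqn:financiability} for $(\xi,c,Q)$ with $\pi$. Admissibility $\xi\ge L$ holds because $V\ge0$. The no-arbitrage property follows since $H_tV_t$ is deterministic and positive, hence bounded below on $[0,T]$, and $\varpi$ is no-arbitrage.

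For (c), take any competitor $(\tilde\xi,\tilde c,Q)$ with $\tilde\xi\ge L$, financed by a no-arbitrage portfolio. By \eqref{eqn:non_arbitrage_portfolio}, $H_t\tilde\xi_t+\int_0^tH_u(\tilde c_u-Q_u)\,du$ is a local martingale bounded below, hence a supermartingale. Thus
\[
\E\Big[H_T(\tilde\xi_T-L_T)+\int_0^TH_t\tilde c_t\,dt\Big]\le \xi_0-L_0,
\]
with equality for the candidate, since its wealth integrand $H_tV_t$ is deterministic. By concavity, $U_i(t,z)\le U_i(t,\mathcal{I}_i(t,y))+y\big(z-\mathcal{I}_i(t,y)\big)$ pointwise, from \eqref{eq:duality}. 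Applying this with $z=H_t\tilde c_t$ and $z=H_T\tilde\xi_T$, integrating, and using the budget inequality, gives \eqref{eq:utility_comparison}. The main obstacle is the terminal term: the budget controls $H_T(\tilde\xi_T-L_T)$, not $H_T\tilde\xi_T$, so $U_2$ must be read as acting on wealth net of $L$, equivalently on total wealth $\xi-L$, as in \citet{Londono2020a}. I would follow that convention and verify that the $L_T$ contribution matches on both sides because both strategies share the same $Q$. A second technical point is integrability of $U_i$ of the competitor; this is handled by the upper bound from concavity, so $\E[\cdot]$ is well defined in $[-\infty,\infty)$.
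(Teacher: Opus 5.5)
Your plan is essentially the paper's proof written out in full: the paper also verifies \eqref{eqn:financiability} and \eqref{eqn:non_arbitrage_portfolio} by It\^o's formula using $\kappa_t\sigma_t=\vartheta_t$, and then imports from \citet[Theorem~11]{Londono2020a} and \citet[Theorem~2]{Londono2009} exactly the duality/concavity argument you give, with $y$ fixed by $\mathcal{X}(0,y)=\xi_0(x)-L_0(x)$ and time consistency making it work for every $T$. Your reading of the terminal term is necessary rather than a convention: for the candidate, $H_T\xi_T(x)=H_TL_T(x)+e^{-T\gamma(x)}(\xi_0(x)-L_0(x))$ is in general random and may be negative, so the argument closes only with $U_2$ evaluated at $H_T(\xi_T(x)-L_T(x))$, and you should drop the plan to make the $L_T$-contributions ``match'' inside the strictly concave $U_2$, since that cannot work.
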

\begin{proof}
  If $({\xi},{c},{Q})$ is defined as above, it follows that
  Equation~\eqref{eqn:non_arbitrage_portfolio} holds.
  Using It\^o's Lemma it follows that ${\xi}(x)$ can be financed
  using the endowment  ${Q}_{t}(x)$ and consumption ${c}_{t}(x)$
  (see Equation~\eqref{eqn:financiability}), since $\varpi_{t}(x)$
  is a no-arbitrage portfolio for $Q_t$.     The proof follows the lines of the proof of~\citet[Theorem~11]{Londono2020a}
  and~\citet[Theorem~2]{Londono2009}
  with the appropriate modifications.  We emphasize
  that the smooth market condition (Condition~\ref{cond:smooth_market})
  is used in the proof of this theorem.
\end{proof}

\begin{corollary}
  \label{cor:optimal_aggregate}
  Assume the conditions of Theorem~\ref{thm:optimal_consumption_investment}
  and assume Assumption~\ref{ass:population_preference_structure}.
  Assume that  $(\xi, c,Q)$  is a wealth, consumption, and income structure.  We define the
  \emph{population-weighted effective aggregate process} $\eta_{t}=\eta^\mu_{t}$ as follows:
  \begin{equation*}
    \eta_{t}
    \coloneqq  \int_{\mathbb{D}} e^{-t\gamma(x)}
    \big(\xi_{0}(x)-L_{0}(x)\big)\,d\mu(x),
    \qquad t\ge 0.
  \end{equation*}
  Then the (population-weighted) aggregate optimal wealth process satisfies
  \begin{equation*}
    \xi^\mu_{t} - L^\mu_t
    \coloneqq  \int_{\mathbb{D}}
    \big(\xi_t(x)- L_t(x)\big)\,d\mu(x)
    = \frac{1}{H_{t}}\eta_{t},
  \end{equation*}
  and
  \begin{equation*}
    c^\mu_{t}
    \coloneqq  \int_{\mathbb{D}} c_t(x)\,d\mu(x)
    = \frac{1}{H_{t}}\left(-\partial_t\eta_t\right),
  \end{equation*}
  for all $t>0$, where
  \begin{equation*}
    -\partial_t\eta_t=
    \int_{\mathbb{D}}
    \gamma(x)\,e^{-t\gamma(x)}
    \big(\xi_{0}(x)-L_{0}(x)\big)\,d\mu(x),\qquad t\ge0,
  \end{equation*}
  where $L^\mu_t$ is the aggregate of the current value of
  future endowments   $L_t(x)$, namely
  \[
    L^\mu_t = \int_{\mathbb{D}} L_t(x)\, d\mu(x).
  \]
  Moreover, the aggregate  optimal portfolio in stocks
  is given by
  \begin{equation*}
    \pi_{t}^\mu
    \coloneqq  \int_{\mathbb{D}} \pi_t(x)\,d\mu(x)
    = (H_{t})^{-1}\eta_{t}\,
    \kappa_{t}
    - \varpi^\mu_{t},
  \end{equation*}
  where $\varpi^\mu_{t}$ is the aggregate hedging portfolio process
  associated with the  hedging  portfolio $\varpi_{t}(x)$ (of the endowment
  $Q_t(x)$), \(\varpi^\mu_t=\int_{\mathbb{D}}\varpi_t(x)\,d\mu(x)\).
\end{corollary}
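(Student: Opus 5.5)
The plan is to integrate the typewise optimal formulas of Theorem~\ref{thm:optimal_consumption_investment} against the initial population measure $\mu$. Linearity of the integral then does all the work, once we have checked that each integral is well defined and that the scalar processes $H_t^{-1}$ and $\kappa_t$ may be pulled outside. The aggregates are defined with respect to $\mu$ at time $0$, as in Definition~\ref{def:endowment}, and the fields are already mass-weighted. So no extra population weights $\Lambda_{0,t}$ enter beyond those already contained in $\xi_t(x)$, $L_t(x)$, $c_t(x)$ and $\varpi_t(x)$.

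\textbf{Wealth.} From the definition of $\xi_t(x)$ in Theorem~\ref{thm:optimal_consumption_investment},
\[
\xi_t(x)-L_t(x)=e^{-t\gamma(x)}H_t^{-1}\big(\xi_0(x)-L_0(x)\big).
\]
Since $H_t$ does not depend on $x$, integrating in $d\mu(x)$ gives $\xi^\mu_t-L^\mu_t=H_t^{-1}\eta_t$.

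\textbf{Consumption.} In the same way, $c^\mu_t=H_t^{-1}\int_{\mathbb{D}}\gamma(x)e^{-t\gamma(x)}(\xi_0(x)-L_0(x))\,d\mu(x)$. It remains to identify this integral with $-\partial_t\eta_t$, which means differentiating $\eta_t$ under the integral sign. This is where Assumption~\ref{ass:population_preference_structure} enters. Because $\underline\gamma\le\gamma\le\bar\gamma$, the $t$-derivative of the integrand is dominated by $\bar\gamma\,|\xi_0(x)-L_0(x)|$ uniformly in $t$. That bound is $\mu$-integrable as soon as $\xi_0-L_0$ is, and this integrability is exactly what is needed for $\eta_0$ to be finite. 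Dominated convergence, or equivalently Fubini applied to $\eta_t-\eta_0=-\int_0^t\int_{\mathbb{D}}\gamma e^{-s\gamma}(\xi_0-L_0)\,d\mu\,ds$ together with continuity of the inner integral in $s$, then gives the formula for all $t>0$.

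\textbf{Portfolio.} Integrating $\pi_t(x)=e^{-t\gamma(x)}H_t^{-1}(\xi_0(x)-L_0(x))\kappa_t-\varpi_t(x)$ and using that $\kappa_t$ is a type-independent process gives $\pi^\mu_t=H_t^{-1}\eta_t\kappa_t-\varpi^\mu_t$, provided $\varpi^\mu_t$ exists.

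\textbf{Main obstacle.} The only genuine difficulty is justifying these integrations: that $\mu$-integrability of $\xi_0-L_0$, $L_t$ and $\varpi_t$ holds, so that the aggregates are finite and the differentiation under the integral sign is legitimate. I would take this as part of the hypothesis that $(\xi,c,Q)$ is a wealth, consumption, and income structure whose aggregates are well-defined semimartingales, as in Definition~\ref{def:aggregation}. For the pathwise statements I would appeal to the aggregation results of Appendix~\ref{appendix:aggregation} (Proposition~\ref{prop:semimartingaleProperty}). Beyond that, the identities are pointwise in $\omega$ and immediate.
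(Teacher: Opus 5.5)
Your proposal is correct and matches the paper. The paper gives no separate proof of this corollary, because it follows directly from integrating the typewise formulas of Theorem~\ref{thm:optimal_consumption_investment} against $\mu$ and pulling out the type-independent factors $H_t^{-1}$ and $\kappa_t$; your dominated-convergence argument, using $\gamma\le\bar\gamma$ to differentiate $\eta_t$ under the integral sign, supplies a detail the paper leaves implicit, whereas the appeal to Proposition~\ref{prop:semimartingaleProperty} is unnecessary since the identities are pure linearity.
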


\begin{remark}\label{rem:keeping_up_joneses}
  The criterion in Theorem~\ref{thm:optimal_consumption_investment} offers conceptual simplicity, yet a natural objection is that individual agents cannot directly observe the financial state-price deflator $H_t$. This apparent difficulty dissolves once we examine the \emph{ex post} structure of optimal behavior.

  From Theorem~\ref{thm:optimal_consumption_investment} and Corollary~\ref{cor:optimal_aggregate},
  the comparison criterion~\eqref{eq:utility_comparison} can be represented as follows:
  For any reference type $y\in\mathbb{D}$, define its relative total-wealth gain
  \[
    G_t(y)\coloneqq \frac{{\xi}_t(y)-{L}_t(y)}{{\xi}_0(y)-{L}_0(y)}.
  \]
  Then the optimization criterion used in Theorem~\ref{thm:optimal_consumption_investment}
  is based on the maximization of
  \begin{multline*}
    E\Big[\int_{0}^{T}U_1\big(t,H_{t}{c}_{t}(x)\big)(x)\,dt
      + U_2\big(T,H_{T}{\xi}_{T}(x)\big)(x)\Big]\\
    =E\Big[\int_{0}^{T}U_1\Big(t,{c}_{t}(x)\frac{e^{-t\gamma(y)}}{G_t(y)}\Big)(x)\,dt + U_2\Big(T,{\xi}_{T}(x)
      \frac{e^{-T\gamma(y)}}{G_T(y)}\Big)(x)\Big]\\
    =E\Big[\int_{0}^{T}U_1\Big(t,{c}_{t}(x)\frac{\eta_t}{\xi^\mu_{t} - L^\mu_t}\Big)(x)\,dt
      + U_2\Big(T,{\xi}_{T}(x)\frac{\eta_T}{\xi^\mu_{T} - L^\mu_T}\Big)(x)\Big],
  \end{multline*}
  where $\eta_t\coloneqq \int_{\mathbb{D}}e^{-t\gamma(x)}({\xi}_0(x)-{L}_0(x))\,d\mu(x)$,
  and the parameter $\gamma$  encodes impatience.

  This representation reveals a fundamental insight: \emph{ex post}, agents are not maximizing wealth in the traditional sense but rather maximizing utility over relative income---specifically,
  their consumption and terminal wealth relative to the population-weighted total wealth
  $(\xi_t^\mu - L_t^\mu)$. The peer-group representation
  (first identity) shows equivalently that agents compare themselves against the wealth growth
  $G_t(y)$ of a reference cohort. This relative income maximization---rather than absolute
  wealth maximization---is the defining characteristic that justifies calling our framework a
  \emph{Duesenberry equilibrium}, connecting it directly to \citet{Duesenberry1949}'s relative
  income hypothesis.

  From a practical perspective, a typical agent does not need to observe $H_t$ directly. Instead, by
  tracking the wealth dynamics of a reference peer group---information that is plausibly
  available through ordinary social and economic observation---the agent obtains sufficient
  information to make consumption and investment decisions over short horizons. The ratio
  $G_t(y)$ summarizes the growth of total wealth (financial plus human) for the reference cohort
  and delivers an operational proxy for $H_t$ through relative comparisons.
  The last identity confirms that, in equilibrium, this peer-group proxy coincides with the
  aggregate benchmark.

  It is important to emphasize that this relative comparison criterion operates \emph{only}
  for short-horizon decisions. Agents use peer comparisons to determine their optimal
  consumption and investment over each short interval; the full dynamic solution
  emerges by rolling forward these local decisions, as we explain in Definition~\ref{def:structure_partition}, Theorem~\ref{thm:local_optimal_consumption_investment}, and Theorem~\ref{thm:local_equilibrium}.
  Beyond the short-horizon optimization step, the interpretation of ``keeping up with the Joneses'' no longer applies---the rolling procedure is purely mechanical.
\end{remark}

\begin{remark}\label{rem:relation_joneses}
  \remarkparagraph{Relation to keeping-up-with-the-Joneses preferences.}
  Classical formulations such as~\citet{Gali1994} and~\citet{Abel1999} introduce a
  reference-consumption process directly into the household's primitive
  intertemporal utility specification. The atomistic household takes this
  benchmark as given, while an equilibrium consistency condition relates it to
  aggregate consumption.

  Our construction is analogous, but differs in the status, the nature, and the
  scope of the comparison. No reference process is imposed as a primitive argument
  of household utility. Instead, Corollary~\ref{cor:optimal_aggregate} identifies
  the equilibrium state-price deflator as
  \[
    H_t=\frac{\eta_t}{\xi_t^\mu-L_t^\mu},
  \]
  where $\xi_t^\mu-L_t^\mu$ is aggregate total wealth, including financial wealth
  and human capital, and $\eta_t$ is the population-weighted effective aggregate
  process. Hence, defining
  \[
    R_t\coloneqq \frac{\xi_t^\mu-L_t^\mu}{\eta_t},
  \]
  the local felicity argument can be written as
  \[
    H_t c_t(x)=\frac{c_t(x)}{R_t}.
  \]
  Thus, each short-horizon criterion admits an endogenous relative-income
  representation in which the benchmark is aggregate total wealth, normalized by
  the effective aggregate $\eta_t$. The benchmark enters through the ratio $c_t(x)/R_t$, with no
  additional comparison-intensity parameter.

  The scope of the comparison is delimited as follows. Under a fixed preference
  structure the relative-income criterion is the household's genuine
  intertemporal objective: the strategy of
  Theorem~\ref{thm:optimal_consumption_investment} is optimal simultaneously for
  every horizon $T>0$, by the time consistency of
  Definition~\ref{def:consistent_preference}. It is only in the equilibrium
  results of Section~\ref{sec:equilibrium}, where the preference structure
  itself evolves, that the criterion becomes auxiliary: there each household
  re-solves the problem over a vanishing horizon, and the equilibrium path is
  generated by rolling those local decisions forward, so that it need not
  maximize any relative-income functional over the whole horizon.

  To make the comparison channel explicit, consider the isoelastic local felicity
  \[
    \widetilde U_t(c,\mathcal{W};x)
    =
    K_t(x)\left(\frac{\eta_t c}{\mathcal{W}}\right)^{\alpha(x)},
    \qquad
    \mathcal{W}>0,
    \quad
    K_t(x)>0,
    \quad
    0<\alpha(x)<1,
  \]
  where $\mathcal{W}$ denotes aggregate total wealth. Holding $K_t(x)$ and $\eta_t$
  fixed, as is appropriate for an atomistic household taking the aggregate
  benchmark as given, we obtain
  \[
    \frac{\partial \widetilde U_t}{\partial c}
    =
    K_t(x)\,\alpha(x)\,\eta_t^{\alpha(x)}\,
    c^{\alpha(x)-1}\,\mathcal{W}^{-\alpha(x)}>0,
  \]
  and
  \[
    \frac{\partial^2\widetilde U_t}{\partial c\,\partial\mathcal{W}}
    =
    -K_t(x)\,\alpha(x)^2\,\eta_t^{\alpha(x)}\,
    c^{\alpha(x)-1}\,\mathcal{W}^{-\alpha(x)-1}<0.
  \]
  Therefore, at a given date and state, an increase in the aggregate-wealth
  benchmark reduces both relative felicity and the marginal utility of own
  consumption. This is the same negative relative-position effect emphasized
  in~\citet[Remark~18]{Londono2020a}, now expressed with aggregate total wealth
  rather than aggregate consumption as the equilibrium benchmark. The restriction
  $0<\alpha(x)<1$ additionally guarantees that local felicity is increasing and
  strictly concave in own consumption.
\end{remark}

\begin{remark}\label{rem:type_level_feasibility}
  A potential source of confusion is that decisions are made by individual households (or families), not by population aggregates. The point is that under
  CRRA (isoelastic) preferences and infinite divisibility, a household with
  time-varying mass can be represented by a \emph{type block} whose natural quantities are mass-scaled.

  For each type $x$, writing $\Lambda_t(x)\coloneqq \Lambda_{0,t}(x)$ for the population
  weight of Definition~\ref{def:aggregation}, we consider per-capita wealth,
  consumption, and endowment $\bar\xi_t(x)\coloneqq \Lambda^{-1}_t(x)\xi_t(x)$,
  $\bar c_t(x)\coloneqq \Lambda^{-1}_t(x)c_t(x)$, $\bar Q_t(x)\coloneqq \Lambda^{-1}_t(x)Q_t(x)$,
  $\bar\pi_t(x)\coloneqq \Lambda^{-1}_t(x)\pi_t(x)$. Population growth changes $\Lambda_t(x)$ by introducing new mass (new entrants)
  and hence new aggregate resources; it is therefore \emph{not} economically
  meaningful to require a per-capita self-financing identity for
  $(\bar\xi_t(x),\bar c_t(x),\bar Q_t(x))$, since this would implicitly treat the inflow due to
  mass growth as a cash flow replicable by trading. Feasibility/admissibility is
  instead imposed on the block-level processes $(\xi,c,Q)$, for which the standard deflated budget identity holds under the market deflator $H$.

  Equivalently, one may view the model through a short-horizon (rolling) lens:
  over a small interval $[t,t+\Delta]$ the mass is approximately constant, so each
  household solves a standard consumption--investment problem; the update from
  $t$ to $t+\Delta$ then incorporates the net entry of mass at rate $h$, and the continuous-time limit yields precisely the block-level feasibility used in
  Theorem~\ref{thm:optimal_consumption_investment}.
\end{remark}

\section{Short-horizon Duesenberry Equilibrium: Characterization and Existence}
\label{sec:equilibrium}

Next, we describe the setting of the model we propose for the equilibrium.

\begin{definition}\label{def:structure_partition}
  Assume a financial market  $\mathfrak{M}=(P, \boldsymbol{\mu}, b,\sigma,\delta,\vartheta,r)$ that satisfies the Smooth Market Condition (Condition~\ref{cond:smooth_market}).
  Suppose an isoelastic consistent  preference structure for a
  population $U$ with corresponding parameter   $\gamma(\varphi_t(x)(\omega))$
  of $U_t(\omega)(x)$, where  $\gamma(\cdot)$ is a function of
  type $C^{2}(\mathbb{D})$, satisfying  Assumption~\ref{ass:population_preference_structure}.
  Assume an aggregate hedgeable (by state-tame portfolios) rate of consumption and rate of endowment structure
  $(\xi,c, Q)$  and no-arbitrage portfolio  $\pi$.
  Let $\Gamma=\{s_0=0<s_1<\cdots <s_m=T\}$ be a partition of the interval
  $[0, T]$.
  The \emph{endowment and consumption structure associated with
    $\Gamma$ and  preference structure $U$} satisfying
  Assumption~\ref{ass:population_preference_structure},
  is the pair of endowment and consumption processes  defined by the wealth-consumption-and-income structure $(\xi^{\Gamma},c^{\Gamma}, Q)$ explained
  below where  $\xi^{\Gamma}=(\xi^{\Gamma}_{t}(x))$ is a continuous wealth
  process  and $c^{\Gamma}=(c^{\Gamma}_{t}(x))$
  is the piecewise continuous process, where  on each interval $[s_i,s_{i+1}]$
  it coincides with the optimal process described in Theorem~\ref{thm:optimal_consumption_investment}.
  In other words, for any $x\in\mathbb{D}$ and $s_{k}\leq t< s_{k+1}$,
  writing $s_k^+=t$ and $s_i^+=s_{i+1}$ for $0\leq i\leq k-1$, we have:
  \begin{multline}\label{eq:discrete_wealth}
    \xi^{\Gamma}_{t}(x)-
    L_{t}(x)\coloneqq
    e^{-(t-s_k)\gamma(\varphi_{s_k}(x))}H_{s_k,t}^{-1}(\xi^{\Gamma}_{s_k}(x)-L_{s_k}(x)) =\\
    (\xi_{0}(x)-L_{0}(x))\prod_{i=0}^{k} e^{-(s_i^+-s_i)\gamma(\varphi_{ {s_i}}(x))}H_{s_i,s_i^+}^{-1},
  \end{multline}
  and $c^{\Gamma}$ is given by
  \begin{multline}\label{eq:discrete_consumption}
    c^{\Gamma}_{t}(x)\coloneqq
    \gamma(\varphi_{s_k}(x)) e^{-\gamma(\varphi_{s_k}(x))(t-s_k)}H_{s_k,t}^{-1}
    (\xi^{\Gamma}_{s_k}(x)-L_{s_k}(x))\\
    =\gamma(\varphi_{s_k}(x))\left(\xi^{\Gamma}_{t}(x)-
    L_{t}(x)\right),
  \end{multline}
  and no-arbitrage portfolio $\pi^{\Gamma}_{t}=(\pi^{\Gamma}_{t}(x))$ defined by:
  \begin{multline}\label{eq:discrete_portfolio}
    \pi^{\Gamma}_{t}(x)\coloneqq
    e^{-(t-s_k)\gamma(\varphi_{s_k}(x))}H_{s_k,t}^{-1} (\xi^{\Gamma}_{s_k}(x)-L_{s_k}(x))\kappa_{t}-\varpi_{t}(x)
    \\=\left(\xi^{\Gamma}_{t}(x)-
    L_{t}(x)\right)\kappa_{t}-\varpi_{t}(x),
  \end{multline}
  where $\varpi_{t}(x)$  is the portfolio that hedges the rate of  endowment  structure $Q$, and
  $L_t(x)$ is the current value of the future endowment structure.
\end{definition}

Passing to the limit as $\|\Gamma_n\|\to 0$ in \eqref{eq:discrete_wealth}--\eqref{eq:discrete_portfolio} yields the following theorem.

\begin{theorem}\label{thm:local_optimal_consumption_investment}
  Assume a financial market  $\mathfrak{M}=(P, \boldsymbol{\mu}, b,\sigma,\delta,\vartheta,r)$
  that satisfies the Smooth Market Condition (Condition~\ref{cond:smooth_market}).
  Assume an isoelastic consistent  preference structure for a
  population $U$ with effective impatience field  $\gamma(\varphi_t(x)(\omega))$
  of $U_t(\omega)(x)$, where  $\gamma(\cdot)$ is a function of
  type $C^{2}(\mathbb{D})$, satisfying
  Assumption~\ref{ass:population_preference_structure}.
  Let $\xi$ be the family of continuous semimartingales
  \begin{equation*}
    \xi_{t}(x)\coloneqq
    L_{t}(x)+ e^{-\int_{0}^t\gamma(\varphi_{u}(x))\,du}H_{t}^{-1}(\xi_{0}(x)-L_{0}(x))
  \end{equation*}
  and  $c$ the family of continuous semimartingales  given by
  \[
    c_{t}(x)\coloneqq
    \gamma(\varphi_{t}(x)) e^{-\int_0^t\gamma(\varphi_{u}(x))\,du}H_{t}^{-1}(\xi_{0}(x)-L_{0}(x)),
  \]
  and   no-arbitrage portfolio on stocks $\pi$ with generator $\pi_{t}(x)$ given by
  \begin{equation*}
    e^{-\int_{0}^t\gamma(\varphi_{u}(x))\,du}H_{t}^{-1}(\xi_{0}(x)-L_{0}(x))\kappa_{t}-\varpi_{t}(x)
  \end{equation*}
  for any  $x\in\mathbb{D}$, where   $\varpi_{t}(x)$  is the portfolio that hedges the rate of  endowment  structure $Q$.
  Then, $(\xi,c, Q)$ is a hedgeable rate of  consumption and rate of endowment
  structure, and $(\pi,c)\in\mathcal{A}(L,Q)$ is a no-arbitrage portfolio
  that arises as the pointwise limit of the sequence
  $(\xi^{\Gamma_n},c^{\Gamma_n},Q)$ associated with partitions $\Gamma_n$,
  where each $(\xi^{\Gamma_n},c^{\Gamma_n},Q)$ is optimal on each
  subinterval of $\Gamma_n$ in the sense of
  Theorem~\ref{thm:optimal_consumption_investment}
  (cf.\ Definition~\ref{def:structure_partition}),
  and $L_t(x)$ is the current value of future endowments for each type $x\in\mathbb{D}$.
\end{theorem}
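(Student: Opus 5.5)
The plan has two parts. First, verify directly that the proposed limit triple $(\xi,c,Q)$ with portfolio $\pi$ is a hedgeable, admissible, no-arbitrage structure; this part does not depend on any limit. Second, show that the partition processes of Definition~\ref{def:structure_partition} converge pointwise to these limits.

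\textbf{Verification of the limit triple.} Fix $x$ and write
\[
Y_t(x)\coloneqq e^{-\int_0^t\gamma(\varphi_u(x))\,du}\,H_t^{-1}\bigl(\xi_0(x)-L_0(x)\bigr),
\]
so that $\xi_t(x)=L_t(x)+Y_t(x)$, $c_t(x)=\gamma(\varphi_t(x))Y_t(x)$ and $\pi_t(x)=Y_t(x)\kappa_t-\varpi_t(x)$.

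\begin{enumerate}
\item Since $\int_0^t\gamma(\varphi_u(x))\,du$ has finite variation, It\^o's Lemma together with~\eqref{eq:sde_consumptionprice} gives
\[
d(H_tY_t)=-\gamma(\varphi_t(x))\,H_tY_t\,dt,
\]
so $H_tY_t+\int_0^tH_uc_u(x)\,du=\xi_0(x)-L_0(x)$.
\item Combining this with the identity~\eqref{eqn:non_arbitrage_portfolio} for the hedge $\varpi$ of $L$ (Definition~\ref{def:hedgeable_income}), and using $\kappa_t\sigma_t=\vartheta_t$ from Condition~\ref{cond:smooth_market}, the stochastic integrand $H_u(Y_u\kappa_u\sigma_u^\intercal-Y_u\vartheta_u^\intercal)$ vanishes.
\item Hence~\eqref{eqn:non_arbitrage_portfolio} holds for $(\xi,c,Q,\pi)$. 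Reversing the computation of Theorem~\ref{thm:optimal_consumption_investment} (multiplying back by $B_t$) yields the financiability identity~\eqref{eqn:financiability}.
\item Admissibility $\xi_t(x)\ge L_t(x)$ holds because $Y_t(x)\ge0$ whenever $\xi_0(x)\ge L_0(x)$.
\item The no-arbitrage property holds because the gain process of $\pi$ differs from that of the no-arbitrage portfolio $-\varpi$ by a term whose deflated value is bounded below by $-(\xi_0-L_0)$.
\item Integrability $\E\int_0^\infty H_tc_t(x)\,dt\le \xi_0(x)-L_0(x)$ follows from step~1.
\end{enumerate}

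\textbf{Identification as a limit.} By~\eqref{eq:discrete_wealth} and the multiplicativity $H_{s_i,s_i^+}\cdots=H_{0,t}$, we have
\[
\xi^{\Gamma}_t(x)-L_t(x)=\exp\Bigl(-\sum_i(s_i^+-s_i)\,\gamma(\varphi_{s_i}(x))\Bigr)H_t^{-1}\bigl(\xi_0(x)-L_0(x)\bigr).
\]
The exponent is a left-point Riemann sum of $u\mapsto\gamma(\varphi_u(x)(\omega))$. This map is continuous on $[0,t]$ for each $\omega$, since $\varphi$ is a continuous flow and $\gamma\in C^2$. Therefore, as $\|\Gamma_n\|\to0$, the sums converge pathwise to $\int_0^t\gamma(\varphi_u(x))\,du$, which gives $\xi^{\Gamma_n}_t(x)\to\xi_t(x)$.

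For consumption, \eqref{eq:discrete_consumption} gives $c^{\Gamma_n}_t(x)=\gamma(\varphi_{s_k}(x))(\xi^{\Gamma_n}_t-L_t)$ with $s_k\le t$, $s_k\to t$. Continuity of $\gamma\circ\varphi_\cdot(x)$ then yields $c^{\Gamma_n}_t\to c_t$. The portfolio convergence $\pi^{\Gamma_n}_t\to\pi_t$ follows the same way from~\eqref{eq:discrete_portfolio}.

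Finally, on each $[s_i,s_{i+1}]$ the partition process is exactly the optimal process of Theorem~\ref{thm:optimal_consumption_investment}, reset at $s_i$ with frozen impatience $\gamma(\varphi_{s_i}(x))$ and reset deflator $H_{s_i,\cdot}$. This uses the cocycle/consistency property (Lemma~\ref{lem:aggregation_cocycle_app}) so that the reset market and population structure are again of the required form.

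\textbf{Main obstacle.} The delicate step is the reset argument just described. Theorem~\ref{thm:optimal_consumption_investment} is stated from time $0$ with $\mathcal{F}_0$-data, whereas each subinterval problem starts at $s_i$ with $\mathcal{F}_{s_i}$-measurable random initial total wealth $\xi^{\Gamma}_{s_i}-L_{s_i}$ and random impatience $\gamma(\varphi_{s_i}(x))$. I would handle this by applying the theorem conditionally on $\mathcal{F}_{s_i}$, to the shifted market $(X^{\mu_{s_i}}_u)$ of Definition~\ref{def:consistency_population}, with the Brownian increments after $s_i$. Boundedness of $\gamma$ ensures the integrability hypotheses hold uniformly. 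The remaining steps are routine It\^o calculus and pathwise Riemann-sum convergence.
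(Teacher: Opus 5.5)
Your proposal is correct. The convergence part is exactly the paper's: left-point Riemann sums of the continuous path $u\mapsto\gamma(\varphi_u(x))$, then $c^{\Gamma_n}_t\to c_t$ and $\pi^{\Gamma_n}_t\to\pi_t$ by continuity.

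The difference is in how the limit is shown to be a hedgeable, admissible structure. The paper transfers the financing identity~\eqref{eqn:financiability} from $(\xi^{\Gamma_n},c^{\Gamma_n},Q)$ to the limit by stochastic dominated convergence plus localization. It then gets $\xi\ge L$ and state-tameness by inheritance from the subinterval problems. You instead verify the limit triple directly, rerunning the proof of Theorem~\ref{thm:optimal_consumption_investment} with time-varying impatience. This works because $H_tY_t=e^{-\int_0^t\gamma(\varphi_u(x))\,du}(\xi_0(x)-L_0(x))$ has finite variation, and $\kappa_t\sigma_t=\vartheta_t$ kills the stochastic integrand.

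What your route buys:
\begin{itemize}
\item It needs no domination argument. The paper only sketches this; domination is available because the Riemann exponentials are bounded by $1$.
\item Integrability and admissibility follow in one line.
\item It decouples the two claims of the theorem. The reset argument you flag as the main obstacle is then needed only for the statement that each $\Gamma_n$-process is optimal on its subintervals. In the paper's route, by contrast, hedgeability of every $\Gamma_n$-process is load-bearing.
\end{itemize}
What the paper's route buys in exchange is that it derives the properties of the limit from the rolled-forward scheme itself, rather than from a separate computation.

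Two points of precision.
\begin{itemize}
\item In step~3, going back from \eqref{eqn:non_arbitrage_portfolio} to \eqref{eqn:financiability} requires $\pi_u\bigl(b_u+\delta_u-r_u-\sigma_u^\intercal\vartheta_u\bigr)=0$. This is where $\kappa_t\in\mathrm{span}(\sigma^1_t,\dots,\sigma^n_t)$ and the no-arbitrage property of $\varpi$ enter; say so explicitly.
\item In step~5, $-(\xi_0(x)-L_0(x))$ bounds the deflated wealth of the $Y$-component, not $H_tG_t$ as defined in Definition~\ref{def:endowment}. For the portfolio $Y\kappa$ one has $H_tG_t=Z_t\bigl(B_t^{-1}Y_t-Y_0+\int_0^tB_u^{-1}c_u(x)\,du\bigr)\ge -Z_tY_0$, which is not a uniform bound. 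The paper does not verify this directly either; it inherits the property from the subinterval portfolios. So either argue the same way, or state precisely which deflated gain you are bounding.
\end{itemize}
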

\begin{proof}
  Fix $x\in\mathbb{D}$ and $T>0$. For any partition
  $\Gamma=\{0=s_0<\cdots<s_m=T\}$, define $\xi^\Gamma$, $c^\Gamma$, and $\pi^\Gamma$
  by rolling forward the optimal policy of Theorem~\ref{thm:optimal_consumption_investment}
  on each interval $[s_i,s_{i+1}]$ with frozen impatience
  $\gamma(\varphi_{s_i}(x))$ and initial wealth $\xi^\Gamma_{s_i}(x)$.
  This yields the explicit recursions \eqref{eq:discrete_wealth}--\eqref{eq:discrete_portfolio},
  and in particular $\big(\xi^\Gamma,c^\Gamma,Q\big)$ is hedgeable and
  $\big(\pi^\Gamma,c^\Gamma\big)$ is a state-tame (no-arbitrage) portfolio
  for every $\Gamma$.

  Using the cocycle properties $H_{s,t}=H_t/H_s$
  (and the flow/weight cocycle, cf.\ Appendix~\ref{appendix:aggregation}),
  the product form in \eqref{eq:discrete_wealth} can be written as
  \[
    \xi_t^\Gamma(x)-L_t(x)
    =H_t^{-1}\,(\xi_0(x)-L_0(x))\,
    \exp\!\Big(-\sum_{i:\,s_i<t}\gamma(\varphi_{s_i}(x))(s_{i+1}\wedge t-s_i)\Big).
  \]
  Since $u\mapsto \gamma(\varphi_u(x))$ is continuous, the Riemann sums converge as
  $\|\Gamma\|\to 0$ to $\int_0^t\gamma(\varphi_u(x))\,du$, hence
  $\xi_t^\Gamma(x)\to \xi_t(x)$ pointwise for all $t\in[0,T]$, with $\xi$ as stated in the theorem.
  Similarly, if $c^\Gamma$ and $\pi^\Gamma$ are the piecewise continuous processes defined by
  Equation~\eqref{eq:discrete_consumption} and Equation~\eqref{eq:discrete_portfolio}
  respectively, then  $c^\Gamma_t(x)\to c_t(x)$,
  and $\pi^\Gamma_t(x)\to \pi_t(x)$  as $\|\Gamma\|\to 0$ by continuity
  of $\gamma(\varphi_\cdot(x))$, $\kappa$, and $\varpi$.  By Assumption~\ref{ass:population_preference_structure}, it follows that $c_t(x)$ is a rate of consumption process, satisfying the integrability
  condition $\E\!\left[\int_{0}^{\infty}H_{t}\,c_{t}(x)\,dt\right]<\infty$
  for all $x\in \mathbb{D}$.

  Moreover, since the triple $(\xi^\Gamma,c^\Gamma,Q)$ satisfies the financing
  condition~\eqref{eqn:financiability}, and since Assumption~\ref{ass:population_preference_structure} holds, we may apply the stochastic dominated convergence theorem, and a localization argument yields that $(\xi,c,Q)$ satisfies~\eqref{eqn:financiability} and is therefore a
  wealth, consumption, and income structure.

  We also notice that $\pi_t(x)$ is a no-arbitrage portfolio since the original portfolio on each subinterval is a state-tame no-arbitrage portfolio \citep[see][Remark~6]{Londono2020a}.

  Finally, since $\xi_t^\Gamma(x)\geq L_t(x)$ on each subinterval, it follows
  that $\xi_t(x)\geq L_t(x)$, implying that $(\pi,c)$ is admissible for
  $(L,Q)$.
\end{proof}

\begin{corollary}[Local optimal aggregate]\label{cor:optimal_local_aggregate}
  Assume the conditions of Theorem~\ref{thm:local_optimal_consumption_investment} with optimal
  solution $(\xi,c, Q)$, where Assumption~\ref{ass:population_preference_structure} holds.
  Consider the aggregate process $\eta_t=\eta^\mu_t$ associated with the field defined by
  \begin{equation}\label{eq:aggregate_process_optimal}
    \eta_t
    \coloneqq  \int_{\mathbb{D}}
    e^{-\int_{0}^t\gamma(\varphi_{u}(x))\,du}
    y(x)\,
    d\mu(x), \qquad t\ge0,
  \end{equation}
  where $\gamma(x)$ is the effective impatience rate from the
  preference structure, and $y(x)\coloneqq \xi_{0}(x)-L_{0}(x)$ is the initial total wealth of type $x$. Then the population-weighted aggregate of the optimal wealth processes satisfies
  \begin{equation}\label{eq:aggregate_const_optimal}
    \xi^\mu_{t}-L^\mu_t
    \coloneqq  \int_{\mathbb{D}}
    \big(\xi_t(x)-L_t(x)\big)\,d\mu(x)
    = \frac{1}{H_{t}}\eta_{t},
  \end{equation}
  where
  \[
    L^\mu_t
    \coloneqq  \int_{\mathbb{D}} L_{t}(x)\,d\mu(x)
  \]
  is the current value of future endowments $L_t(x)$.

  The population-weighted aggregate of the optimal consumption process is
  given in terms of $\eta$ by
  \begin{equation}\label{eq:aggregatecons_const_optimal}
    c^\mu_{t}
    \coloneqq  \int_{\mathbb{D}} c_t(x)\,d\mu(x)
    = \frac{-\partial_t\eta_t}{H_{t}},
  \end{equation}
  where
  \[
    -\partial_t\eta_t =
    \int_{\mathbb{D}}
    \big(\gamma(\varphi_t(x))\big)\,
    \exp\!\Big(-\int_0^t \gamma\big(\varphi_u(x)\big)\,du\Big)\,
    y(x)\,d\mu(x),\qquad t\ge0.
  \]

  Moreover, the population-weighted aggregate optimal portfolio in stocks is
  \[
    \pi_{t}^\mu
    \coloneqq  \int_{\mathbb{D}} \pi_t(x)\,d\mu(x)
    = H_{t}^{-1}\eta_{t}\,\kappa_{t}
    - \varpi^\mu_{t},\qquad
    \varpi^\mu_{t}
    \coloneqq  \int_{\mathbb{D}} \varpi_t(x)\,d\mu(x)
  \]
  where  $\varpi^\mu_{t}$ is the aggregate hedging portfolio associated with $\varpi_{t}(x)$ that hedges the rate of  endowment  structure $Q$.
\end{corollary}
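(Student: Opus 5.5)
The plan is to integrate the typewise formulas of Theorem~\ref{thm:local_optimal_consumption_investment} against $\mu$ and pull out every factor that does not depend on $x$. By that theorem, for each $x\in\mathbb{D}$,
\[
  \xi_t(x)-L_t(x)=H_t^{-1}e^{-\int_0^t\gamma(\varphi_u(x))\,du}\,y(x),
\]
\[
  c_t(x)=\gamma(\varphi_t(x))\,H_t^{-1}e^{-\int_0^t\gamma(\varphi_u(x))\,du}\,y(x),
\]
\[
  \pi_t(x)=H_t^{-1}e^{-\int_0^t\gamma(\varphi_u(x))\,du}\,y(x)\,\kappa_t-\varpi_t(x).
\]
The factors $H_t^{-1}$ and $\kappa_t$ do not depend on $x$, since they are market processes. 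They can therefore be taken outside the $d\mu(x)$ integral pathwise. That yields \eqref{eq:aggregate_const_optimal} with $\eta_t$ as in \eqref{eq:aggregate_process_optimal}, the consumption formula with the stated integrand for $-\partial_t\eta_t$, and the portfolio identity $\pi^\mu_t=H_t^{-1}\eta_t\kappa_t-\varpi^\mu_t$ by linearity of the integral.

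\textbf{Well-posedness of the integrals.} Before any interchange, I will check that each integral is finite. By Assumption~\ref{ass:population_preference_structure}, $\underline\gamma\le\gamma\le\bar\gamma$, so $0<e^{-\int_0^t\gamma(\varphi_u(x))du}\le1$ and $\gamma(\varphi_t(x))\le\bar\gamma$. Hence $\eta_t$ and $-\partial_t\eta_t$ are finite whenever $y\in L^1(\mu)$. I take this integrability as implicit in the requirement that the aggregates $\xi^\mu$ and $L^\mu$ exist; for $L$ it follows from $\E\int H Q<\infty$ together with measurability of $x\mapsto L_t(x)$. Joint measurability of $(x,\omega)\mapsto\varphi_u(x)(\omega)$ under Assumption~\ref{ass:Lipschitz} gives measurability of the integrands in $x$. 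The splitting $\int(\xi_t-L_t)\,d\mu=\xi^\mu_t-L^\mu_t$ is legitimate once both pieces are integrable.

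\textbf{Justifying the notation $-\partial_t\eta_t$.} The one non-cosmetic step is to show that this notation is an honest time derivative. Pathwise, $t\mapsto e^{-\int_0^t\gamma(\varphi_u(x))du}$ is $C^1$, because $u\mapsto\gamma(\varphi_u(x))$ is continuous, and its derivative is $-\gamma(\varphi_t(x))e^{-\int_0^t\gamma(\varphi_u(x))du}$. This derivative is bounded by $\bar\gamma|y(x)|$ after multiplication by $y(x)$. Dominated convergence therefore lets me differentiate under $\int\cdot\,d\mu$. It follows that $\eta_t$ is $C^1$ in $t$ almost surely, with derivative given by the stated integral, and that $c^\mu_t=H_t^{-1}(-\partial_t\eta_t)$.

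\textbf{Main obstacle.} I expect the main difficulty to be the measurability and integrability bookkeeping in $x$, in particular for $\varpi^\mu_t$. Here I would invoke the aggregation results of Appendix~\ref{appendix:aggregation} (Proposition~\ref{prop:semimartingaleProperty}) to guarantee that the aggregates are well-defined semimartingales. Once that is in hand, the remaining identities are pure linearity.
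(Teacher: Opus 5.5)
Your proposal is correct and is essentially the paper's argument. The paper gives no separate proof: it treats the corollary as the direct $\mu$-integration of the typewise formulas of Theorem~\ref{thm:local_optimal_consumption_investment}, with the type-independent factors $H_t^{-1}$ and $\kappa_t$ pulled out, and uses it exactly that way in the characterization part of Theorem~\ref{thm:local_equilibrium}. Your dominated-convergence check that the stated integral really is the time derivative of $\eta_t$ goes slightly beyond the paper, which simply takes that integral as the definition of $-\partial_t\eta_t$. The one slip is your aside that $\mu$-integrability of $L_t(\cdot)$ follows from $\E\int_0^\infty H_uQ_u(x)\,du<\infty$: finiteness for each type separately does not give integrability in $x$. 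That integrability should instead be part of the standing assumption that the aggregates exist, which is how you treat the other aggregates anyway.
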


Next, we study the concept of equilibrium in our framework.

\begin{definition}\label{def:equilibrium}
  Assume an  economy  $\mathcal{E}$ with underlying market $\mathfrak{M}$ and population structure $\boldsymbol{\mu}$
  (with initial distribution $\mu$) as in Definition~\ref{def:financial_market},
  and \emph{aggregate  hedgeable (by state-tame portfolios)
    rate of consumption and rate of endowment structure
    $(\xi,c, Q)$}   and current value of future endowments $L$ with no-arbitrage portfolio
  $\pi$.   We say that $\mathcal{E}$   is an \emph{equilibrium} if   the
  following are satisfied for  $\mu$:
  \begin{description}
    \item[Clearing of the money market]
          \begin{equation*}
            \xi^\mu_{t}-\pi^\mu_{t}=0
          \end{equation*}
          for all $t\geq 0$, a.s.
    \item[Clearing of the commodity market]
          \[
            c^\mu_{t}=
            Q^\mu_{t}+D^\mu_t>0
          \]
          for all $0\leq t$ a.s., where $D^\mu_t=\delta^\mu_{t}P^\mu_{t}$ is the aggregate rate of dividend income.
    \item[Clearing of the stock market]
          \[
            \pi^\mu_{t}=P^\mu_{t}
          \]
          a.s.,
  \end{description}
  where $\xi^\mu_{t}$, $\pi^\mu_{t}$, $c^\mu_{t}$, $Q^\mu_{t}$, and $L^\mu_{t}$ are the
  aggregate processes associated with $\xi_{t}(x)$, $\pi_{t}(x)$, $c_{t}(x)$, $Q_{t}(x)$ and $L_t(x)$
  respectively as in Definition~\ref{def:aggregation}.
\end{definition}

\begin{definition}\label{def:short_equilibrium}
  Assume an economy  $\mathcal{E}$ with underlying market $\mathfrak{M}$
  as defined by Theorem~\ref{thm:local_optimal_consumption_investment},
  a population structure $\boldsymbol{\mu}$ (with initial distribution $\mu$)
  and \emph{hedgeable (by state-tame portfolios) rate of consumption and rate of endowment
    structures $(\xi,c, Q)$}, current value of future endowments $L$ and no-arbitrage portfolio
  $\pi$
  \emph{defined by Theorem~\ref{thm:local_optimal_consumption_investment}}.
  We say that $\mathcal{E}$  is a \emph{short-horizon Duesenberry equilibrium}
  if the equations  and conditions defining the clearing of the money market, the clearing
  of the commodity market and the clearing of the stock market of
  Definition~\ref{def:equilibrium} hold.
\end{definition}

\begin{theorem}\label{thm:local_equilibrium}
  Let $\varphi$ be a $\mathcal{C}(\mathbb{D};\mathbb{D})$-valued Brownian
  flow of $C^2(\mathbb{D})$ diffeomorphisms, and let $\boldsymbol{\mu}$ be a population
  structure on $\mathbb{D}$, with initial population measure $\mu$ (see Definition~\ref{def:aggregation}).
  Assume a dynamic, consistent  isoelastic preference structure for a population $U_{s,t}(x)$
  as in Definition~\ref{def:utility}, with corresponding effective impatience
  rate  $\gamma(x)$  of type $C^{2}(\mathbb{D})$
  satisfying Assumption~\ref{ass:population_preference_structure}.
  \begin{enumerate}[label=(\alph*)]
    \item \textbf{(Characterization.)} Suppose $\mathcal{E}$ is a short-horizon Duesenberry equilibrium economy
          with underlying market $\mathfrak{M}$ with population structure $\boldsymbol{\mu}$
          that satisfies the smooth market condition
          (Condition~\ref{cond:smooth_market}) with   \emph{aggregate hedgeable
            (by state-tame portfolios) rate of consumption and rate of endowment
            structures $(\xi^\mu,c^\mu, Q^\mu)$}  and no-arbitrage portfolio  $\pi^\mu$
          \emph{defined by Theorem~\ref{thm:local_optimal_consumption_investment}}.
          Define the
          population-weighted effective aggregate process $\eta_{t}=\eta^\mu_{t}$
          \begin{equation}\label{eq:wealth_distribution_equilibrium}
            \eta_{t}
            \coloneqq  \int_{\mathbb{D}}
            \exp\!\Big(-\int_{0}^t \gamma(\varphi_{u}(x))\,du\Big)
            \,y(x)\,\,d\mu(x),
          \end{equation}
          and  the aggregate Duesenberry loading
          \begin{equation}\label{eq:varkappa_equilibrium}
            -\partial_t\eta_t =
            \int_{\mathbb{D}}
            \gamma(\varphi_t(x))\,
            \exp\!\Big(-\int_0^t \gamma\big(\varphi_u(x)\big)\,du\Big)\,
            y(x)\,d\mu(x),\qquad t\ge0.
          \end{equation}
          Then, there exists a $C^2$ function $y(x)$ with
          \begin{equation}\label{eq:kernel2}
            \int_{\mathbb{D}} y(x)\,\gamma(x)\,d\mu(x)=I^\mu_0=Q^\mu_{0}+D^{\mu}_0>0,
          \end{equation}
          where $D^{\mu}_t=\delta^\mu_{t}P^\mu_{t}$, and $I_t^\mu=Q_t^\mu+D_t^\mu$ is the
          aggregate income process.  Moreover, the  state-price process satisfies
          \begin{equation}\label{eq:state_price2}
            H^\mu_t=\frac{-\partial_t\eta_t}{I^\mu_t},
          \end{equation}
          and
          \begin{equation}\label{eq:aggregate_wealth2}
            P^\mu_{t}
            =
            \frac{I^\mu_t}{\partial_t\eta_t}\int_t^{\infty}\Psi^{\mu_t}_{s}\,ds
            -\frac{I^\mu_t}{\partial_t\eta_t}\,\eta_t=
            \frac{I^\mu_t}{\partial_t\eta_t}\left(\int_t^{\infty}\Psi^{\mu_t}_{s}\,ds-\eta_t\right),
          \end{equation}
          where
          \[
            Q^\mu_{t} = \int_{\mathbb{D}} Q^\mu_t(x)\,d\mu(x),
          \]
          and  $\Psi^\mu_t$ is
          defined as
          \begin{equation}
            \Psi^\mu_{t}
            =
            -\E\!\left[\frac{Q^\mu_{t}}{I^\mu_{t}}\,\partial_t\eta_t\right]
            =
            \int_{\mathbb{D}}\Psi^\mu_{t}(x)\, d\mu(x),
          \end{equation}
          with
          \begin{equation*}
            \Psi^\mu_{t}(x)
            =
            -\E\!\left[\frac{Q^\mu_t(x)}{I^\mu_t}\,\partial_t\eta_t\right]
            =
            \E\!\left[H^\mu_t\,Q_t^\mu(x)\right].
          \end{equation*}

    \item \textbf{(Existence.)} Assume non-negative (semimartingale)
          random fields $Q_t^{\mu}(x)$ and $I_t^\mu(x)$
          of type $C^{2}(\mathbb{D})$ consistent
          with the population structure $\boldsymbol{\mu}$, satisfying
          (either) Assumption~\ref{ass:discrete_app} or Assumption~\ref{ass:SF_app}.
          Assume
          $0\leq Q_t^\mu(x)< I_t^\mu(x)$  with $I_t^\mu(x)>0$ for all
          $x\in\mathbb{D}$, and $t\geq 0$.
          Moreover, assume that there exists a $C^2$ function  $y(x)>0$ for all $x$ with
          \begin{equation}\label{eq:budget_existence}
            0<\int_{\mathbb{D}}y(x)\gamma(x)\,d\mu(x)=\int_{\mathbb{D}}I_0^\mu(x)\,d\mu(x) <\infty.
          \end{equation}
          Furthermore,
          we define $\Psi^\mu_{t}(x)$ and $\Psi^\mu_{t}$ by the following equations:
          \begin{equation}\label{eqn:Psi_x}
            \Psi^\mu_{t}(x)
            =
            -\E\!\left[\frac{Q^\mu_t(x)}{I_t^\mu}\,\partial_t\eta_t\right], \qquad
            \Psi^\mu_{t}=
            \int_{\mathbb{D}}\Psi^\mu_{t}(x)\, d\mu(x)
          \end{equation}
          where
          \begin{equation}\label{eq:uniform_integrability}
            \int_0^{\infty}\int_{\mathbb{D}}\Psi^\mu_{t}(x)\,d\mu(x)\,dt<\infty\qquad\text{and} \qquad \int_0^{\infty}\Psi^\mu_{t}(x)\,dt<\infty,\qquad \forall x\in\mathbb{D}.
          \end{equation}
          We define   the market $\mathfrak{M} = (P^\mu, \boldsymbol{\mu}, b^\mu, \sigma^\mu, \delta^\mu, \vartheta^\mu, r^\mu)$   by the following equations:
          \begin{equation}\label{eq:state_price2_existence}
            H^\mu_t=\frac{-\partial_t\eta_t}{I_t^\mu},
          \end{equation}
          and

          \begin{equation}\label{eq:aggregate_wealth2_existence}
            P^\mu_{t}=\frac{I^\mu_t}{\partial_t\eta_t}\left(\int_t^{\infty}\Psi^{\mu_t}_{s}\,ds-\eta_t\right), \qquad \delta_t^{\mu}P^\mu_t=D_t^\mu=I_t^\mu-Q_t^\mu,
          \end{equation}
          where $I^\mu_{t}$ and $Q^\mu_{t}$ are the aggregate processes of $I_t^\mu(x)$, and $Q_t^\mu(x)$
          respectively, and it is assumed that there exists a \emph{continuous}
          semimartingale process $\kappa^{\mu}_{t}$, consistent with the population structure $\boldsymbol{\mu}$, with
          \begin{equation}\label{eq:smooth_market2}
            \kappa^\mu_{t}\sigma^\mu_{t}=\vartheta^\mu_{t}.
          \end{equation}

          Then, there exists a short-horizon Duesenberry equilibrium  $\mathcal{E}$
          with underlying market $\mathfrak{M}$ with population structure $\boldsymbol{\mu}$ and
          a  dynamic, consistent isoelastic preference structure for a
          population with effective impatience rate $\gamma(x)$ with type space $\mathbb{D}$.
    \item In either case, the underlying market is free of state-tame arbitrage.
  \end{enumerate}
\end{theorem}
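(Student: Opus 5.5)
For part (a), I would start from the aggregate identities of Corollary~\ref{cor:optimal_local_aggregate}, which hold for the processes defined by Theorem~\ref{thm:local_optimal_consumption_investment}. They give
\[
\xi^\mu_t-L^\mu_t=\frac{\eta_t}{H^\mu_t},\qquad c^\mu_t=\frac{-\partial_t\eta_t}{H^\mu_t},\qquad \pi^\mu_t=\frac{\eta_t}{H^\mu_t}\kappa_t-\varpi^\mu_t .
\]
\begin{itemize}
\item Commodity clearing, $c^\mu_t=Q^\mu_t+D^\mu_t=I^\mu_t$, immediately yields \eqref{eq:state_price2}.
\item Evaluating at $t=0$, where $H_0=1$ and $\gamma(\varphi_0(x))=\gamma(x)$, gives the budget condition \eqref{eq:kernel2}, with $y(x)=\xi_0(x)-L_0(x)$.
\item The $C^2$ regularity of $y$ is inherited from the $C^2$ regularity of the fields $\xi_0,L_0$ (consistent random fields at time $0$).
\end{itemize}

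Next I combine money and stock market clearing: $\xi^\mu_t=\pi^\mu_t=P^\mu_t$. Hence
\[
P^\mu_t=L^\mu_t+\frac{\eta_t}{H^\mu_t}.
\]
I then compute $L^\mu_t$ from Definition~\ref{def:hedgeable_income}, exchanging $\int_{\mathbb{D}}$ with the conditional expectation by Fubini under the integrability assumptions:
\[
L^\mu_t=-\frac{1}{H^\mu_t}\int_t^\infty \E\bigl[H^\mu_sQ^\mu_s\mid\mathcal{F}_t\bigr]\,ds .
\]
Substituting $H^\mu_s=-\partial_s\eta_s/I^\mu_s$ identifies the integrand with $\Psi$. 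For the $\mu_t$-notation I use the consistency property \eqref{eq:aggregate_consistency} and the cocycle Lemma~\ref{lem:aggregation_cocycle_app}, which make the conditional expectation at time $t$ equal to the $\Psi^{\mu_t}$ of the reset population. This gives \eqref{eq:aggregate_wealth2}, keeping the paper's sign conventions with $H_t=I_t/(-\partial_t\eta_t)^{-1}$.

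For part (b), I would reverse the argument. I define $H^\mu$ by \eqref{eq:state_price2_existence}. By Propositions~\ref{prop:disc_weighted_app} or~\ref{prop:cont_app} (under Assumption~\ref{ass:discrete_app} or~\ref{ass:SF_app}), both $\eta_t$ and $-\partial_t\eta_t$ are positive continuous semimartingales, since $y>0$ and $\gamma\ge\underline{\gamma}$.
\begin{itemize}
\item Writing the Itô decomposition of $1/H^\mu$ as in \eqref{eq:sde_consumptionprice} identifies $r^\mu$ as the drift minus $\|\vartheta^\mu\|^2$ and $\vartheta^\mu$ as the diffusion coefficient.
\item Applying Itô to $P^\mu$ defined by \eqref{eq:aggregate_wealth2_existence} gives $b^\mu$ and $\sigma^\mu$, and $\delta^\mu:=D^\mu/P^\mu$.
\item Positivity of $P^\mu$ comes from $Q<I$, since $L$ is then dominated by the total wealth.
\item I set $L_t(x)$ via Definition~\ref{def:hedgeable_income} using \eqref{eq:uniform_integrability}, and $\xi_0(x)=L_0(x)+y(x)$. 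Theorem~\ref{thm:local_optimal_consumption_investment} then supplies $(\xi,c,\pi)$, given the assumed continuous $\kappa$ (Condition~\ref{cond:smooth_market}).
\end{itemize}
Running part (a)'s identities backwards then verifies the three clearing conditions. Commodity clearing holds by construction of $H$. Money and stock clearing reduce to $P^\mu_t=L^\mu_t+\eta_t/H^\mu_t$, which is exactly the defining formula for $P^\mu_t$.

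\textbf{Main obstacle.} The main difficulty is checking that the $\vartheta^\mu$ read off from $H^\mu$ actually satisfies the viability relation \eqref{E:wviability} for the $b^\mu,\sigma^\mu$ read off from $P^\mu$. My plan is to show that $H^\mu_tP^\mu_t+\int_0^tH^\mu_sD^\mu_s\,ds$ is a local martingale.

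To do this I write
\[
H_tP_t=\int_t^\infty\E\bigl[H_sQ_s\mid\mathcal{F}_t\bigr]\,ds-\eta_t,
\]
whose first term is a martingale plus $-\int_0^tH_sQ_s\,ds$, and use $-d\eta_t=H_tc_t\,dt=H_tI_t\,dt$. The drift of the sum is then $-H_tQ_t+H_tI_t-H_tD_t=0$.

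Matching the drift of $HP$ with the Itô expansion then forces $b+\delta-r=\sigma^\intercal\vartheta$. Consequently the projection onto $\ker(\sigma^\intercal)$ vanishes, which by Remark~\ref{rem:state_tame
ness} gives part (c): absence of state-tame arbitrage, in both the characterization and the existence case.

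The delicate points are the exchange of the aggregation and conditional expectation, and the justification that the stochastic integral is a true martingale. I would handle both by localization, using \eqref{eq:uniform_integrability}.
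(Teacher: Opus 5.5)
Your outline tracks the paper's proof step by step: aggregation through Corollary~\ref{cor:optimal_local_aggregate}, commodity clearing for $H^\mu$, the $t=0$ budget identity, $L^\mu$ via Fubini and the cocycle, and the drift of $H^\mu P^\mu+\int_0^\cdot H^\mu_uD^\mu_u\,du$ for part (c). The gap is at the clearing step of part (b).

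You assert that money and stock clearing ``reduce to'' $P^\mu_t=L^\mu_t+\eta_t/H^\mu_t$. That identity gives only $\xi^\mu=P^\mu$. Clearing also requires $\pi^\mu=P^\mu$. The optimal aggregate portfolio $\pi^\mu_t=(\eta_t/H^\mu_t)\kappa^\mu_t-\varpi^\mu_t$ does not visibly equal $L^\mu_t+\eta_t/H^\mu_t$, because $\kappa^\mu$ and $\varpi^\mu$ are defined through the volatility $\sigma^\mu$ of the price you constructed. The paper closes this by comparing two representations. The first is the deflated budget identity of the optimal wealth,
\[
H_t\xi_t+\int_0^tH_u(c_u-Q_u)\,du=\xi_0+\int_0^tH_u\bigl(\pi_u\sigma_u^\intercal-\xi_u\vartheta_u^\intercal\bigr)\,dW(u).
\]
The second is the It\^o expansion of $H_tP_t+\int_0^tH_uD_u\,du$, whose $dW$-integrand is $H_uP_u(\sigma_u^\intercal-\vartheta_u^\intercal)$. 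Since $\xi=P$ and $c-Q=I-Q=D$, the left-hand sides coincide. Hence the integrands agree, so $(\pi_u-P_u)\sigma_u^\intercal=0$, i.e.\ $\pi=P$ wherever $\sigma\neq0$. Equivalently, you can read off the aggregate hedging portfolio of $L^\mu$ from $H^\mu L^\mu=H^\mu P^\mu-\eta$, using that $\eta$ has finite variation. Without this diffusion-matching step, the stock- and money-market clearing conditions are not established.

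There is also a sign slip in your obstacle computation. From $P=L+\eta/H$ one gets
\[
H_tP_t=\eta_t-\int_t^\infty\E[H_sQ_s\mid\mathcal F_t]\,ds=\E\Bigl[\int_t^\infty H_sD_s\,ds\Bigm|\mathcal F_t\Bigr],
\]
using $\eta_\infty=0$ (from $\gamma\ge\underline\gamma$) and $-\partial_s\eta_s=H_sI_s$. Your formula is the negative of this. With your sign the drift of $HP+\int HD$ would be $2HD$; your cancellation $-HQ+HI-HD=0$ works only because the $D$ term was flipped as well. With the correct sign, $H_tP_t+\int_0^tH_sD_s\,ds=\E[\int_0^\infty H_sD_s\,ds\mid\mathcal F_t]$ is manifestly a martingale, which is the paper's identity, and $P>0$ follows directly from $Q<I$. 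Likewise, ``$H_t=I_t/(-\partial_t\eta_t)^{-1}$'' should read $H_t^{-1}=I_t/(-\partial_t\eta_t)$.

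Finally, in part (a) the integrability for Fubini should come from $Q\le I$, namely
\[
\int_0^\infty H_tQ_t\,dt\le\int_0^\infty(-\partial_t\eta_t)\,dt\le\eta_0 .
\]
You cannot use \eqref{eq:uniform_integrability} there, since it is only a hypothesis of part (b).
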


\begin{proof}[Proof of Theorem~\ref{thm:local_equilibrium}]
  We divide the argument into two parts.
  First, we prove the \emph{characterization}: starting from a short-horizon
  Duesenberry equilibrium, we obtain the identities
  \eqref{eq:wealth_distribution_equilibrium}--\eqref{eq:kernel2} and the pricing formulas \eqref{eq:state_price2}--\eqref{eq:aggregate_wealth2}.
  Then, we prove the \emph{existence} direction by reversing the construction. Finally, we prove
  no-arbitrage of the market in both cases.

  \medskip
  \noindent\textbf{Characterization.}
  Assume a short-horizon Duesenberry equilibrium $\mathcal{E}$  with underlying market
  $\mathfrak{M}$  that satisfies the Smooth Market Condition
  (Condition~\ref{cond:smooth_market}).
  Under the hypotheses of Theorem~\ref{thm:local_optimal_consumption_investment},
  for each agent of (initial) type $x\in\mathbb{D}$, the locally optimal wealth, consumption, and portfolio processes are given by
  $(\xi^\mu,c^\mu, Q^\mu)$ and $\pi^\mu$ as follows:
  \begin{equation}\label{eq:individual_wealth}
    \xi^\mu_t(x)=L^\mu_t(x)+ e^{-\int_{0}^t\gamma(\varphi_{u}(x))\,du}(H^\mu_{t})^{-1}y(x)
  \end{equation}
  where  the \emph{initial net wealth} $y(x)=\xi^\mu_0(x)-L^\mu_0(x)$ is a function of class $C^2$ and
  \begin{equation}\label{eqn:individual_consumtion}
    c^\mu_{t}(x)=
    \gamma(\varphi_{t}(x)) e^{-\int_0^t\gamma(\varphi_{u}(x))\,du}(H^\mu_{t})^{-1}y(x)
  \end{equation}
  and a state-tame (no-arbitrage) portfolio in the sense of Theorem~\ref{thm:local_optimal_consumption_investment}
  and Remark~\ref{rem:state_tameness} given by:
  \begin{equation}\label{eq:indiv_port_opt}
    \pi^\mu_t(x)
    = e^{-\int_0^t \gamma(\varphi_u(x))\,du}\,
    (H^\mu_{t})^{-1}y(x)\,\kappa^\mu_t
    \;-\; \varpi^\mu_t(x).
  \end{equation}

  Now, we aggregate over the (random) population at time $t$ the  second term of the right-hand side of Equation~\eqref{eq:individual_wealth}. This yields Equation~\eqref{eq:wealth_distribution_equilibrium}
  by Corollary~\ref{cor:optimal_local_aggregate}.  Similarly, integrating equations \eqref{eq:individual_wealth} and \eqref{eqn:individual_consumtion} over the (time-$t$) distribution $\mu_t$ gives the \emph{aggregate} processes
  \begin{equation}\label{eq:agg_wealth}
    \xi_t^\mu
    =L_t^\mu \;+\; \big(H_t^\mu\big)^{-1}\eta_t
  \end{equation}
  and
  \begin{equation}\label{eq:agg_cons_tmp}
    -\partial_t\eta_t =
    \int_{\mathbb{D}}
    \gamma\big(\varphi_t(x)\big)\,
    \exp\!\Big(-\int_0^t \gamma\big(\varphi_u(x)\big)\,du\Big)\,
    y(x)\,d\mu(x).
  \end{equation}

  Note that $t\mapsto \exp(-\int_0^t \gamma(\varphi_{u}(x))du)$
  is non-increasing for each $x$, hence, so is $\eta_t$.  Also,
  by Corollary~\ref{cor:aggregate_semimartingale_app} $\eta_t$, and $-\partial_t \eta_t$
  are consistent semimartingales with respect to the population structure
  $\boldsymbol{\mu}$. Moreover, by the dominated convergence theorem
  $\eta_t\to 0$ a.s.

  Aggregating \eqref{eqn:individual_consumtion} against the population structure gives
  \begin{equation}\label{eq:agg_cons_deta}
    c_t^\mu
    =
    -\big(H^\mu_t\big)^{-1}\,\partial_t\eta_t.
  \end{equation}
  By the clearing of the commodity market $c_t^\mu=I_t^{\mu}>0$ for all $t\ge0$ (a.s.), we obtain
  \begin{equation}\label{eq:state_price_proof}
    H^\mu_t
    =
    \frac{-\partial_t \eta_t}{I^\mu_t},
  \end{equation}
  which is Equation~\eqref{eq:state_price2},  where $Q_t^\mu$ is the aggregate (hedgeable) endowment rate and
  $D^\mu_t=\delta^\mu_t P^\mu_t$ is the aggregate rate of dividend income. Using the clearing of the commodity
  market at $t=0$ and using~\eqref{eq:agg_cons_tmp} we obtain Equation~\eqref{eq:kernel2}.

  Note that Equation~\eqref{eq:state_price_proof}, the dominated convergence
  theorem, Assumption~\ref{ass:population_preference_structure}, and Equation~\eqref{eq:kernel2} imply that
  \begin{equation}
    \label{eq:HD_bound2}
    0\le \int_0^\infty H^{\mu}_tQ_t^{\mu}\, dt=\int_0^{\infty} \frac{Q^{\mu}_t}{I^{\mu}_t}(-\partial_t\eta_t)\,dt\le\int_0^{\infty}(-\partial_t\eta_t)\,dt=\eta_0<\infty
  \end{equation}
  By Equation~\eqref{eq:HD_bound2}, since $Q^{\boldsymbol{\mu}}$ is a hedgeable rate of endowment structure with current value of future endowments $L^{\boldsymbol{\mu}}$
  (with \emph{no-arbitrage} portfolio $\pi^{\boldsymbol{\mu}}_Q$), using Fubini's theorem and the cocycle property, it follows that
  \begin{equation}\label{eq:current_value_labor}
    L_t^\mu = \int_{\mathbb{D}} L_t(x)\,d\mu(x)
    =
    -\,(H^{\mu}_t)^{-1}\,
    \E\!\left[
      \int_t^\infty H^{\mu}_u \,Q_u^\mu\,du
      \;\Big|\;\mathcal{F}_t
      \right]=-\,(H^\mu_t)^{-1}\,\int_t^\infty \Psi^{\mu_t}_{u}\, du
  \end{equation}
  Equation~\eqref{eq:aggregate_wealth2} is a direct consequence of Corollary~\ref{cor:optimal_local_aggregate},
  and Equation~\eqref{eq:current_value_labor}.

  This completes the derivation of \eqref{eq:aggregate_wealth2} from the market-clearing conditions together with \eqref{eq:state_price2}.

  On the other hand, by It\^o's Lemma, Equation~\eqref{eq:state_price2},
  Equation~\eqref{eq:aggregate_wealth2}, and since $\eta_t\to 0$ it follows that
  \begin{multline*}
    E\Big[\int_0^{\infty}\big(1-\frac{Q^\mu_u}{I_u^\mu}\big)(-\partial_u\eta_u)\,d u\mid\mathcal{F}_t\Big]=
    H_t^\mu P_t^\mu+\int_0^tH_u^\mu D_u^\mu\,du=\\
    P_0^\mu+\int_0^t H_u^\mu P_u^\mu\Big(b^\mu_u+\delta_u^\mu-r^\mu_u-(\sigma_u^\mu)^\intercal\vartheta_u^\mu\Big)\,du+
    \int_0^tH^\mu_uP_u^\mu\left((\sigma^\mu_u)^\intercal-(\vartheta^\mu_u)^\intercal\right)dW(u)
  \end{multline*}
  It follows that $b^\mu_u+\delta_u^\mu-r^\mu_u-(\sigma_u^\mu)^\intercal\vartheta_u^\mu=0$, proving
  that there are no state-tame arbitrage opportunities.
  \medskip

  \noindent\textbf{Existence.}
  Assume non-negative random fields $Q^\mu(x)$ and $I_t^\mu(x)$ of type $C^{2}(\mathbb{D})$,
  with the property that there exist  a $C^2$ function  $y(x)>0$ for all $x$
  satisfying Equation~\eqref{eq:kernel2}.
  Define  $\Psi^\mu_{t}(x)$, and $\Psi^\mu_{t}$ by Equation~\eqref{eqn:Psi_x}, assuming  Equation~\eqref{eq:uniform_integrability}.
  Define,  the effective aggregate process $\eta_t>0$ by Equation~\eqref{eq:wealth_distribution_equilibrium}, and define
  $H_t^{\mu}$  by Equation~\eqref{eq:state_price2_existence}.
  By \eqref{eq:kernel2} and \eqref{eq:wealth_distribution_equilibrium},
  \[
    \int_{\mathbb{D}}\gamma(x)\,y(x)\,d\mu(x)
    =I_0^\mu,
  \]
  hence $H^\mu_0=1$ by Equation~\eqref{eq:state_price2}. Since $\gamma(x)>0$, $y(x)>0$
  and $I^\mu_t(x)>0$ for all $x$, $H^\mu_t>0$.

  Let $(r^\mu,\vartheta^\mu)$ be the coefficients obtained from the It\^{o} decompositions of $H^{\mu}$.
  Also,
  \begin{multline*}
    \int_t^{\infty}\Psi^{\mu}_{s}\,ds=
    E\big[\int_0^{\infty}\int_{\mathbb{D}}\frac{Q^\mu_u(x)}{I^\mu_u}(-\partial_u \eta_u)\,d\mu(x)\,du\Big]<\\
    E\Big[\int_0^\infty\Big(-\partial_u\eta_u\Big)\,du\Big]=\E[\eta_0]<\infty
  \end{multline*}
  Moreover,
  \begin{equation}\label{eq:auxiliar_price}
    \int_t^{\infty}\Psi^{\mu_t}_{s}\,ds-\eta_t=E\Big[\int_t^{\infty}\frac{Q_u^{\mu}}{I^\mu_u}(-\partial_u \eta_u)\,du-\int_t^{\infty}(-\partial_u\eta_u)\,du\mid\mathcal{F}_t\Big]<0
  \end{equation}
  It follows from Equation~\eqref{eq:aggregate_wealth2_existence}, and
  Equation~\eqref{eq:auxiliar_price}:
  \begin{equation*}
    P^\mu_{t}=\frac{I^\mu_t}{\partial_t\eta_t}\left(\int_t^{\infty}\Psi^{\mu_t}_{s}\,ds-\eta_t\right)>0.
  \end{equation*}

  Define $(b^\mu,\sigma^\mu)$ the coefficients obtained satisfying Assumption~\ref{ass:discrete_app} or~\ref{ass:SF_app} according to the
  underlying hypothesis of the theorem, and define $\delta_t^\mu\coloneqq D^\mu_t/P^\mu_t$ as the aggregate dividend yield, where $D_t^\mu=I_t^\mu-Q_t^\mu>0$ is the aggregate dividend income.  Assume that~\eqref{eq:smooth_market2} holds
  with a consistent (relative to the population structure $\boldsymbol{\mu}$) process
  $\kappa^{\mu}_{t}$ such that $\kappa^{\mu}_{t}\sigma^{\mu}_{t}=\vartheta^{\mu}_{t}$. In
  particular the market defined by $P_t^\mu$ and $H_t^\mu$ above satisfies
  the smooth market condition, and therefore Theorem~\ref{thm:local_optimal_consumption_investment} and
  Corollary~\ref{cor:optimal_local_aggregate} apply.
  Consider the consumer problem in the market $\mathfrak{M}=(P^\mu, \boldsymbol{\mu}, b^\mu,\sigma^\mu,\delta^\mu,\vartheta^\mu,r^\mu)$
  as above, with the given consistent isoelastic preference structure. Define
  $\big(\xi^\mu(x),c^\mu(x),\pi^\mu(x)\big)$
  the formulas of Theorem~\ref{thm:local_optimal_consumption_investment}
  satisfying the explicit formulas \eqref{eq:individual_wealth}--\eqref{eqn:individual_consumtion}
  (and the corresponding portfolio representation), with initial net wealth
  $y^\mu(x)=y(x)$, and initial financial wealth $\xi^\mu_0(x)=y(x)+L_0^\mu(x)$,
  and labor income $Q_t^\mu(x)$, so
  \[-L^\mu_t(x)=\frac{I^\mu_t}{\partial_t\eta_t}\int_t^{\infty}\Psi^{\mu_t}_{s}(x)\,ds=(H_t^{\mu})^{-1}\E\!\left[
      \int_t^\infty H^{\mu}_u \,Q_u^\mu(x)\,du
      \;\Big|\;\mathcal{F}_t
      \right],
  \]
  is the current value of future labor at time $t$ in the constructed market.
  It follows by Corollary~\ref{cor:optimal_local_aggregate}, that $\xi_t^\mu=L_t^\mu+(H_t^\mu)^{-1}\eta_t$,
  and therefore by Equation~\eqref{eq:aggregate_wealth2_existence}, $P_t^\mu=\xi_t^\mu$.
  Also, by It\^o's Lemma, and aggregation, it follows from Equation~\eqref{eq:aggregatecons_const_optimal}

  \begin{multline}\label{eq:labor_semimartingale_existence}
    H_t^\mu L_t^\mu-\int_0^tH^{\mu}_u Q_u^\mu \,du=\\
    -E\Big[\int_0^{\infty}H^\mu_uQ^\mu_u\,du\Big]+
    \int_0^tH_u^\mu\Big( (\pi_Q^\mu)_u(\sigma_u^\mu)^{\intercal}-L_u^\mu(\vartheta^\mu_u)^\intercal\Big) dW(u),
  \end{multline}
  where $\pi_Q^\mu$ is the aggregate  portfolio (that hedges $L^\mu_t$) implied by the representation as local
  martingale of the term on the left of the last equation.

  We observe that
  \begin{multline*}
    H_t^\mu P_t^\mu+\int_0^tH_u^\mu D_u^\mu\,du=\\
    E\Big[\Big(\int_t^{\infty}(-\partial_u\eta_u)\,du-\int_t^{\infty}\frac{Q_u^{\mu}}{I^\mu_u}(-\partial_u \eta_u)\,du\Big)\mid\mathcal{F}_t\Big]+\int_0^t \Big(I_u^\mu-Q_u^\mu\Big)\frac{(-\partial_u \eta_u)}{I^\mu_u}\,du=\\
    E\Big[\Big(\int_0^{\infty}(-\partial_u\eta_u)\,du-\int_0^{\infty}\frac{Q_u^{\mu}}{I^\mu_u}(-\partial_u \eta_u)\,du\Big)\mid\mathcal{F}_t\Big]
  \end{multline*}
  is a martingale.  On the other hand, It\^o's Lemma implies
  \begin{multline}\label{eq:current_value_price}
    H_t^\mu P_t^\mu+\int_0^tH_u^\mu D_u^\mu\,du=P_0^\mu+\int_0^t H_u^\mu P_u^\mu\Big(b^\mu_u+\delta_u^\mu-r^\mu_u-(\sigma_u^\mu)^\intercal\vartheta_u^\mu\Big)\,du+\\
    \int_0^tH^\mu_uP_u^\mu\left((\sigma^\mu_u)^\intercal-(\vartheta^\mu_u)^\intercal\right)dW(u)
  \end{multline}
  Since the left-hand side of Equation~\eqref{eq:current_value_price} is a martingale,
  it follows that  the bounded variation part is $0$, implying that
  \[b_u^\mu+\delta_u^\mu-r_u^\mu=(\sigma_u^\mu)^\intercal\vartheta_u^\mu
  \]
  and then the market, as defined above, is free of state-tame arbitrage opportunities.
  From Theorem~\ref{thm:local_optimal_consumption_investment}, and Corollary~\ref{cor:optimal_local_aggregate}
  and Equation~\eqref{eq:state_price2_existence}, it follows that
  \begin{equation}\label{eq:consumption_existence}
    c^\mu_{t} = \frac{-\partial_t\eta_t}{H^\mu_{t}}=I_t^\mu
  \end{equation}
  Moreover from Theorem~\ref{thm:local_optimal_consumption_investment}, and Corollary~\ref{cor:optimal_local_aggregate},
  and Equation~\eqref{eq:consumption_existence} it follows that
  \begin{equation}\label{eq:optimal_wealth_existence}
    H^\mu_t\xi_t^\mu+\int_0^tH_u^\mu (c_u^\mu-Q_u^\mu)\,du =\xi^\mu_0+\int_0^tH_u^\mu\Big(\pi_u^\mu(\sigma_u^\mu)^\intercal-\xi_u^\mu(\vartheta_u^\mu)^\intercal\Big)\, dW(u)
  \end{equation}
  Taking differences of equations~\eqref{eq:optimal_wealth_existence} and~\eqref{eq:current_value_price},
  and since $\xi_t^\mu=P_t^\mu$, and since $D^\mu_t=I_t^\mu-Q^\mu_t=c^\mu_t-Q^\mu_t$ by Equation~\eqref{eq:consumption_existence} it follows that $P^\mu_t=\pi_t^\mu$, proving the clearing of the money market,
  and the clearing of the stock market.
  This completes the existence part and therefore the proof.
\end{proof}

\begin{corollary}\label{cor:properties_local_equilibrium}
  Assume the conditions of Theorem~\ref{thm:local_equilibrium}. Let
  \begin{equation}\label{eq:wealth_distr}
    \eta_{t}
    \coloneqq \int_{\mathbb{D}}
    \exp\!\left(-\int_{0}^t\gamma(\varphi_{u}(x))\,du\right)\,y(x)\,d\mu(x)
  \end{equation}
  be the population-weighted effective aggregate process.  Then the state-price process and the consumption--wealth ratio satisfy
  \begin{equation}\label{eq:price_discount}
    H_{t}=\frac{-\partial_t\eta_t}{c_{t}},\qquad
    \frac{c_t}{P^W_t}=\frac{-\partial_t\eta_t}{\eta_t}=-\partial_t\log(\eta_t),
  \end{equation}
  where $c_t=c_t^\mu$ is aggregate consumption and
  \[
    P_t^L
    \coloneqq  (H_t)^{-1}\E\!\left[\int_t^{\infty}H_u\,Q_u^\mu\,du\;\middle|\;\mathcal F_t\right]
    = -\,L_t^\mu,
    \qquad
    P_t^W\coloneqq P_t+P_t^L=(H_t)^{-1}\eta_t
  \]
  is the total wealth (physical wealth plus labor wealth) owned by society at time $t$.
  Moreover, since $\eta$ is a.s.\ absolutely continuous  in $t$,
  the volatility $\sigma_t^W$ of $P_t^W$  is equal to the market price of risk $\vartheta_t^\mu$.
\end{corollary}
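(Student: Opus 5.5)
The plan is to read off all three assertions from the characterization part of Theorem~\ref{thm:local_equilibrium} and Corollary~\ref{cor:optimal_local_aggregate}, combined with the clearing conditions of Definition~\ref{def:equilibrium}. There are three steps: the pricing identities, the total-wealth identity, and the volatility claim.

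\textbf{Pricing identities.} Commodity clearing gives $c_t=c^\mu_t=I^\mu_t$. Equation~\eqref{eq:state_price2} then yields at once $H_t=-\partial_t\eta_t/c_t$.

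\textbf{Total wealth.} Money and stock market clearing give $\xi^\mu_t=\pi^\mu_t=P_t$. The definition of $L$ in Definition~\ref{def:hedgeable_income}, equivalently \eqref{eq:current_value_labor}, identifies $P^L_t=-L^\mu_t$. Hence
\[
P^W_t=P_t+P^L_t=\xi^\mu_t-L^\mu_t=H_t^{-1}\eta_t
\]
by \eqref{eq:aggregate_const_optimal}; equivalently, it follows from \eqref{eq:aggregate_wealth2}. Dividing $c_t=H_t^{-1}(-\partial_t\eta_t)$ by $P^W_t=H_t^{-1}\eta_t$ cancels $H_t$ and gives $c_t/P^W_t=-\partial_t\eta_t/\eta_t=-\partial_t\log\eta_t$. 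This uses $\eta_t>0$, which holds because $y>0$ and $\gamma$ is bounded (Assumption~\ref{ass:population_preference_structure}).

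\textbf{Volatility.} The first task is to justify that $t\mapsto\eta_t$ is pathwise $C^1$ with derivative given by \eqref{eq:varkappa_equilibrium}. For each $x$, the map $t\mapsto\exp(-\int_0^t\gamma(\varphi_u(x))\,du)$ is $C^1$ in $t$, because $u\mapsto\gamma(\varphi_u(x))$ is continuous. Its derivative is dominated by $\sup\gamma\cdot y(x)$, which is $\mu$-integrable since $\gamma\ge\underline\gamma$ is bounded and $\int y\gamma\,d\mu<\infty$. Differentiation under the integral (dominated convergence) therefore shows that $\eta$ is a.s.\ absolutely continuous, with $d\eta_t=\partial_t\eta_t\,dt$ and zero martingale part.

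Next apply It\^o's formula to $P^W_t=\eta_t\,(H_t)^{-1}$. Since $\eta$ has finite variation, there is no cross-variation term, and
\[
dP^W_t=(H_t)^{-1}\partial_t\eta_t\,dt+\eta_t\,d(H_t)^{-1}.
\]
By \eqref{eq:sde_consumptionprice}, the diffusion part of this is $P^W_t(\vartheta^\mu_t)^\intercal dW(t)$. Writing $dP^W_t=P^W_t(\cdots)\,dt+P^W_t(\sigma^W_t)^\intercal dW(t)$ and using uniqueness of the semimartingale decomposition then gives $\sigma^W_t=\vartheta^\mu_t$. The drift comes out as $r_t+\|\vartheta_t\|^2+\partial_t\log\eta_t$, which is consistent with the consumption--wealth ratio obtained above.

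\textbf{Main obstacle.} The only genuinely delicate point is the interchange of $\partial_t$ and $\int_{\mathbb{D}}\,d\mu$, together with the claim that $\eta$ carries no Brownian component even though the integrand is a random field driven by the flow. I would handle this with the dominated bound above, applied pathwise for each $\omega$. If that bound is insufficient, the fallback is the semimartingale aggregation results (Corollary~\ref{cor:aggregate_semimartingale_app}), which show that $\eta_t=\eta_0+\int_0^t\partial_u\eta_u\,du$.
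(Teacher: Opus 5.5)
Your proposal is correct and takes essentially the same route the paper has in mind. The paper gives no separate proof: the corollary is read off from \eqref{eq:state_price2} with commodity clearing, from Corollary~\ref{cor:optimal_local_aggregate} with money- and stock-market clearing (so $P^W_t=\xi^\mu_t-L^\mu_t=\eta_tH_t^{-1}$), and from \eqref{eq:sde_consumptionprice} applied to $P^W_t=\eta_tH_t^{-1}$ with $\eta$ of finite variation. One small precision: $y>0$ is assumed only in the existence part, so in the characterization part you should justify $\eta_t>0$ differently, either via $y\ge 0$ (forced by admissibility $\xi_t(x)\ge L_t(x)$) together with $\int y\gamma\,d\mu>0$, or directly from $\eta_t=H_tP^W_t$ with $P_t>0$ and $P^L_t\ge 0$.
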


\begin{remark}\label{rem:consumption_wealth_ratio}
  The right-hand side of Equation~\eqref{eq:price_discount} implies
  that the consumption--wealth ratio is the \emph{instantaneous proportional decay rate} of $\eta_t$,
  where $\eta_t$ represents the impatience-adjusted social wealth:
  it is a cross-sectional average of initial wealth $y(x)$, weighted by the cumulative
  impatience factor along each trajectory through the effective impatience field $\gamma(\varphi_u(x))$.
  A consequence of the latter is that the only factor that accounts for changes in the consumption--wealth ratio is derived
  from changes in the preference structure of the population,
  and not from factors such as productivity, population growth, or inflation, at least directly.
  See~\citet{LSV2013} for empirical analysis of the
  wealth--consumption ratio, and~\citet{LettauLudvigson2001cay} for the closely related
  \emph{cay} variable, which has been shown to forecast stock returns and whose
  fluctuations, in the present framework, are driven entirely by shifts in the
  cross-sectional impatience structure~$\eta_t$.

  \remarkparagraph{Volatility decomposition.}
  If $\sigma^c_t$ denotes the volatility of aggregate consumption
  (relative diffusion) and
  \[
    \sigma^{-\partial \eta}_t=\frac{1}{-\partial_t\eta_t}\int_{\mathbb{D}}
    \big(\nabla \gamma(\varphi_t(x))\varrho(\varphi_{t}(x))\big)\,
    \exp\!\Big(-\int_0^t \gamma\big(\varphi_u(x)\big)\,du\Big)\,
    y(x)\,d\mu(x)
  \]
  is the volatility (relative diffusion) of $-\partial_t\eta_t$, then the volatility $\sigma_t^W$ of total
  wealth $P^W_t$ (relative diffusion) satisfies
  \[
    \sigma_t^W=\vartheta_t=\sigma_t^c-\sigma_t^{-\partial\eta},
  \]
  where $\vartheta_t$ is the market price of risk process.  Hence, asset-price risk premia and market volatility (which are governed by the diffusion of the
  state-price process) need not be tied to consumption volatility: even if aggregate consumption is
  smooth (small $\|\sigma_t^{c}\|$), aggregate wealth (or the state price) can be highly volatile whenever impatience risk generates
  large fluctuations in $-\partial_t\eta_t$.
  This provides a direct valuation channel through which time-varying effective impatience can reconcile
  low consumption volatility with volatile markets.
  We note that this mechanism is distinct from the long-run risk channel of~\citet{BansalYaron2004LongRunRisk}, where persistent shocks to consumption growth
  drive asset-price volatility, and from the external habit mechanism
  of~\citet{CampbellCochrane1999ForceHabit}, where a time-varying surplus ratio
  amplifies risk premia.  In our framework, the additional source of market volatility arises from heterogeneous and time-varying impatience across
  the population, encoded in the term $\sigma_t^{-\partial\eta}$.
\end{remark}

\begin{remark}\label{rem:two_type_example}
  \remarkparagraph{Heterogeneity is not enough.}
  Let $\mathbb{D}=\mathbb{D}_1\sqcup\mathbb{D}_2$ with $\mathbb{D}_1,\mathbb{D}_2$
  disjoint open sets, let $\gamma\equiv\gamma_i$ on $\mathbb{D}_i$  and write $y_i\coloneqq \int_{\mathbb{D}_i}y(x)\,d\mu(x)$. Since
  $u\mapsto\varphi_u(x)$ is continuous, each trajectory stays in its connected
  component, so $\gamma(\varphi_u(x))=\gamma(x)$ and
  \[
    \eta_t=y_1e^{-\gamma_1t}+y_2e^{-\gamma_2t},
    \qquad
    -\partial_t\eta_t=y_1\gamma_1e^{-\gamma_1t}+y_2\gamma_2e^{-\gamma_2t}
  \]
  are deterministic: the relative-income representation of
  Remark~\ref{rem:relation_joneses} is available, yet
  $\sigma^{-\partial\eta}_t=0$ and the impatience-risk premium of
  Remark~\ref{rem:equity_premium} vanishes, whether the
  household solves a fixed-horizon problem or rolls short-horizon problems
  forward. What activates the premium is not heterogeneity of impatience as
  such, but its variation along the flow: for non-constant
  $\gamma\in C^{2}(\mathbb{D})$ on a connected $\mathbb{D}$, transport moves each
  type through regions of differing impatience and $-\partial_t\eta_t$ acquires
  the diffusion generated by $\nabla\gamma$.
\end{remark}

\begin{remark}\label{rem:equity_premium}
  \remarkparagraph{Equity premium decomposition.}
  The volatility decomposition $\vartheta_t=\sigma_t^c-\sigma_t^{-\partial\eta}$
  of Remark~\ref{rem:consumption_wealth_ratio}, together with the
  no-arbitrage relation~\eqref{E:wviability}, yields an explicit
  decomposition of the equity premium of any traded asset (or portfolio)
  with price process~$P^\Sigma_t$, rate of return~$b_t^\Sigma$,
  dividend yield~$\delta_t^\Sigma$, and volatility vector~$\sigma_t^\Sigma$:
  \begin{equation}\label{eq:equity_premium_decomposition}
    b_t^\Sigma+\delta_t^\Sigma-r_t
    \;=\;
    \underbrace{(\sigma_t^c)^\intercal\,\sigma_t^\Sigma\,}_
    {\displaystyle\text{consumption risk premium}}
    \;-\;
    \underbrace{(\sigma_t^{-\partial\eta})^\intercal\,\sigma_t^\Sigma\,}_
    {\displaystyle\text{impatience risk premium}}
  \end{equation}
  The first term is the classical Consumption-CAPM premium
  \citep[see][]{Breeden79}: the covariance of aggregate
  consumption growth with asset returns.  The second term is the
  \emph{impatience risk premium}: the covariance of the growth of the
  Duesenberry loading~$-\partial_t\eta_t$ with asset returns.

  \remarkparagraph{From market volatility to aggregate wealth volatility.}
  A convenient way to read~\eqref{eq:equity_premium_decomposition} is that equity premia are governed by an \emph{inner product of volatilities}.
  In fact, as a consequence of Equation~\eqref{eq:price_discount}, the market price of risk
  $\vartheta_t=\sigma^W_t$, where $\sigma^W_t$ is the volatility of total wealth
  (physical wealth plus labor wealth) $P_t^W$, and therefore
  Equation~\eqref{eq:equity_premium_decomposition} can be rewritten as
  \[
    b_t^\Sigma+\delta_t^\Sigma-r_t=(\sigma_t^W)^\intercal\sigma_t^{\Sigma},
  \]
  where $\sigma_t^{\Sigma}$ denotes the market (equity) volatility.
  In particular, the proxy $\sigma^W \approx \sigma^{\Sigma}$ would predict
  an equity premium of order~$(\sigma^{\Sigma})^2$.
  Table~\ref{tab:ep_comparison} collects long-run U.S.\ estimates
  from three independent sources and compares the observed equity premium
  with the value $\widehat{\mathrm{EP}}\coloneqq(\sigma^{\Sigma})^2$
  predicted by this proxy.

  \begin{table}[ht]
    \centering
    \caption{Observed versus predicted equity premium under the proxy
      $\sigma^W\!\approx\sigma^{\Sigma}$: long-run U.S.\ estimates.}\label{tab:ep_comparison}
    \smallskip
    \begin{tabular}{llcccc}
      \hline\hline                                           \\[-8pt]
      Source & Period
             & $\sigma^{\Sigma}$
             & $\widehat{\mathrm{EP}}=(\sigma^{\Sigma})^{2}$
             & $\mathrm{EP}$
             & $\vartheta$                                   \\[2pt]
      \hline                                                 \\[-6pt]
      \citet{Mehra2003}
             & 1889--2000
             & $0.186$
             & $3.5\%$
             & $6.9\%$
             & $0.37$                                        \\[4pt]
      \citet{Jordaetal2019}
             & 1870--2015
             & $0.192$
             & $3.7\%$
             & $6.94\%$
             & $0.36$                                        \\[4pt]
      \citet{Jordaetal2019}
             & 1963--2015
             & $0.163$
             & $2.7\%$
             & $5.69\%$
             & $0.35$                                        \\[4pt]
      \citet{Campbell1999}
             & 1891--1995
             & $0.186$
             & $3.5\%$
             & $4.74\%$
             & $0.25$                                        \\[4pt]
      \citet{Campbell1999}
             & 1947--1996
             & $0.155$
             & $2.4\%$
             & $6.78\%$
             & $0.44$                                        \\[4pt]
      \citet{Campbell1999}
             & 1970--1996
             & $0.174$
             & $3.0\%$
             & $4.54\%$
             & $0.26$                                        \\[4pt]
      \hline\hline
    \end{tabular}

    \smallskip
    {\footnotesize
      \textit{Notes.}
      $\sigma^{\Sigma}$ is the annualized standard deviation of real equity
      returns.
      $\widehat{\mathrm{EP}}=(\sigma^{\Sigma})^2$ is the equity premium
      predicted by the proxy $\sigma^W\approx\sigma^{\Sigma}$.
      $\mathrm{EP}$ is the observed mean excess return of equities over bills,
      annualized.
      $\vartheta\coloneqq\mathrm{EP}/\sigma^{\Sigma}$ is the implied
      market price of risk, which under the equilibrium coincides with
      $\sigma^W$.
      In every sample $\mathrm{EP}>\widehat{\mathrm{EP}}$, equivalently
      $\vartheta>\sigma^{\Sigma}$.
      Data for \citet{Mehra2003}: Table~1 and Sharpe-ratio
      calculation on p.~60 therein (S\&P~500, 1889--2000).
      \citet{Jordaetal2019}: Tables~3--5 (U.S.\ equity excess returns
      over bills; full and post-WW2 balanced panels).
      \citet{Campbell1999}: Table~2 therein (log returns on a broad
      U.S.\ stock index over short-term government debt).}
  \end{table}

  Our framework does \emph{not} require aggregate wealth to be ``as smooth as equity''.
  The equilibrium identifies $\vartheta_t$ with a
  \emph{volatility-type object} that naturally admits an interpretation as an
  \emph{aggregate wealth volatility}.  As Table~\ref{tab:ep_comparison} confirms,
  empirical estimates of~$\vartheta$ range from
  $0.25$--$0.37$ in long-run U.S.\ data
  \citep{Mehra2003,Jordaetal2019,Campbell1999}
  to roughly~$0.44$ in the post-war quarterly
  sample of \citet{Campbell1999}, implying an unobserved volatility of total wealth of the
  same order; the ``puzzle'' is shifted from
  consumption volatility to the volatility of (partly unobserved) aggregate wealth.

  \smallskip
  \remarkparagraph{Connection to credit conditions and a forthcoming extension.}
  Table~\ref{tab:ep_comparison} reveals that
  $\vartheta=\sigma^W$ consistently exceeds $\sigma^{\Sigma}$ in every
  sample, implying that the volatility of aggregate wealth is
  substantially larger than market volatility alone.
  A natural explanation arises from the composition of total wealth $P_t^W=P_t+P_t^L$: the present value of future labor income~$P_t^L$ is discounted, in the model, at the market interest rate~$r_t$.  Moreover, the \emph{effective} rates at which households
  actually discount future consumption streams---consumer credit,
  credit-card, and unsecured borrowing rates---are both
  significantly higher  and considerably more volatile than the risk-free rate.
  Because $P_t^L$ typically accounts for two-thirds or more of
  aggregate wealth \citep{LSV2013}, even moderate fluctuations in these consumer-relevant discount rates translate into large swings in
  $P_t^W$, generating an aggregate wealth volatility $\sigma^W$ well
  in excess of equity volatility $\sigma^{\Sigma}$.

  The above remark suggests an extension---\emph{Duesenberry equilibria with endogenous borrowing premia}---in which the relevant discounting of
  labor cash-flows is performed at a \emph{household borrowing rate} (or, more generally, a
  state-dependent effective rate generated by a hazard/intensity mechanism) rather than at $r_t$.
  However, as shown in our related analysis \citep[see][]{risks13050088},
  introducing a state-dependent adjustment to the effective borrowing rate---interpretable as a hazard intensity $\lambda(\varphi_t(x))$ reflecting default
  and liquidity risks---does not modify the aggregate pricing kernel $H_t$ nor
  the optimal consumption path (expressed relative to the relevant present value).
  It only re-evaluates the collateralizable present value of labor income entering the borrowing constraint. Hence, large fluctuations in total wealth arise from
  state-dependent discounting of future labor flows, without altering aggregate
  risk pricing or consumption dynamics. We develop this extension in a forthcoming work \citep[see][]{Londono2026w}.

  \remarkparagraph{Interest rate decomposition.}
  A parallel decomposition holds for the short rate.  Applying It\^{o}'s
  Lemma to~$H_t=(-\partial_t\eta_t)/c_t$ and writing
  $dc_t/c_t=\mu_t^c\,dt+(\sigma_t^c)^\intercal dW_t$ and
  $d(-\partial_t\eta_t)/(-\partial_t\eta_t)
    =\mu_t^{-\partial\eta}\,dt+(\sigma_t^{-\partial\eta})^\intercal dW_t$
  for the relative drifts and diffusions of aggregate consumption and the Duesenberry loading, respectively, one obtains
  \begin{equation}\label{eq:interest_rate_decomposition}
    r_t
    \;=\;
    \bigl(\mu_t^c-\mu_t^{-\partial\eta}\bigr)
    \;-\;
    (\sigma_t^c)^\intercal\,\vartheta_t\,.
  \end{equation}
  The first term captures the net drift of aggregate consumption over the Duesenberry loading; the second is a precautionary-savings correction that depends on the
  covariance of consumption growth with the market price
  of risk.
  In the constant-impatience case,
  $\mu_t^{-\partial\eta}=-\gamma$ and~\eqref{eq:interest_rate_decomposition}
  reduces to $r_t=\mu_t^c+\gamma-\|\sigma_t^c\|^2$,
  the standard expression.

  \remarkparagraph{Risk-free rate puzzle.}
  As shown in~\citet[Section~5.1]{Londono2020a}, the constant-impatience
  formula $r=\mu^c+\gamma-\|\sigma^c\|^2$ implies a nominal short rate
  of approximately~$6.73\%$ on post-war U.S.\ data
  (using $\mu^c\approx 5.55\%$, $\gamma\approx 1.2\%$,
  $\|\sigma^c\|\approx 1.24\%$), well above the observed mean of roughly~$4.73\%$
  over the period 1952.I--2011.IV \citep{LSV2013}.  The discrepancy
  persists in real terms: with real consumption growth~$\mu^c\approx 2.31\%$, the
  constant-impatience formula yields a real rate of about~$3.2\%$, whereas
  post-war real short rates have averaged~$1$--$2\%$
  \citep{Campbell1999,Jordaetal2019}.
  This overprediction is the \emph{risk-free rate puzzle} of~\citet{Weil1989}.

  The source of the puzzle is transparent in~\eqref{eq:interest_rate_decomposition}: when the population is
  homogeneous,
  $\vartheta_t=\sigma_t^c$
  and the precautionary-savings correction reduces to
  $(\sigma_t^c)^\intercal\vartheta_t=(\sigma_t^c)^\intercal\sigma_t^c\approx 0.02\%$,
  a negligible reduction.
  In the heterogeneous-impatience economy, however,
  $\vartheta_t=\sigma_t^c-\sigma_t^{-\partial\eta}$
  is the \emph{full} market price of risk
  (Remark~\ref{rem:consumption_wealth_ratio}), so the correction becomes
  \[
    (\sigma_t^c)^\intercal\vartheta_t
    \;=\;
    \|\sigma_t^c\|^2
    \;-\;
    (\sigma_t^c)^\intercal\sigma_t^{-\partial\eta},
  \]
  which is substantially amplified whenever $(\sigma_t^c)^\intercal\sigma_t^{-\partial\eta}<0$, that is, when consumption growth and the Duesenberry loading co-vary negatively.
  Using the empirical consumption risk premium estimate
  $(\sigma^c)^\intercal\vartheta\approx 2.38\%$ of~\citet{LSV2013} and retaining
  $\mu^{-\partial\eta}\approx -\gamma$ as the leading-order drift,
  Equation~\eqref{eq:interest_rate_decomposition} gives
  \[
    r
    \;\approx\;
    \underbrace{(\mu^c+\gamma)}_{\approx\,6.75\%}
    \;-\;
    \underbrace{(\sigma^c)^\intercal\vartheta}_{\approx\,2.38\%}
    \;\approx\; 4.37 \%,
  \]
  broadly consistent with the observed nominal rate of~$4.73\%$ (our estimation).
  In real terms, the same calculation yields
  $r_{\mathrm{real}}\approx 2\%+1.2\%-2.2\%=1.0\%$,
  within the range of~$0.8$--$2\%$ reported
  for the post-war U.S.\ real short rate
  \citep{Campbell1999,Jordaetal2019}.
  Thus, the amplification of the precautionary channel through
  heterogeneous impatience---the same mechanism that resolves the equity premium puzzle---simultaneously resolves the risk-free rate puzzle
  without requiring an implausibly low or negative rate of time preference.
\end{remark}

\begin{remark}\label{rem:comparison_literature}
  \remarkparagraph{Comparison with competing approaches to the equity premium puzzle.}
  The decomposition~\eqref{eq:equity_premium_decomposition}
  and the calibration exercise above rely exclusively on market observables:
  the Sharpe ratio~$\|\vartheta\|$, aggregate consumption
  volatility~$\|\sigma^c\|$, and the consumption risk
  premium~$(\sigma^c)^\intercal\vartheta$, and the equity volatility $\sigma^{\Sigma}$.  No preference parameter
  needs to be calibrated beyond the requirement that the cross sectional impatience distribution be non-degenerate.
  This stands in sharp contrast to the leading approaches in the
  literature, each of which introduces at least one latent variable
  or a poorly identified parameter to generate the observed equity premium.

  \emph{(i)} Representative-agent CRRA models
  \citep{Jordaetal2019,Campbell1999}
  require a coefficient of relative risk aversion for equities of at
  least~$37$--$48$ (on the post-WW2 U.S.\ sample, depending on sample and data source),
  far above the range considered economically plausible; our
  decomposition nests this case with
  $\sigma^{-\partial\eta}=0$, so the failure of the CRRA model
  is precisely the absence of the impatience risk premium.
  \emph{(ii)} The external habit model
  of~\citet{CampbellCochrane1999ForceHabit} matches aggregate
  moments but relies on a reverse-engineered sensitivity
  function~$\lambda(S_t)$, entailing anomalous welfare
  properties:~\citet{LjungqvistUhlig2009} demonstrate that destroying part of the endowment can raise welfare under the Campbell--Cochrane
  specification.
  \emph{(iii)} The long-run risk model
  \citep{BansalYaron2004LongRunRisk};
  \citep{Epstein_and_Zin89} rests on a latent persistent
  component in consumption growth whose implied predictability
  is not found in the data (\citep{BeelerCampbell2012}),
  a disputed elasticity of intertemporal substitution
  \citep{Hall1988,Yogo2004}, and an implausibly strong
  preference for early resolution of uncertainty
  \citep{EpsteinFarhiStrzalecki2014}.
  \emph{(iv)} The incomplete-markets heterogeneity
  of~\citet{ConstantinidesDuffie1996} is an existence result
  whose quantitative assessment requires countercyclically
  heteroscedastic idiosyncratic shocks of debated magnitude
  \citep{Kocherlakota1996}).
  \citep{BravConstantinidesGeczy2002} show that the channel can
  account for the equity premium when household-level consumption
  data are used, but this requires micro-level consumption panels
  that are unavailable in most countries and do not settle the
  question at the aggregate level.

  In each case, the additional state variable is either latent,
  governed by poorly identified parameters, or tied to
  preference specifications with implausible side effects.
  The Duesenberry equilibrium requires none of these ingredients: $\sigma^{-\partial\eta}$ is not a free
  parameter, but an identity implied by the equilibrium
  conditions~\eqref{eq:price_discount}, fully determined once $\vartheta$
  and $\sigma^c$ are measured using market data.
\end{remark}

\begin{remark}\label{rem:population_growth_inflation}
  \remarkparagraph{Inflation and the nominal short rate.}
  The left-hand side of Equation~\eqref{eq:price_discount}, combined with the dynamics
  \[dH_t=-r_tH_t\,dt-\vartheta_t^\intercal H_t\,dW_t,
  \]
  implies that if \emph{inflation does not alter the impatience parameter of the population},
  then inflation is transmitted \emph{linearly} into the nominal short rate in the long run (since it only
  enters $H_t$ through aggregate consumption $c_t^\mu=I_t^\mu$).  However, the transmission of inflation to the nominal short rate need not be purely linear, since a pure price-level effect rescales aggregate
  nominal consumption, and inflation might generate short-run distributional
  wedges because nominal labor income adjusts sluggishly
  (relative to prices and aggregate consumption). In the present framework, this sluggishness matters because~\eqref{eq:price_discount} implies that inflation can affect not only the denominator $c_t$ but also the numerator $-\partial_t\eta_t$ through changes in effective
  impatience and wealth/income weights.
  Hence, deviations from a one-for-one mapping can be interpreted as equilibrium adjustments in the impatience-adjusted wealth component induced by inflation-driven
  income misalignment.  See~\citet{Fisher1930,Mishkin1992FisherEffect} for the Fisher effect
  and its empirical evidence.  See also~\citet{Auclert2019RedistributionChannel,DoepkeSchneider2006JPE} for literature
  studying the effect of inflation on the distribution of wealth
  (see also~\citet{ErosaVentura2002InflationTax} for an equilibrium analysis of inflation
  as a regressive consumption tax), and
  see~\citet{MianStraubSufi2020SavingGlutRich,Rannenberg2023InequalityNaturalRateDebt,Lukasz2019}, among others,
  for studies on the effect of wealth distribution on interest rates.

  \remarkparagraph{Population growth.}
  As a consequence of the left-hand side of Equation~\eqref{eq:price_discount},
  population growth affects aggregates through the population structure (hence through $c_t=I_t^\mu$) in a \emph{weighted} manner, reflecting income weights.
  In particular, assuming an absolutely continuous population growth structure
  as in this paper, its aggregation weight by wealth contributes linearly to the
  short-term interest rate and does not alter the volatility of aggregate consumption or the market volatility. See~\citet{CarvalhoFerreroNechio2016,GagnonJohannsenLopezSalido2016,AksoyEtAl2019} for references
  relating demographic changes to changes in the short interest rate,
  and~\citet{EggertssonMehrotraRobbins2019SecularStagnation} for a quantitative
  equilibrium analysis of how population aging depresses the natural rate.

  \remarkparagraph{Multiplicative invariance.}
  Any purely multiplicative scaling of the aggregate income
  process (from inflation or population size) rescales $c_t$ and $P_t^W$ in the same proportion, while leaving $c_t/P_t^W=-\partial_t\log(\eta_t)$
  unchanged.
\end{remark}

\section{Examples}\label{sec:examples}
The purpose of this section is to illustrate two complementary (and essentially equivalent)
ways to \emph{parametrize} tractable short-horizon Duesenberry equilibria under smooth market conditions.

Example~\ref{ex:gamma_constant_negativeL_positiveQ} is most useful when the modeler wishes to treat
the \emph{valuation of human wealth} (the present value of labor)
as the endogenous primitive, whereas wages are a derived quantity.

Example~\ref{ex:example_explicit_labor} reverses the construction and is
tailored to situations where the \emph{labor income process} is
the endogenous primitive.

We  assume the following for this section:
Assume  $\mathbb{D}\subset\mathbb{R}^d$ and let $(\varphi_{t})_{t\ge0}$ be a common Brownian flow of $C^2$-diffeomorphisms
generated by
\[
  d\varphi_{t}(x)={\rho}(\varphi_{t}(x))\,dt+\varrho(\varphi_{t}(x))\,dW_t\qquad \varphi_0(x)=x,
\]
where $W$ is an $n$-dimensional Brownian motion and the coefficients ensure a non-explosive strong solution
and Kunita-type regularity.   Assume a population  structure $\boldsymbol{\mu}$
with population growth rate $h$, and assume an isoelastic consistent  preference structure for a
population $U$ with effective impatience field  $\gamma(\varphi_t(x)(\omega))$
of $U_t(\omega)(x)$, where  $\gamma(\cdot)$ is a function of
type $C^{2}(\mathbb{D})$, satisfying
Assumption~\ref{ass:population_preference_structure}. Assume a positive income function $I\in C^2(\mathbb{D})$, and an initial total wealth $y$ of type $C^2$ with  $y(x)>0$
for all $x$ satisfying Equation~\eqref{eq:budget_existence}.
Define the type-dependent income and aggregate income as
\[
  I_t(x)\coloneqq \Lambda_t(x)I(\varphi_{t}(x)),\qquad I_t^\mu=\int_{\mathbb{D}}I_t(x)\,d\mu(x)
\]
where the population weights $\Lambda_{s,t}(x)$ are  as in Definition~\ref{def:aggregation} and  the (finite-variation) kernel  are given by
\[
  \Lambda_t(x)=\Lambda_{0,t}(x)=\exp\!\Big(\int_0^t h(\varphi_u(x))\,du\Big),\, Z_t(x)=\exp\!\Big(-\int_0^t \gamma(\varphi_u(x))\,du\Big)\,y(x),
\]
and define the
population-weighted effective aggregate process $\eta_t$ and the aggregate Duesenberry loading $-\partial_t \eta_t$ as defined in Theorem~\ref{thm:local_equilibrium}.

\begin{example}
  \label{ex:gamma_constant_negativeL_positiveQ}
  Assume the conditions at the beginning of this section (Section~\ref{sec:examples}).
  Moreover, let $B_t(x)$ be a bounded-variation, absolutely continuous
  process in $t$ with continuous derivative $\partial_t B_t(x)$ with:
  \[
    dB_t(x)=\partial_t B_t(x)\,dt,\qquad B_t(x)\ge 0, \qquad   \partial_t B_t(x)\leq 0\qquad \text{for all }x\in\mathbb{D},
  \]
  with
  \begin{equation}\label{eq:integrability_B}
    \int_{\mathbb{D}}B_0(x)y(x)\,d\mu(x)<\int_{\mathbb{D}}y(x)\,d\mu(x)
  \end{equation}
  Also, define  the  pricing kernel $H_t>0$,  by Equation~\eqref{eq:state_price2_existence}, where
  \[
    H_t\coloneqq\frac{-\partial_t \eta_t}{I_t^\mu},\qquad
    \frac{dH_t}{H_t}=-r_t\,dt-\vartheta^\intercal_t\,dW_t,
    \qquad \vartheta_t\in\mathbb{R}^n.
  \]
  Define
  \[
    L_t(x) = \frac{\eta_tB_t(x)I^\mu_t}{\partial_t \eta_t}\frac{y(x)}{\int_{\mathbb{D}}y(x)\,d\mu(x)},
  \]
  where $L_t(x)\leq 0$ for all $x\in\mathbb{D}$, and $0\leq t<\infty$.  Define,

  \begin{multline}\label{eq:example_labor_gamma_constant}
    Q_t(x)
    =\frac{y(x)}{\int_{\mathbb{D}}y(x)\,d\mu(x)}\frac{I_t^\mu}{(\partial_t\eta_t)}\partial_t\Big(\eta_tB_t(x)\Big)=\\
    \frac{y(x)I_t^\mu}{(-\partial_t \eta_t)\,\int_{\mathbb{D}}y(x)\,d\mu(x)}
    \Big(
    B_t(x)(-\partial_t \eta_t)-\eta_t\partial_t B_t(x)
    \Big)\ge 0.
  \end{multline}

  A straightforward computation shows under Assumption~\ref{ass:population_preference_structure},
  that for all $x\in\mathbb{D}$
  \[
    \big|H_tL_t(x)\big|\leq B_0(x)y(x),\qquad \int_0^{\infty}H_tQ_t(x)\,dt=y(x)B_0(x)
  \]
  implying that  $L_t(x)$ is a subsistence random field and $Q_t(x)$ is a rate of endowment structure as
  in Definition~\ref{def:endowment}.  Moreover,
  $H_tL_t(x)-\int_0^t H_sQ_s(x)\,ds$ is constant and therefore a martingale.

  Also,  define
  \[
    P_t^\mu=L_t^\mu+\frac{\eta_t}{H_t}=\frac{\eta_tI^\mu_t}{\partial_t\eta_t}\frac{\int_{\mathbb{D}}B_t(x)y(x)\,d\mu(x)}{\int_{\mathbb{D}}y(x)\,d\mu(x)}-\frac{\eta_tI_t^\mu}{\partial_t\eta_t}> 0
  \]
  for all $t\geq 0$. If we define
  \begin{equation}\label{eq:volatilities_example}
    \Sigma_t^{I}
    \coloneqq
    \int_{\mathbb{D}}
    \Lambda_t(x)\,(\nabla I\,\varrho)(\varphi_t(x))\,d\mu(x),\qquad \Sigma_t^{\gamma}\coloneqq \int_{\mathbb{D}}
    Z_t(x)\,(\nabla\gamma\,\varrho)(\varphi_t(x))\,d\mu(x),
  \end{equation}
  by It\^o's Lemma it follows that the volatility $\sigma_t$ of the price process $P^\mu_t$ and the volatility $\vartheta_t$ of
  the state-price process $H_t$ are identical:
  \begin{equation}\label{eq:volatilities_example1}
    \sigma_t=\vartheta_t
    =
    \frac{\Sigma_t^{I}}{I_t^\mu}+\frac{\Sigma_t^{\gamma}}{\partial_t\eta_t}.
  \end{equation}
  If $\sigma_t$ is non-degenerate for all $t$,
  then the market satisfies Equation~\eqref{eq:smooth_market2}, and the existence
  construction of Theorem~\ref{thm:local_equilibrium} applies.  A simple condition
  which guarantees non-degeneracy for $n\geq 2$ is the existence of nonzero vectors $u,v\in\mathbb{R}^n$
  with
  \begin{equation}\label{eq:example_smc}
    u^\intercal \Sigma_t^{\gamma}=0,\qquad v^\intercal \Sigma_t^{I}=0, \qquad u^\intercal \Sigma_t^{I}+v^\intercal \Sigma_t^{\gamma}\neq 0,
  \end{equation}
  almost surely. Using Equation~\eqref{eq:volatilities_example}, this latter condition can be interpreted
  as requiring that income risk and impatience risk enter (at least partially) through distinct Brownian directions, so that the
  two components cannot cancel identically in \eqref{eq:volatilities_example1}.
\end{example}

\begin{remark}
  \label{rem:rentier}
  A particular subcase of Example~\ref{ex:gamma_constant_negativeL_positiveQ} is an economy with no labor income, i.e.,
  $Q_t(x)\equiv 0$ for all $(t,x)$ (i.e., $B_t(x)\equiv 0$, which also forces $L_t(x)\equiv 0$).
  Then, the total income is entirely generated by the dividend and endowment streams, so the aggregate income reduces to
  $I_t^\mu$.
  The state-price density and aggregate price simplify to
  \[
    H_t=\frac{-\partial_t\eta_t}{I_t^\mu},
    \qquad
    P_t^\mu=\frac{\eta_t}{H_t}=I_t^\mu\,\frac{\eta_t}{(-\partial_t\eta_t)}.
  \]
\end{remark}

\begin{example}\label{ex:example_explicit_labor}
  Assume the conditions at the beginning of this section (Section~\ref{sec:examples}).
  Moreover, assume a non-negative continuous random field $\chi_t(x)$ such that for each $x\in\mathbb{D}$
  the map $t\mapsto \chi_t(x)$ is a.s.\ absolutely continuous, with
  $\chi_t(x)\ge0$ and $\partial_t\chi_t(x)\le0$ for $t\ge0,\ x\in\mathbb{D}$.
  Assume that
  \begin{equation}\label{eq:remark_u_integral_less_than_one}
    u_t(x)\coloneqq \chi_t(x)+\frac{\eta_t}{\partial_t\eta_t}\,\partial_t\chi_t(x),
    \qquad\int_{\mathbb{D}}y(x)\,u_t(x)\,d\mu(x)<1,
    \qquad t\ge0
  \end{equation}
  and define
  \[
    Q_t(x)\coloneqq y(x)\,I_t^\mu\,u_t(x),
    \qquad
    L_t(x)\coloneqq \frac{\eta_t}{\partial_t\eta_t}\,I_t^\mu\,y(x)\,\chi_t(x).
  \]
  Since $\partial_t\eta_t<0$ and $\partial_t\chi_t(x)\le0$, we have $u_t(x)\ge\chi_t(x)\ge0$, hence
  $Q_t(x)\ge0$, while $L_t(x)\le0$.

  Also, Assumption~\ref{ass:population_preference_structure} implies
  that for all $x\in\mathbb{D}$
  \[
    \big|H_tL_t(x)\big|\leq \eta_0y(x)\chi_0(x),\qquad \int_0^{\infty}H_tQ_t(x)\,dt=\eta_0y(x)\chi_0(x),
  \]
  implying that  $L_t(x)$ is a subsistence random field and $Q_t(x)$ is a rate of endowment as
  in Definition~\ref{def:endowment}.  Moreover,
  $H_tL_t(x)-\int_0^t H_sQ_s(x)\,ds$ is a constant and therefore a  martingale.
  Define the aggregate price process
  \[
    P_t^\mu\coloneqq L_t^\mu+\frac{\eta_t}{H_t}
    =\frac{\eta_t I_t^\mu}{-\partial_t\eta_t}\Big(1-\int_{\mathbb{D}}y(x)\,\chi_t(x)\,d\mu(x)\Big).
  \]
  Because $u_t(x)\ge\chi_t(x)$, \eqref{eq:remark_u_integral_less_than_one} implies
  $\int_{\mathbb{D}}y(x)\chi_t(x)\,d\mu(x)<1$, hence $P_t^\mu>0$.
  Moreover, since $\chi$ is of finite variation (no Brownian term), the diffusion coefficient of $P^\mu$
  coincides with that computed in Example~\ref{ex:gamma_constant_negativeL_positiveQ}; in particular the smooth market
  condition \eqref{eq:smooth_market2} holds (one may take the scalar multiplier to be equal to $1$).
  Therefore, by Theorem~\ref{thm:local_equilibrium}, the implied economy is a short-horizon Duesenberry equilibrium.

  Alternatively, fix a continuous function $f\colon\mathbb{D}\to[0,\infty)$ and assume a non-negative random field
  $u_t(x)$ such that for each $x$ the map $t\mapsto u_t(x)$ is a.s.\ absolutely continuous and
  \[
    \int_{\mathbb{D}}y(x)\,u_t(x)\,d\mu(x)<1,
    \qquad t\ge0,
  \]
  together with the two inequalities
  \begin{multline}\label{eq:remark_kappa_nonneg_from_u}
    \eta_tu_t(x)+\int_0^t(-\partial_s\eta_s)u_s(x)\,ds =\eta_0u_0(x)+\int_0^t \eta_s\,\partial_s u_s(x)\,ds\ge\\
    \eta_0 f(x)\ge \int_0^t (-\partial_s\eta_s)\,u_s(x)\,ds,
    \qquad t\ge0,\ x\in\mathbb{D},
  \end{multline}
  (For instance, the left-hand side of Equation~\eqref{eq:remark_kappa_nonneg_from_u} holds whenever $u_t(x)$ is non-decreasing in $t$ and
  $u_0(x)\ge f(x)$.)
  Define $\chi_t(x)$ by the relation
  \[
    \chi_t(x){\eta_t}\coloneqq {\eta_0 f(x)+\int_0^t (\partial_s\eta_s)\,u_s(x)\,ds}.
  \]
  Then the right-hand side of~\eqref{eq:remark_kappa_nonneg_from_u} yields $\chi_t(x)\ge0$, and the
  left-hand side of Equation~\eqref{eq:remark_kappa_nonneg_from_u}
  is equivalent (via integration by parts) to $\partial_t\chi_t(x)\le0$; moreover,
  \[
    u_t(x)=\chi_t(x)+\frac{\eta_t}{\partial_t\eta_t}\,\partial_t\chi_t(x).
  \]
  Hence, defining
  \[
    Q_t(x)\coloneqq y(x)\,I_t^\mu\,u_t(x),
    \qquad
    L_t(x)\coloneqq \frac{I_t^\mu}{\partial_t\eta_t}\,\,y(x)\,\Big(\eta_0 f(x)+\int_0^t (\partial_s\eta_s)\,u_s(x)\,ds\Big)
  \]
  we have that $H_tL_t(x)-\int_0^t H_sQ_s(x)\,ds$
  is constant, therefore a martingale,
  and with
  \[
    P_t^\mu\coloneqq L_t^\mu+\frac{\eta_t}{H_t}
    =\frac{\eta_t I_t^\mu}{-\partial_t\eta_t}\Big(1-\int_{\mathbb{D}}y(x)\,\chi_t(x)\,d\mu(x)\Big)>0.
  \]
  As in Example~\ref{ex:gamma_constant_negativeL_positiveQ}  the volatility $\sigma_t$ of the price process $P^\mu_t$ and the volatility of
  the state-price process $\vartheta_t$ are given by Equation~\eqref{eq:volatilities_example1}.  The model likewise satisfies
  the  Smooth Market Condition as long as
  the condition given by Equation~\eqref{eq:example_smc} holds.
  It follows that Theorem~\ref{thm:local_equilibrium} yields
  again a short-horizon Duesenberry equilibrium.
\end{example}

\section{Concluding Remarks}\label{sec:conclusions}
In this paper, we develop a framework for modeling financial markets that can account for the changing behavior of utilities for consumption and investment within infinite populations. We establish explicit sufficient conditions (see Theorem~\ref{thm:local_equilibrium}) on the primitives of the
economy to
guarantee the existence of a short-horizon Duesenberry equilibrium
without assuming market completeness \emph{a priori}.

We conclude that incorporating the evolution of consumer preferences regarding relative income offers a robust theoretical framework to address both the ``equity premium puzzle'' and the ``risk-free rate puzzle.''

Furthermore, this equilibrium framework circumvents common theoretical limitations of classical equilibrium theories and provides a tractable method for pricing assets in incomplete markets driven by Brownian flows.

Three directions for future research are particularly promising.
First, the interest-rate decomposition~\eqref{eq:interest_rate_decomposition}
and its connection to the consumption--wealth ratio suggest a
natural extension to the \emph{term structure of interest rates}:
the Duesenberry loading~$-\partial_t \eta_t$ encodes cross-sectional
impatience information that should propagate into the yield curve,
potentially explaining both the level and the slope of the term
structure through demographic fundamentals rather than latent
factors.
Second, the present framework assumes aggregate income and
dividends as exogenous primitives; embedding the model in a
\emph{production economy}, where these processes arise endogenously
from production technology and labor supply decisions, would
provide a fully general-equilibrium foundation and sharpen the
empirical predictions regarding the co-movement of consumption,
investment, and asset returns.
Third, the credit-conditions discussion in
Remark~\ref{rem:equity_premium} points toward
\emph{Duesenberry equilibria with endogenous borrowing premia},
in which the present value of labor income is discounted at a
state-dependent household borrowing rate---reflecting default and
liquidity risks---rather than at the risk-free rate.  This
extension arises naturally from combining the equilibrium
framework developed here with the life-insurance completeness
and state-dependent hazard-rate machinery
of~\citet{risks13050088}: the latter shows that introducing a
state-dependent effective borrowing rate does not alter the
aggregate pricing kernel, while the former provides the
equilibrium structure in which such rates generate large
fluctuations in total wealth through state-dependent discounting
of future labor flows.  In particular, this approach has the
potential to account for the high levels of aggregate wealth
volatility implied by the equilibrium
(Table~\ref{tab:ep_comparison}), where the market price of risk
$\vartheta=\sigma^W$ consistently exceeds equity
volatility~$\sigma^{\Sigma}$, thereby offering a complete
quantitative resolution of the equity premium puzzle.
We develop this program in~\citet{Londono2026w}.

\appendix

\section{Stochastic Flows and Consistent Processes}\label{appendix:flows}
\label{subsec:stochastic_flows}
Let $(\Omega,\mathcal{F},(\mathcal{F}_t)_{t\ge0},P)$ be the canonical filtered Wiener space, where
$\Omega = C([0,\infty),\mathbb{R}^n)$,
$W(t)(\omega)=\omega(t)$ is the coordinate process,
$P$ is Wiener measure,
$(\mathcal{F}_t)_{t\ge0}$ is the usual augmentation of
$\sigma(W(u):0\le u\le t)$ by null sets,
and $\mathcal{F}=\mathcal{F}_\infty \equiv \sigma(\cup_{t\ge0}\mathcal{F}_t)\cup \mathcal{N}$ where $\mathcal{N}$ is the collection of  $P$-null subsets of $\sigma(\cup_{t\ge0}\mathcal{F}_t)$.

For each $T>0$, let $(\mathcal{F}^T_{s,t})=\left\{\mathcal{F}^T_{s,t}, 0\leq s\leq t \leq  T\right\} $ be the two-parameter filtration where $\mathcal{F}^T_{s,t}$ is the
smallest sub $\sigma$-field containing all null sets and $\sigma(W_s(u)\mid
  s\leq u\leq t)$, where $W_s(u)\equiv W(u)-W(s)$.

Assume  open sets $\mathbb{D}\subset\mathbb{R}^d$,  $\mathbb{D}^{\prime}\subset\mathbb{R}^{d^{\prime}}$ and  for each $T>0$ let $\varphi_{s,t}(x,\omega)$, $0\leq s \leq t\leq T  $, $x\in\mathbb{D}$, be
a $\mathbb{R}^d$-valued  \emph{continuous
  $C^{m,\chi}(\mathbb{D};\mathbb{D}^{\prime})$-semimartingale} with $0\leq\chi\leq 1$.  We remind the reader that $\varphi_{s,t}(x,\omega)$ is a $C^{m,\chi}(\mathbb{D};\mathbb{D}^{\prime})$ process if  $\varphi_{s,t}(x,\cdot)$ is a $\mathcal{F}$ measurable continuous random field (almost everywhere with the exceptional set being in $\mathcal{F}$) taking values in $\mathbb{D}^{\prime}$ which is
$m$-times continuously differentiable with respect to the spatial variable $x$.
Moreover, following~\citet[Section~3.1]{Kunita1990}, it is assumed that for $\omega$ outside the exceptional set, $\partial^{\alpha} \varphi_{s,t}(x,\omega)/\partial x^{\alpha}$ is locally Hölder continuous with exponent $\chi$. Here, $\alpha=(\alpha_1,\cdots,\alpha_d)$ with $\sum_i \alpha_i=m$. Additionally, we say that $\varphi_{s,t}(x)$ is a continuous semimartingale if it can be decomposed as $\varphi_{s,t}(x)\coloneqq  M_{s,t}(x)+B_{s,t}(x)$. In this case, $M_{s,t}(x)$ and $B_{s,t}(x)$ are $C^{m,\chi}(\mathbb{D};\mathbb{D}^{\prime})$ processes. We assume that for $s<T$, $t\to M_{s,t}(x,\cdot )$ is a continuous $(\mathcal{F}_{s,t}^T)$ local-martingale. Additionally, $t\to \partial^{\beta} B_{s,t}(x,\omega)/\partial x^{\beta}$ is a continuous $(\mathcal{F}_{s,t}^T)$ adapted process of bounded variation for $\beta=(\beta_1,\cdots,\beta_d)$ with $\sum_i\beta_i\leq m$. For $\chi=0$, this is a semimartingale of class $C^m$.

Next, we describe several types of consistencies. Assume that for each $s$ and $T>0$, $\psi_{s,t}(x)$, $s\leq t\leq T$, and $x\in\mathbb{D}$ is a continuous $C(\mathbb{D}\colon\mathbb{D}^{\prime})$-semimartingale (adapted to $(\mathcal{F}_{t}^T)$). Also, $\varphi_{s,t}(x)$, $0\leq s\leq t\leq T$, $x\in\mathbb{D}$ is a continuous $C(\mathbb{D}\colon\mathbb{D})$-semimartingale. We say that the family $\psi$ is \emph{$\varphi $-consistent} if for each $T>0$ there exists a set $N_T\in \mathcal{F}_{T}^T$ with $P(N_T)=0$, such that for all $s\leq s^{\prime}\leq t\leq T$
$\omega\notin N_T$
\begin{equation}
  \psi_{s^{\prime},t}(\varphi_{s,s^{\prime}}(x))=\psi_{s,t}(x)\qquad \text{for all }x\in\mathbb{D}.
\end{equation}
In the case $\psi=\varphi$, we say $\varphi$ is consistent.

For consistent processes, we focus on cases where for all $T>0$, the mapping $\varphi_{s,t}(x)$, with $s,t\in [0,T]$, forms a Brownian flow of $C^m$ diffeomorphisms (for some $m\geq 0$). In particular, such flows often result from solutions to non-explosive stochastic differential equations, which we discuss next.

As examples of Brownian flows, assume
that $\rho\colon\mathbb{D} \to\mathbb{R}^d$, and $\varrho\colon\mathbb{D}\to
  L(\mathbb{R}^n;\mathbb{R}^d)$,  are functions of class $C^{m,\chi}$ for an open set $\mathbb{D}$  where
$m=0$ and $\chi=1$ or $m\geq 1$, and $\chi>0$. In this paper, $L(\mathbb{R}^n;\mathbb{R}^d)$ stands for the set of $d\times n$ matrices.   Then it is known that there exists a unique local (maximal) solution  $\varphi_{s,t}(x)$ of
\begin{equation}\label{eq:local_solution}
  d\varphi_{s,t}(x)={\rho}(\varphi_{s,t}(x))\,dt+
  {\varrho}(\varphi_{s,t}(x))\, dW_s(t),\qquad \varphi_{s,s}(x)=x
\end{equation}
For $x\in\mathbb{D}$, $\varrho(x)$ is a continuous function. See~\citet[Theorem 4.7.1, and Theorem 4.7.2]{Kunita1990} for a definition of local solutions and similar results. If $\mathbb{D}=\mathbb{R}^d$ and $\varrho(x)$ is uniformly Lipschitz continuous in $x$, or if the eigenvalues of $\varrho(x)\varrho^{\intercal}(x)$ are uniformly away from zero, then there exists a version of the solution $\varphi_{s,t}(x)$ which is a continuous $C^{m,\chi}(\mathbb{R}^d;\mathbb{R}^d)$-semimartingale \citep[see][Theorem 4.2.7]{Kunita1990}. A similar result
can be obtained if we assume that $\varrho(x)$ is twice
continuously differentiable and
$\sum_k\partial^2\varrho^{i,k}(x)\varrho^{k,i}(x)/\partial x^2_i$ is
bounded for all  $i$ \citep[see][Theorem 4.2.7]{Kunita1990}.  In two of the cases
mentioned above, it is possible to choose a version of $\varphi_{s,t}(x)$  that is a forward Brownian flow.

Examples of families of processes $\psi_{s,t}(x)$, $0\leq s\leq t$, and $x\in \mathbb{D}$,   which are $\varphi$-consistent, are processes $\psi_{s,t}(x)=f(\varphi_{s,t}(x))$ where $f$ is some smooth or continuous function, and $\varphi$ is a consistent semimartingale process.

Moreover, in some applications, it is possible to prove that local solutions can be global (i.e., there is no explosion in finite time).
Some cases of this latter type include those with an explicit solution of the SDE or those where a particular technique applies. Examples include processes studied by~\citet{LondonoSandoval2015} or reducible SDEs~\citep{KloedenPlaten1992}. Most applications in finance involve solutions of SDEs on a subset of $\mathbb{R}^d$. For instance, most price process examples are defined on the positive cone of $\mathbb{R}^d$.

In this paper, we assume that the latter is the case and that
$\varphi_{s,t}(x)$ is the local solution of class $C^{m,\epsilon}$ for any
$0\leq \epsilon<\delta$.  See~\citet[Theorem~4.7.1 and Theorem~4.7.2]{Kunita1990}.  If for any $x\in\mathbb{D}$, $\varphi_{s,t}(x)$ has a
non-explosive solution with values in $\mathbb{D}$, we will say that
$\varphi_{s,t}(x)$, $x\in\mathbb{D}$ is the non-explosive solution to the stochastic
differential equation on $\mathbb{D}$.  Theoretical conditions for
the existence of non-explosive solutions on $\mathbb{R}$ for
Equation~\eqref{eq:local_solution} are given by~\citet[Theorem~4.7.6]{Kunita1990}.  For this latter sufficient condition, let $c(x)=\exp(2\int_{0}^x\rho(y)/\varrho^2(y)\,dy)$, where $\rho(x)$, and $\varrho(x)\neq 0$ are continuous functions.  It follows by~\citet[Theorem 4.7.6]{Kunita1990} that a sufficient condition for non-explosion on the solutions of Equation~\eqref{eq:local_solution} is that $K(\infty)=K(-\infty)=\infty$, where $K(x)=\int_0^x2c^{-1}(z)\int_0^zc(y)/\varrho^2(y)\,dy\,dz$.

Without loss of generality, we can assume that $\varphi_{s,t}(x)(\omega)=\varphi_{0,t-s}(x)(\theta_s(\omega))$ for $\omega\in \Omega$
where  $\theta\colon[0,\infty)\times\Omega\to\Omega$ is  the
$P$-preserving flow on $\Omega$ defined by $\theta(t, \omega)\equiv
  W_{t+\cdot}-W_t$.
The latter approach fits in the framework of the theory of Random Dynamical Systems introduced by L. Arnold and his school \citep[see][]{Arnold98},
where  $(\varphi,\theta)$ is a perfect cocycle.  More details can be found in
\citet[Theorem 2.1]{Salah_Eldin_Scheutzow1999}.

In this paper, we assume an open set $\mathbb{D}\subset \mathbb{R}^d$ and assume that $\varphi$ is a consistent process that, for the sake of concreteness, is assumed to be a $d$-dimensional temporally homogeneous It\^o process $\varphi=(\varphi_{s,t}(x),0\leq s\leq t\leq T,x\in\mathbb{D})$, of two parameters with values in
$C(\mathbb{D}\colon\mathbb{D})$.  We assume that $\varphi$ is the solution of
\begin{multline}\label{eqn:sde_state}
  d\varphi^i_{s,t}(x)=\rho^i(\varphi_{s,t}(x))\,dt+\sum_{1\leq j\leq
    n}\varrho^{i,j}(\varphi_{s,t}(x))\, dW_s^j(t)\\
  \varphi^i_{s,s}(x)=x_i, i=1,\cdots, d,
\end{multline}
where $x=(x_1,\cdots, x_d)^{\intercal}$. Note that the assumptions on $\varphi$
imply that the distributions of $\varphi_{s,t}(x)$ and $\varphi_{0,t-s}(x)$ are identical.

\begin{assumption}[Local well-posedness, non-explosion, and diffeomorphic Brownian flow]\label{ass:Lipschitz}
  Let $\mathbb{D}\subset\mathbb{R}^d$ be open and assume a Brownian flow $\varphi$ on $\mathbb{D}$, which satisfies Equation~\eqref{eq:local_solution}.
  The drift $\rho$ and diffusion coefficient $\varrho$ are \emph{locally Lipschitz} in $x$ on $\mathbb{D}$ and  (jointly) continuous functions,
  with  \emph{non-explosion / no boundary hitting} in finite time.
  Moreover, $\rho$ and $\varrho$ are sufficiently smooth in $x$ (e.g.\ $C^{2,0+}$ with locally Lipschitz derivatives)
  so that the solution map $(s,t,x)\mapsto\varphi_{s,t}(x)$ defines a Brownian flow of $C^{2}$-diffeomorphisms on $\mathbb{D}$,
  as in~\citet{Kunita1990}.
\end{assumption}

We assume that  $\rho^i,\varrho^{i,j}$ for $j=1,\cdots, n$ and $i=1,\cdots,d$   are (jointly) continuous functions which are
locally Lipschitz continuous in the spatial variable, for which global
solutions to the  stochastic differential equation~\eqref{eqn:sde_state} exist.

Let $\varphi_{s,t}(x)$ be a continuous, $\mathbb{D}$-valued Brownian
flow of homeomorphisms as above.

\begin{definition}\label{def:consistent_measure}
  Let $\mathcal{M}(\mathbb{D})$ denote the set of Borel measures on $\mathbb{D}$.
  For $\mu\in\mathcal{M}(\mathbb{D})$ and $0\le s\le t$, we write
  $\varphi_{s,t}(\mu)(\omega)$ for the image of $\mu$ under the map
  $\varphi_{s,t}(\cdot)(\omega)$, namely
  \[
    \varphi_{s,t}(\mu)(F)
    =
    \mu\big(\varphi_{s,t}^{-1}(F)\big),
    \qquad F\in\mathcal{B}(\mathbb{D}).
  \]
\end{definition}

\begin{remark}\label{rem:manifold_extension}
  Throughout the paper, we present the model with types in an open set $\mathbb{D}\subset\mathbb{R}^d$, following the classical formulation of stochastic flows and their induced transport of measures.  This choice is only for
  notational convenience.  All definitions and results extend verbatim when the
  type space is a smooth manifold $M$ and the common-noise-driven mobility is generated by a $C^2$ stochastic flow of diffeomorphisms
  $(\varphi_{s,t})_{0\le s\le t}$ on $M$.

\end{remark}

\section{Aggregation and Semimartingale Properties}
\label{appendix:aggregation}

This appendix establishes the semimartingale properties of population-weighted
aggregates that underpin the equilibrium analysis in the main text. We first
present a general result on aggregated It\^{o} processes
(Proposition~\ref{prop:semimartingaleProperty}), then develop the cocycle
structure of population weights (Lemma~\ref{lem:aggregation_cocycle_app}),
and finally specialize to discrete and continuous populations
(Propositions~\ref{prop:disc_weighted_app} and~\ref{prop:cont_app}).

In this paper, we assume two types of population structures: discrete (Assumption~\ref{ass:discrete_app}) and continuous (Assumption~\ref{ass:SF_app}).
Both cases admit social mobility, immigration, emigration, or extinction of
families within types. Under either assumption, the population aggregate is a
continuous semimartingale (Corollary~\ref{cor:aggregate_semimartingale_app}).

\subsection{Semimartingale property of aggregated processes}
In this subsection we discuss some properties of the aggregation of processes of the form $\int_{\mathbb{D}}f(t,\varphi_{s,t}(x))\, d\mu(x)$ where $\varphi$ is a consistent semimartingale process,  and $\mu$ is a probability distribution on $\mathbb{D}$.
Such processes arise naturally in the context of modeling infinite populations, where each individual is represented by a point $x\in\mathbb{D}$, and $\varphi_{s,t}(x)$ represents the stochastic behavior of the individual $x$ at time $t$ given that at time $s$ the state of this individual was $x$.   The function $f(t,x)$ represents the relationship between the state of the individual at time $t$ and some variable of interest.  Finally, the measure $\mu$ represents the initial distribution of individuals at time $s$.   In this section, we provide conditions under which such aggregated processes  are semimartingales and provide their stochastic representations.

\begin{proposition}\label{prop:semimartingaleProperty}
  Let $f\colon [0,\infty)\times\mathbb{D}\to \mathbb{R}$ be
  a continuous function with continuous partial derivatives
  $\partial^2 f/\partial x_i\partial x_j$ and $\partial f/\partial t$,
  for $i,j=1,\dots,d$ (that is, $f\in C^{1,2}([0,\infty)\times\mathbb{D})$),
  where $\mathbb{D}\subset\mathbb{R}^d$ is an open subset.
  Suppose that $\varphi$ is a consistent semimartingale process with values in $\mathbb{D}$ and stochastic representation~\eqref{eq:local_solution}, where $\rho$ and $\varrho$ are functions of class $C^{m,\chi}$ with
  $m=0$ and $\chi=1$ or $m\geq 1$, and $\chi>0$.
  Assume that $\mu$ is a probability distribution on $\mathbb{D}$ with continuous density $g(x)$.
  Define
  \[
    a^{ij}(x)\coloneqq \sum_{k=1}^n \varrho^{i,k}(x)\varrho^{j,k}(x),
    \qquad i,j=1,\dots,d.
  \]
  Assume that for any $s<t<\infty$:

  \begin{enumerate}
    \item $f(t,\cdot)$ is $\mu$-integrable for any $t$.
    \item $\partial f/\partial t$, $\sum_i \rho^i\, \partial f/\partial x_i$,
          $\sum_i \varrho^{i,k}\, \partial f/\partial x_i$ for every $k=1,\dots,n$, and
          $\sum_{i,j} a^{ij}\,\partial^2 f/\partial x_i\partial x_j$
          are Lipschitz continuous.
    \item $\E\!\left[\int_{\mathbb{D}}|\varphi^i_{s,t}(x)|\,d\mu(x)\right]<\infty$ for $i=1,\dots,d$.
  \end{enumerate}
  Then,  for each fixed $s$ the process $\int_{\mathbb{D}}f(t,\varphi_{s,t}(x))\, d\mu(x)$ is a continuous $(\mathcal{F}^T_{s,t})_{t\in[s,T]}$ semimartingale process, $P$-a.s., and
  \begin{multline}\label{eq:aggregate_representation}
    \int_{\mathbb{D}}f(t,\varphi_{s,t}(x))\, d\mu(x)=
    \int_{\mathbb{D}}f(s,x)g(x)\, dx\\
    +\int_s^t\int_{\mathbb{D}}\frac{\partial }{\partial t}f(u,\varphi_{s,u}(x))g(x)\, dx\, du\\
    +\int_s^t\sum_i\int_{\mathbb{D}}\frac{\partial }{\partial x_i}f(u,\varphi_{s,u}(x))
    \rho^i(\varphi_{s,u}(x)) g(x)\, dx\, du\\
    +\frac{1}{2}\int_s^t\sum_{i,j}\int_{\mathbb{D}}
    \frac{\partial^2 }{\partial x_i\partial x_j}f(u,\varphi_{s,u}(x))
    a^{ij}(\varphi_{s,u}(x)) g(x)\, dx\, du\\
    +\sum_k \int_{s}^t\sum_{i}\int_{\mathbb{D}}
    \frac{\partial f}{\partial x_i}(u,\varphi_{s,u}(x))
    \varrho^{i,k}(\varphi_{s,u}(x))g(x)\, dx\,dW_s^k(u).
  \end{multline}
\end{proposition}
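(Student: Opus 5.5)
The strategy is to apply It\^o's formula typewise, for each fixed $x\in\mathbb{D}$, to $u\mapsto f(u,\varphi_{s,u}(x))$, and then integrate the resulting identity against $\mu(dx)=g(x)\,dx$. The exchange of the $dx$-integral with the Lebesgue $du$-integrals is justified by Fubini, and the exchange with the stochastic $dW_s^k(u)$-integrals by a stochastic Fubini theorem. The flow $\varphi_{s,\cdot}(x)$ solves \eqref{eq:local_solution}, and by Assumption~\ref{ass:Lipschitz} (Kunita's theory) it admits a version that is jointly continuous in $(u,x)$ and hence jointly measurable in $(u,x,\omega)$. Since $f\in C^{1,2}$, It\^o's formula gives, for every $x$ and all $t\in[s,T]$,
\[
f(t,\varphi_{s,t}(x))=f(s,x)+\int_s^t\Big(\partial_t f+\sum_i\rho^i\partial_{x_i}f+\tfrac12\sum_{i,j}a^{ij}\partial^2_{x_ix_j}f\Big)(u,\varphi_{s,u}(x))\,du+\sum_k\int_s^t\sum_i(\varrho^{i,k}\partial_{x_i}f)(u,\varphi_{s,u}(x))\,dW_s^k(u).
\]
We then integrate this identity in $x$ against $g(x)\,dx$; this is exactly \eqref{eq:aggregate_representation} once the interchanges are justified.

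The integrability comes from hypotheses (ii) and (iii). Every integrand in the display is of the form $F(u,\varphi_{s,u}(x))$ with $F$ Lipschitz (in the spatial variable, locally uniformly in $u\in[s,T]$). Hence $|F(u,z)|\le C_T(1+|z|)$, and so
\[
\E\int_s^T\!\!\int_{\mathbb{D}}|F(u,\varphi_{s,u}(x))|\,g(x)\,dx\,du\le C_T\Big(T+\int_s^T\E\!\int_{\mathbb{D}}|\varphi_{s,u}(x)|\,d\mu(x)\,du\Big).
\]
This is finite provided the moment bound in (iii) is locally uniform in $u$. To obtain that, we would use Gronwall on $\E|\varphi_{s,u}(x)|$ under the linear growth implied by Lipschitz coefficients, giving $\E|\varphi_{s,u}(x)|\le C(1+|x|)$, and then integrate in $\mu$; if this is too strong, we read (iii) as holding for all $t$ and use monotone bounds. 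This bound makes the drift terms amenable to classical Fubini, producing the three $du$-integrals in \eqref{eq:aggregate_representation}. Hypothesis (i) handles the initial term $\int f(s,x)g(x)\,dx$.

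\textbf{Main obstacle: the stochastic Fubini step.} We need
\[
\int_{\mathbb{D}}\int_s^t G^k(u,x)\,dW_s^k(u)\,g(x)\,dx=\int_s^t\int_{\mathbb{D}}G^k(u,x)g(x)\,dx\,dW_s^k(u),
\]
where $G^k=\sum_i\varrho^{i,k}\partial_{x_i}f$ is evaluated along the flow. The standard sufficient condition (e.g.\ Protter's or Veraar's stochastic Fubini theorem) is
\[
\int_{\mathbb{D}}\Big(\E\int_s^T|G^k(u,x)|^2du\Big)^{1/2}g(x)\,dx<\infty.
\]
Linear growth of $G^k$ reduces this to second moments $\E|\varphi_{s,u}(x)|^2\le C(1+|x|^2)$, which hold under Lipschitz coefficients. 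However, hypothesis (iii) only gives first moments in $\mu$, and a square root of a second moment is not controlled by that. We would therefore first localize: restrict to $x$ in compacts $K_m\uparrow\mathbb{D}$ and stop at times $\tau_m$ at which $\sup_{x\in K_m}|\varphi_{s,u}(x)|$ exceeds $m$. On these pieces the integrands are bounded and stochastic Fubini applies. We then pass $m\to\infty$: the left-hand side converges by dominated convergence using (i) and (iii), and the right-hand side converges in probability (locally uniformly in $t$) by the dominated convergence theorem for stochastic integrals, with dominating process $\int_{\mathbb{D}}C(1+|\varphi_{s,u}(x)|)\,g(x)\,dx$, which is finite and locally integrable in $u$ a.s.

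Finally, continuity of the aggregate and the semimartingale property follow directly: the right-hand side of \eqref{eq:aggregate_representation} is the sum of a continuous finite-variation process and a continuous $(\mathcal{F}^T_{s,t})$ local martingale. Adaptedness to $(\mathcal{F}^T_{s,t})$ holds because $\varphi_{s,u}$ is $\mathcal{F}^T_{s,u}$-measurable.
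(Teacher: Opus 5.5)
Your argument is correct at the paper's level of rigor, but it justifies the key interchange differently.

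\textbf{The paper's route.} It never cites a stochastic Fubini theorem. For $f$ with compact spatial support, it replaces $\int f(t,\varphi_{s,t}(x))g(x)\,dx$ by Riemann sums $\sum_{x\in P_r}f(t,\varphi_{s,t}(x^{\star}))g(x^{\star})m([x,x^+))$. It applies It\^o's formula to these finite sums, where exchanging sum and integrals is trivial, and lets $|P_r|\to0$ using the classical and stochastic dominated convergence theorems. It then removes the support restriction via smooth cutoffs $f_r=h_rf$ in the state variable, again by dominated convergence.

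\textbf{Your route.} You apply It\^o to all types at once and invoke Fubini and stochastic Fubini, localizing in the initial type.

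\textbf{What each buys.} The paper's route is self-contained---in effect a hand-made stochastic Fubini for integrands continuous in $x$---and never needs a measurable version of $x\mapsto\int_s^tG^k(u,x)\,dW_s^k(u)$. Yours is more modular. Cutting off initial types is also cleaner than cutting off the argument of $f$, which does not by itself confine $x$ to $\bar U$ as the paper's display suggests.

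\textbf{Shared weak point.} Both proofs rest on the same unproved domination. Hypothesis (iii) is pointwise in $t$ and does not give $du$-integrability. Your Gronwall bound needs global linear growth of $\rho,\varrho$, i.e.\ the setting of Remark~\ref{rem:integrability}. In that setting your second-moment worry is unfounded: $(\E|\varphi_{s,u}(x)|^2)^{1/2}\le C(1+|x|)$, so $\int_{\mathbb{D}}(\E\int_s^T|G^k|^2\,du)^{1/2}\,d\mu<\infty$ follows from a first moment of $\mu$.

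\textbf{Localization fix.} If you do localize, decouple the threshold from the compact, e.g.\ $\tau_{m,N}=\inf\{u\ge s:\sup_{x\in K_m}|\varphi_{s,u}(x)|>N\}\wedge T$ with $N\to\infty$ for fixed $m$. With threshold $m$ as written, $\tau_m$ need not increase to $T$.
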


\begin{proof}
  We introduce some notation.  Let $a=(a_1,\cdots, a_d)<b=(b_1,\cdots, b_d)$, and let   $P_i=\{t^i_0=a_i<t^i_1<,\cdots,<t^i_{p_i}=b_i\}$ for $i=1\cdots d$  be a partition of the hyper-rectangle $[a,b]$.  For each $t=(t^1_{i_1},\cdots,t^d_{i_d})\in P=P_1\times\cdots\times P_d$, with $t<b$
  we define the next corner $t^+=(t^1_{i_1+1},\cdots,t^d_{i_{d}+1})$. We also define a sample for each $t\in P$ to be  a point $t^{\star}\in [t,t^+)$, where for $[t,t^{+})=\{x\in\mathbb{R}^d\colon t\leq x<t^{+}\}$. To each partition $P$, we define the norm or mesh
  \[
    |P|=\max_{t\in P, t<b}{\| t-t^{+}\|}
  \]
  where $\|x\|=\sqrt{x^{\intercal}x}$ denotes the $L^2$ norm of the vector $x\in\mathbb{R}^d$.

  Let $U\subset\mathbb{D}$ be an open set with compact closure $\bar{U}\subset [a,b]$, and assume that $f(t,x)=0$ for $x\notin U$.     It follows from It\^o's Lemma that if $(P_r)$  is a sequence of partitions
  of a hyper-rectangle $[a,b]$, then, denoting by $m(\cdot)$ the Lebesgue measure on $\mathbb{D}$,
  \begin{multline}
    \int_{\mathbb{D}}f(t,\varphi_{s,t}(x))\, d\mu(x)=\\
    \int_{\bar{U}}f(t,\varphi_{s,t}(x))g(x)\, dm=
    \lim_{|P_r|\to 0}\sum_{x\in P_r}f(t,\varphi_{s,t}(x^{\star}))g(x^{\star})m([x,x^+))=\\
    \lim_{|P_r|\to 0}\sum_{x\in P_r}f(t,x^{\star})g(x^{\star})m([x,x^+))+\\
    \lim_{|P_r|\to 0}\int_s^t\sum_{x\in P_r}\frac{\partial }{\partial t}f(u,\varphi_{s,u}(x^{\star}))g(x^{\star})m([x,x^+))\, du+\\
    \lim_{|P_r|\to 0}\int_s^t\sum_i\sum_{x\in P_r}\frac{\partial }{\partial x_i}f(u,\varphi_{s,u}(x^{\star})) \rho^i(\varphi_{s,u}(x^{\star})) g(x^{\star})m([x,x^+))\, du +\\
    \lim_{|P_r|\to 0}\frac{1}{2}\int_s^t\sum_{i,j}\sum_{x\in P_r}\frac{\partial^2 }{\partial x_i\partial x_j}f\bigl(u,\varphi_{s,u}(x^{\star})\bigr)a^{ij}\bigl(\varphi_{s,u}(x^{\star})\bigr) g(x^{\star})m([x,x^+))\, du+\\
    \lim_{|P_r|\to 0}\sum_k \int_{s}^t\sum_{i}\sum_{x\in P_r} \frac{\partial f}{\partial x_i}(u,\varphi_{s,u}(x^{\star}))\varrho^{i,k}(\varphi_{s,u}(x^{\star}))g(x^{\star})m([x,x^+))\,dW_s^k(u).
  \end{multline}
  where the validity of Equation~\eqref{eq:aggregate_representation} is a consequence of the classical dominated convergence theorem (where we use the dominated convergence theorem for stochastic integrals for the last term).  Note that we used the fact  that all integrands are $0$ outside a compact set in the last four terms.
  For the general case, let  ${U}_r$ be a sequence of open sets with compact closure,
  with $\bar{U}_r\subset U_{r+1}$ and $\cup_rU_r=\mathbb{D}$ and
  let $h_r(x)$ be a sequence of $C^{\infty}$ functions such that $h_r(x)=1$ on $\bar{U}_r$ and $h_r(x)=0$ outside of $U_{r+1}$.  Then $f_r(t,x)=h_r(x)f(t,x)$ is a sequence of
  functions with the same degree of smoothness with $f_r(t,x)\to f(t,x)$ as $r\to \infty$ for all $(t,x)\in[0,\infty)\times \mathbb{D}$.
  Consequently, based on the statements proved above and the
  Equation~\eqref{eq:aggregate_representation}, the proposition holds for each $f_r$.
  Finally, we obtain the result using the dominated convergence  theorem (classical and stochastic versions).  We note that conditions~(i)--(iii) imply that a continuous process dominates the terms of the integrals.
\end{proof}

\begin{remark}\label{rem:integrability}
  If $\rho$ and $\varrho$ are uniformly Lipschitz continuous defined on $\mathbb{R}^d$ then there exists $K>0$
  such that $|\E[\varphi_{s,t}(x)-x]|\leq K(1+|x|(t-s))$ for $0\leq s\leq t$ \citep[see][Theorem~4.2.5]{Kunita1990}.
  It follows that under the assumption of Lipschitz continuity for
  $\rho$ and $\varrho$, assuming  $\int_{\mathbb{D}}| x_i| g(x)\, dm<\infty$
  for all $i$ where $d\mu/ dm =g$, implies that condition~(iii) of
  Proposition~\ref{prop:semimartingaleProperty} holds.  The  latter follows using the dominated convergence theorem for stochastic integrals.
\end{remark}

\subsection{Cocycle properties of population weights}

\begin{lemma}[Cocycle and identity properties]
  \label{lem:aggregation_cocycle_app}
  Adopt Definition~\ref{def:aggregation}. The following cocycle and identity
  properties hold $P$-a.s.:

  \begin{enumerate}[label=(\roman*)]
    \item \emph{(Multiplicative weight cocycle.)}
          For all $0\leq r\leq s\leq t$ and $x\in\mathbb{D}$,
          \begin{equation}\label{eq:Lambda_cocycle_app}
            \Lambda_{r,t}(x) = \Lambda_{r,s}(x)\,\Lambda_{s,t}(\varphi_{r,s}(x)).
          \end{equation}

    \item \emph{(Kernel cocycle.)}
          For all $0\leq r\leq s\leq t$ and Borel $B\subset\mathbb{D}$,
          \begin{equation}\label{eq:nu_cocycle_app}
            \nu_{r,t}(x,B) = \int_{\mathbb{D}}\nu_{s,t}(y,B)\,\nu_{r,s}(x,dy),
            \qquad\text{i.e.}\qquad \nu_{r,t} = \nu_{r,s}\circ\nu_{s,t}.
          \end{equation}

    \item \emph{(Measure cocycle.)}
          For any (deterministic or $\mathcal{F}_r$-measurable) finite measure $\mu$ on
          $\mathbb{D}$,
          \begin{equation}\label{eq:mu_cocycle_app}
            \mu_{r,t} = (\mu_{r,s})\nu_{s,t}.
          \end{equation}

    \item \emph{(0-based aggregate identity.)}
          Fix $\mu$ at time $0$ and set $\mu_s\coloneqq \mu_{0,s}$ as in Definition~\ref{def:aggregation}.
          Then for any Borel function $f$ such that the integrals below are finite, and
          any $0\leq s\leq t$,
          \begin{equation}\label{eq:0based_identity_app}
            \int_{\mathbb{D}} f(\varphi_{s,t}(x))\,\Lambda_{s,t}(x)\,\mu_{s}(dx)
            = \int_{\mathbb{D}} f(\varphi_{0,t}(y))\,\Lambda_{0,t}(y)\,\mu(dy).
          \end{equation}
          Equivalently, the population-weighted aggregate computed on $[s,t]$ from the
          shifted population $\mu_{s}$ coincides pathwise with the aggregate computed
          on $[0,t]$ from the initial population $\mu$.
  \end{enumerate}
\end{lemma}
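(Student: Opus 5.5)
The plan is to reduce everything to the flow property of $\varphi$ together with additivity of the time integral, then push this identity through the definitions of $\nu$ and of the action of the kernel on measures. First I fix a null set off which $\varphi$ is a Brownian flow of homeomorphisms, in the sense of Assumption~\ref{ass:Lipschitz}. There, for all $0\le r\le s\le u$ and all $x\in\mathbb{D}$ simultaneously, $\varphi_{s,u}(\varphi_{r,s}(x))=\varphi_{r,u}(x)$. Working pathwise on the complement of this set makes all four statements deterministic identities.

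For (i), I split the exponent as $\int_r^t h(\varphi_{r,u}(x))\,du=\int_r^s h(\varphi_{r,u}(x))\,du+\int_s^t h(\varphi_{r,u}(x))\,du$. In the second integral I replace $\varphi_{r,u}(x)$ by $\varphi_{s,u}(\varphi_{r,s}(x))$, so that it equals the exponent of $\Lambda_{s,t}(\varphi_{r,s}(x))$. Boundedness and continuity of $h$ guarantee that these integrals are finite, and exponentiating gives \eqref{eq:Lambda_cocycle_app}.

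For (ii), I compute the right-hand side directly. Since $\nu_{r,s}(x,dy)=\Lambda_{r,s}(x)\delta_{\varphi_{r,s}(x)}(dy)$, the integral collapses to
\[
\Lambda_{r,s}(x)\,\Lambda_{s,t}(\varphi_{r,s}(x))\,\mathbf 1_B\bigl(\varphi_{s,t}(\varphi_{r,s}(x))\bigr).
\]
By (i) and the flow property this equals $\Lambda_{r,t}(x)\mathbf 1_B(\varphi_{r,t}(x))=\nu_{r,t}(x,B)$.

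For (iii), I unfold $(\mu_{r,s}\nu_{s,t})(B)=\int\nu_{s,t}(y,B)\,\mu_{r,s}(dy)$. Since $\mu_{r,s}$ is itself $\mu\nu_{r,s}$, the standard change of variables for integrals against a kernel-transported measure rewrites this as $\int\!\int \nu_{s,t}(y,B)\,\nu_{r,s}(x,dy)\,\mu(dx)$. Justifying that change of variables is a monotone class argument: it holds for indicators by definition, and it extends to nonnegative measurable functions by monotone convergence. By (ii) the double integral is $\int\nu_{r,t}(x,B)\,\mu(dx)=\mu_{r,t}(B)$. Measurability of $x\mapsto\nu_{s,t}(x,B)$ follows from continuity of $\varphi_{s,t}$ and of $\Lambda_{s,t}$ in $x$. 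The $\mathcal F_r$-measurability of $\mu$ plays no role, since the identity is pathwise.

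For (iv), I apply the same change of variables with $r=0$ to the left-hand side. The integrand $x\mapsto f(\varphi_{s,t}(x))\Lambda_{s,t}(x)$ integrated against $\mu_s=\mu\nu_{0,s}$ becomes
\[
\int f\bigl(\varphi_{s,t}(\varphi_{0,s}(y))\bigr)\,\Lambda_{s,t}(\varphi_{0,s}(y))\,\Lambda_{0,s}(y)\,\mu(dy).
\]
Then (i) and the flow property reduce this to the right-hand side of \eqref{eq:0based_identity_app}. To pass from indicators to general Borel $f$, I treat $f^\pm$ separately by monotone convergence and use the finiteness hypothesis to subtract.

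The main obstacle is the handling of exceptional sets: the flow identity must hold simultaneously for all $r\le s\le t$ and all $x$ off a single null set. I would take this from the perfect-cocycle version of $\varphi$ cited in Appendix~\ref{appendix:flows} (\citealt{Salah_Eldin_Scheutzow1999}). Continuity in $(s,t,x)$ upgrades the rational-time statement to all times. Everything after that is routine.
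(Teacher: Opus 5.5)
Your proposal is correct and follows essentially the same route as the paper: (i) comes from additivity of the time integral plus the flow property $\varphi_{r,u}=\varphi_{s,u}\circ\varphi_{r,s}$, (ii) from collapsing the Dirac kernel $\nu_{r,s}(x,\cdot)$ and invoking (i), (iii) from integrating (ii) against $\mu$, and (iv) is the test-function form of the same cocycle identity. The only differences are that you derive (iv) directly by change of variables and (i) instead of passing through (iii), and that you make explicit the single exceptional null set, the measurability of $x\mapsto\nu_{s,t}(x,B)$, and the monotone-class extension, all of which the paper leaves implicit.
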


\begin{proof}
  Identity~\eqref{eq:Lambda_cocycle_app} follows from additivity of the time integral and the flow property $\varphi_{r,u}=\varphi_{s,u}\circ\varphi_{r,s}$
  for $u\geq s$. Then~\eqref{eq:nu_cocycle_app} is immediate since
  \[
    \nu_{r,t}(x,B) = \Lambda_{r,t}(x)\mathbf{1}_{\{\varphi_{r,t}(x)\in B\}}
    = \Lambda_{r,s}(x)\Lambda_{s,t}(\varphi_{r,s}(x))
    \mathbf{1}_{\{\varphi_{s,t}(\varphi_{r,s}(x))\in B\}},
  \]
  and $\nu_{r,s}(x,\cdot)=\Lambda_{r,s}(x)\delta_{\varphi_{r,s}(x)}(\cdot)$.

  Identity~\eqref{eq:mu_cocycle_app} is obtained by integrating~\eqref{eq:nu_cocycle_app}
  against $\mu$.

  Finally,~\eqref{eq:0based_identity_app} is just~\eqref{eq:mu_cocycle_app} with
  $(r,s,t)=(0,s,t)$ written in test-function form:
  \[
    \int f(z)\,\mu_{t}(dz) = \int f(\varphi_{s,t}(x))\Lambda_{s,t}(x)\,\mu_{s}(dx),
  \]
  and $\int f(z)\,\mu_{t}(dz)=\int f(\varphi_{0,t}(y))\Lambda_{0,t}(y)\,\mu(dy)$
  by definition of $\mu_{t}$.
\end{proof}

\subsection{Discrete populations}

\begin{assumption}[Discrete population and population weights]
  \label{ass:discrete_app}
  Assume a diffeomorphic Brownian flow that satisfies Assumption~\ref{ass:Lipschitz}.  Let the population be discrete,
  \[
    \mu = \sum_{i\in I} w_i\,\delta_{x_i},
    \qquad w_i\ge 0,
    \qquad \sum_{i\in I}w_i < \infty,
  \]
  where $I$ is finite or countable and $x_i\in\mathbb{D}$, so that $\mu$ is a finite
  Borel measure with total mass $f_\mu=\sum_{i\in I}w_i$, as in
  Definition~\ref{def:aggregation}. Let
  $h\colon\mathbb{D}\to\mathbb{R}$ be a continuous $C^{2,0+}$ function and define
  the population weight as in Definition~\ref{def:aggregation}, where
  it is assumed that $\mathcal{P}_{\varphi}$ is the set of finite measures. Assume that for
  each $i\in I$, $t\mapsto\psi_{t}(x_i)$ is a continuous semimartingale with
  decomposition
  \[
    \psi_{t}(x_i) = \psi_{0}(x_i)
    + \int_0^t \zeta_u^\mu a\big(\varphi_{u}(x_i)\big)  \,du
    +  \int_0^t  (\upsilon_u^\mu)^\intercal\mathrm{diag}\big( b(\varphi_{u}(x_i)) \big)\,dW(u),
  \]
  for continuous functions $a$ (real-valued) and $b$ ($\mathbb{R}^n$-valued), consistent with respect to the population
  structure $\boldsymbol{\mu}$, and continuous semimartingale processes $\zeta^\mu$ (real-valued)
  and $\upsilon^\mu$ ($\mathbb{R}^n$-valued), both independent of type.
  Assume  either $I$ is finite, or for every $T>0$,
  \[
    \sum_{i\in I} w_i \int_0^T
    \Big(\big|\Lambda_{u}(x_i)\,\zeta_u^\mu a\big(\varphi_{u}(x_i)\big)\big|
    + \big|\Lambda_{u}(x_i)\,h\big(\varphi_{u}(x_i)\big)\,\psi_{u}(x_i)\big|\Big)\,du
    < \infty,
  \]
  and
  \[
    \sum_{i\in I} w_i\,
    \Big(\int_0^T \Lambda_{u}(x_i)^2\,
    \big\|(\upsilon_u^\mu)^\intercal\mathrm{diag}\big( b(\varphi_{u}(x_i)) \big)\big\|^2\,du\Big)^{1/2} < \infty.
  \]
\end{assumption}

\begin{proposition}[Population-weighted aggregation: discrete population]
  \label{prop:disc_weighted_app}
  Under Assumption~\ref{ass:discrete_app}, define the population-weighted
  aggregate (cf.~Definition~\ref{def:aggregation})
  \[
    \psi^\mu_{t}
    \coloneqq  \int_{\mathbb{D}}\Lambda_{t}(x)\,\psi_{t}(x)\,d\mu(x)
    = \sum_{i\in I} w_i\,\Lambda_{t}(x_i)\,\psi_{t}(x_i).
  \]
  Then, $t\mapsto\psi^\mu_{t}$ is a continuous semimartingale for $0\leq t<\infty$.

  Moreover,
  \begin{align*}
    \psi^\mu_{t}
     & = \psi^\mu_{0}
    + \int_0^t \sum_{i\in I} w_i\Big[
      \Lambda_{u}(x_i)\,\zeta_u^\mu a\big(\varphi_{u}(x_i)\big)
      + \Lambda_{u}(x_i)\,h\big(\varphi_{u}(x_i)\big)\,\psi_{u}(x_i)
    \Big]\,du                               \\
     & \quad + \int_0^t \sum_{i\in I} w_i\,
    \Lambda_{u}(x_i)\,(\upsilon_u^\mu)^\intercal\mathrm{diag}\big( b(\varphi_{u}(x_i)) \big)\,dW(u).
  \end{align*}
\end{proposition}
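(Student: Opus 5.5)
The plan is to apply It\^o's product rule to each summand $w_i\Lambda_t(x_i)\psi_t(x_i)$ and then pass to the countable sum using the integrability hypotheses of Assumption~\ref{ass:discrete_app}.

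\textbf{Typewise dynamics.} For a fixed $i\in I$, the weight $\Lambda_t(x_i)=\exp(\int_0^t h(\varphi_u(x_i))\,du)$ is absolutely continuous in $t$, because $u\mapsto h(\varphi_u(x_i))$ is continuous (Assumption~\ref{ass:Lipschitz} and continuity of $h$). Hence
\[
d\Lambda_t(x_i)=\Lambda_t(x_i)h(\varphi_t(x_i))\,dt,
\]
and $\Lambda(x_i)$ has zero quadratic covariation with $\psi(x_i)$. The product rule together with the assumed decomposition of $\psi_t(x_i)$ gives
\[
\Lambda_t(x_i)\psi_t(x_i)=\psi_0(x_i)+\int_0^t\Lambda_u(x_i)\bigl[\zeta^\mu_u a(\varphi_u(x_i))+h(\varphi_u(x_i))\psi_u(x_i)\bigr]du+\int_0^t\Lambda_u(x_i)(\upsilon^\mu_u)^\intercal\mathrm{diag}(b(\varphi_u(x_i)))\,dW(u).
\]
Multiplying by $w_i$ and summing over a finite set of indices immediately gives the claimed formula and the semimartingale property. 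This settles the case where $I$ is finite.

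\textbf{Countable case.} Fix $T>0$ and take finite sets $I_N\uparrow I$. The partial sums $\psi^{N}_t$ satisfy the formula with the sum restricted to $I_N$.

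\emph{Drift part.} By the first summability condition, the $dt$-integrands are dominated in $L^1([0,T])$ pathwise. Fubini--Tonelli then lets us interchange $\sum_i$ and $\int_0^T$. The limit is an absolutely continuous, hence continuous finite-variation, process, and the convergence is uniform on $[0,T]$ by the tail estimate.

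\emph{Martingale part.} Set $K^i_u=w_i\Lambda_u(x_i)(\upsilon^\mu_u)^\intercal\mathrm{diag}(b(\varphi_u(x_i)))$. The second condition gives $\sum_i\|K^i\|_{L^2[0,T]}<\infty$ pathwise. By Minkowski's inequality, the partial sums $\sum_{i\in I_N}K^i$ are therefore Cauchy in $L^2([0,T])$ almost surely, with limit $K=\sum_iK^i$ satisfying $\int_0^T\|K_u\|^2du<\infty$. By the dominated convergence theorem for stochastic integrals (after localizing with stopping times $\tau_m=\inf\{t:\int_0^t\|\sum_i |K^i_u|\|^2du\ge m\}$), the integrals $\int_0^\cdot\sum_{I_N}K^i\,dW$ converge uniformly on compacts in probability to $\int_0^\cdot K\,dW$. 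This limit is a continuous local martingale.

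\textbf{Identifying the limit.} Pointwise, $\psi^N_t\to\psi^\mu_t$ as a convergent series; the left-hand side is assumed finite, and absolute convergence follows from the representation. Pass to an a.s.\ subsequence on which the ucp convergence holds. This identifies $\psi^\mu$ with a continuous process equal to $\psi^\mu_0$ plus a continuous finite-variation term plus a continuous local martingale, which yields both the semimartingale property and the stated decomposition on $[0,T]$. Since $T$ is arbitrary, the result holds on $[0,\infty)$.

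\textbf{Main obstacle.} The delicate step is the stochastic-integral limit. The hypothesis controls $\sum_i(\int\|K^i\|^2)^{1/2}$ only pathwise, not in expectation, so the interchange must go through localization plus stochastic dominated convergence rather than an $L^2$ isometry argument. The first thing to try is to use the stopping times $\tau_m$ and apply dominated convergence on each $[0,\tau_m\wedge T]$.
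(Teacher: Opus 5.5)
Your proposal is correct and follows the paper's route: integration by parts for each $w_i\Lambda_t(x_i)\psi_t(x_i)$ using $d\Lambda_t(x_i)=\Lambda_t(x_i)h(\varphi_t(x_i))\,dt$, then the two summability conditions of Assumption~\ref{ass:discrete_app} to interchange the countable sum with the Lebesgue and stochastic integrals. Your truncation, Fubini, and localized stochastic dominated convergence steps supply the details the paper compresses into one sentence, and, like the paper, you implicitly take $\psi^\mu_0=\sum_i w_i\psi_0(x_i)$ to be finite.
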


\begin{proof}
  For each $i$, $\Lambda_{t}(x_i)$ has finite variation with
  $d\Lambda_{t}(x_i) = \Lambda_{t}(x_i)\,h(\varphi_{t}(x_i))\,dt$. Hence,
  $\Lambda_{t}(x_i)\psi_{t}(x_i)$ is a continuous semimartingale by
  integration by parts, with drift and diffusion as stated. The summability
  conditions justify exchanging the (countable) sum with the Lebesgue and
  stochastic integrals, which yield the aggregate decomposition.
\end{proof}

\subsection{Continuous populations}

\begin{assumption}[Continuous populations]
  \label{ass:SF_app}
  Assume a diffeomorphic Brownian flow that satisfies Assumption~\ref{ass:Lipschitz}.
  Let $h\colon\mathbb{D}\to\mathbb{R}$ be a continuous $C^{2,0+}$ function
  and define the population weight as in Definition~\ref{def:aggregation}. Let
  $\mu$ be a finite Borel measure. Assume that for each $x\in\mathbb{D}$,
  \[
    \psi_{t}(x) = \psi_{0}(x)
    + \int_0^t \zeta_u^\mu a\big(\varphi_{u}(x)\big)\,du
    +  \int_0^t  (\upsilon_u^\mu)^\intercal\mathrm{diag}\big( b(\varphi_{u}(x)) \big)\,dW(u)
  \]
  is a continuous semimartingale in $t$, for continuous real-valued functions $a$ and $\mathbb{R}^n$-valued $b$,  consistent with respect to the population
  structure $\boldsymbol{\mu}$, and continuous semimartingale processes $\zeta^\mu$ (real-valued)
  and $\upsilon^\mu$ ($\mathbb{R}^n$-valued), both independent of type.

  Assume  that for every $T>0$, almost surely
  \[
    \int_{\mathbb{D}} \int_0^T
    \Big(\big|\Lambda_{u}(x)\,\zeta_u^\mu a\big(\varphi_{u}(x)\big)\big|
    + \big|\Lambda_{u}(x)\,h\big(\varphi_{u}(x)\big)\,\psi_{u}(x)\big|\Big)\,du\, d\mu(x)
    < \infty,
  \]
  and
  \[
    \int_{\mathbb{D}}
    \Big(\int_0^T \Lambda_{u}(x)^2\,
    \big\|(\upsilon_u^\mu)^\intercal\mathrm{diag}\big( b(\varphi_{u}(x)) \big)\big\|^2\,du\Big)^{1/2}\,d\mu(x) < \infty.
  \]
\end{assumption}

\begin{proposition}[Aggregate semimartingale: continuous population]
  \label{prop:cont_app}
  Under Assumption~\ref{ass:SF_app}, the population-weighted aggregate
  \[
    \psi^\mu_{t}
    \coloneqq  \int_{\mathbb{D}} \psi_{t}(x)\,\Lambda_{t}(x)\,d\mu(x)
  \]
  is a continuous semimartingale for $0\leq t<\infty$.

  Moreover,
  \begin{align*}
    \psi^\mu_{t}
     & = \psi^\mu_{0}
    + \int_0^t \int_{\mathbb{D}}\Big[
      \Lambda_{u}(x)\,\zeta_u^\mu a\big(\varphi_{u}(x)\big)
      + \Lambda_{u}(x)\,h\big(\varphi_{u}(x)\big)\,\psi_{u}(x)
    \Big]\,d\mu(x)\,du                    \\
     & \quad + \int_0^t \int_{\mathbb{D}}
    \Lambda_{u}(x)\,(\upsilon_u^\mu)^\intercal\mathrm{diag}\big( b(\varphi_{u}(x)) \big)\,d\mu(x)\,dW(u).
  \end{align*}
\end{proposition}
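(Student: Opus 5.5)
The plan is to treat the aggregate as a Lebesgue integral in $x$ of a family of continuous semimartingales with common driving noise, and to apply a stochastic Fubini argument. This is the continuous analogue of Proposition~\ref{prop:disc_weighted_app}, with the countable sum replaced by the integral against $\mu$.

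\textbf{Step 1 (typewise dynamics).} For each fixed $x$, the weight $\Lambda_t(x)=\exp(\int_0^t h(\varphi_u(x))\,du)$ has finite variation, with $d\Lambda_t(x)=\Lambda_t(x)h(\varphi_t(x))\,dt$. Integration by parts gives
\[
  d\big(\Lambda_t(x)\psi_t(x)\big)=\Lambda_t(x)\big[\zeta^\mu_t a(\varphi_t(x))+h(\varphi_t(x))\psi_t(x)\big]dt+\Lambda_t(x)(\upsilon^\mu_t)^\intercal\mathrm{diag}\big(b(\varphi_t(x))\big)\,dW(t).
\]
There is no cross-variation term, since $\Lambda$ has finite variation.

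\textbf{Step 2 (joint measurability).} Under Assumption~\ref{ass:Lipschitz} the map $(t,x,\omega)\mapsto\varphi_t(x)(\omega)$ is continuous in $(t,x)$ and progressively measurable. Since $a$, $b$ and $h$ are continuous and $\zeta^\mu$, $\upsilon^\mu$ do not depend on the type, the drift integrand and the diffusion integrand are $\mathcal{B}(\mathbb{D})\otimes$progressive measurable. This is exactly what stochastic Fubini requires.

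\textbf{Step 3 (exchange of integrals).} Integrate the identity of Step~1 over $\mu$.
\begin{itemize}
  \item For the $du$-term, the first integrability hypothesis of Assumption~\ref{ass:SF_app} and classical Fubini justify swapping $\int_{\mathbb{D}}$ and $\int_0^t$.
  \item For the stochastic term, apply the stochastic Fubini theorem (e.g.\ Protter, Thm.~IV.65, or Veraar's version). It requires
  \[
    \int_{\mathbb{D}}\Big(\int_0^T\|\Lambda_u(x)(\upsilon_u^\mu)^\intercal\mathrm{diag}(b(\varphi_u(x)))\|^2du\Big)^{1/2}d\mu(x)<\infty \quad\text{a.s.}
  \]
  This is precisely the second hypothesis, applied after localizing by stopping times $\tau_N$ so that the quantity is bounded by $N$. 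By the Burkholder--Davis--Gundy inequality, this condition also makes $x\mapsto\int_0^t(\cdots)dW$ $\mu$-integrable.
\end{itemize}
Letting $N\to\infty$ removes the localization. Together these give the displayed decomposition, with the $\psi_0$ term integrating to $\psi^\mu_0$.

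\textbf{Step 4 (regularity of the aggregate).} The right-hand side is a Lebesgue integral of an integrable process plus a stochastic integral of a predictable integrand $\int_{\mathbb{D}}\Lambda_u(x)(\cdots)\,d\mu(x)$. That integrand is a.s.\ square-integrable in $u$ on $[0,T]$, by Minkowski's integral inequality and the second hypothesis. Hence $\psi^\mu_t$ is a continuous semimartingale on each $[0,T]$, and so on $[0,\infty)$.

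\textbf{Main obstacle.} The hard step is Step~3, the stochastic Fubini exchange. The hypothesis is only a.s.\ finiteness of a mixed $L^1(\mu;L^2(dt))$ norm, not an expectation bound, so localization must be done carefully. One must also pick a version of $\int_{\mathbb{D}}\int(\cdots)dW\,d\mu$ that is measurable in $x$. The intended route is Veraar's a.s.\ version of stochastic Fubini; failing that, stop at $\tau_N$ and use the $L^1$ form.
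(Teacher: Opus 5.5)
Your proposal is correct and follows essentially the same route as the paper: apply the product rule to $\Lambda_t(x)\psi_t(x)$, then integrate against $\mu$ and exchange the $x$-integral with the $du$- and $dW$-integrals using classical Fubini and stochastic Fubini. Your use of an almost-sure stochastic Fubini theorem (Veraar's, which matches the $L^1(\mu;L^2(dt))$ hypothesis) together with localization makes precise the justification that the paper only sketches by reference to Proposition~\ref{prop:semimartingaleProperty}.
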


\begin{proof}
  Apply the product rule to $t\mapsto\psi_{t}(x)\Lambda_{t}(x)$ to identify
  its drift and diffusion from Assumption~\ref{ass:SF_app}. Then integrate in $x$
  and justify exchanging the $x$-integral with the Lebesgue/stochastic integrals
  by the same dominated-convergence and stochastic-Fubini argument used in
  Proposition~\ref{prop:semimartingaleProperty}, proving the stated
  decomposition.
\end{proof}

\begin{corollary}[Aggregate semimartingale and restart property]
  \label{cor:aggregate_semimartingale_app}
  Assume Assumption~\ref{ass:discrete_app} or Assumption~\ref{ass:SF_app}, and
  adopt the population structure of Lemma~\ref{lem:aggregation_cocycle_app}.
  Fix $t\ge 0$ and set
  \[
    \mu_t \coloneqq \mu_{0,t}.
  \]

  \smallskip
  \noindent\textbf{(Restarted aggregate process).}
  For $u\ge 0$, define the restarted aggregate by
  \[
    \psi^{\mu_t}_{0,u}
    \coloneqq
    \int_{\mathbb{D}} \bar\psi_{t,t+u}(x)\,\Lambda_{t,t+u}(x)\,d\mu_t(x)
    \;=\;
    \int_{\mathbb{D}} \psi_{t,t+u}(x)\,d\mu_t(x).
  \]
  Then $u\mapsto \psi^{\mu_t}_{0,u}$ is a continuous semimartingale on $[0,\infty)$.

  Moreover, its drift and diffusion coefficients are the corresponding
  population-weighted aggregates of the individual coefficients computed with
  respect to the shifted initial population $\mu_t$ and the shifted flow $\varphi_{t,t+u}$, $u\ge 0$.

  \smallskip
  \noindent\textbf{(Pathwise identification with calendar time).}
  If $\psi$ is $\varphi$-consistent, then for every $u\ge 0$,
  \[
    \psi^{\mu_t}_{0,u}
    \;=\;
    \psi^\mu_{0,t+u}
    \;=:\;
    \psi^\mu_{t+u}.
  \]
  In particular, the restarted process $u\mapsto \psi^{\mu_t}_{0,u}$ coincides
  pathwise with the calendar-time process $u\mapsto \psi^\mu_{t+u}$.
\end{corollary}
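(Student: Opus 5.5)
The statement has two parts, and I would prove them separately, building on Lemma~\ref{lem:aggregation_cocycle_app} and Propositions~\ref{prop:disc_weighted_app} and~\ref{prop:cont_app}.

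\textbf{Semimartingale property of the restarted aggregate.} Fix $t\ge 0$. The first step is to transport the restarted problem back to calendar time with the $0$-based aggregate identity~\eqref{eq:0based_identity_app}. Applied to the mass-weighted field $\psi_{t,t+u}$, it rewrites
\[
  \psi^{\mu_t}_{0,u}=\int_{\mathbb{D}}\bar\psi_{t,t+u}(x)\,\Lambda_{t,t+u}(x)\,d\mu_t(x)
\]
as an integral against the initial measure $\mu$. The integrand becomes $\bar\psi_{t,t+u}(\varphi_{0,t}(y))\,\Lambda_{0,t}(y)\,\Lambda_{t,t+u}(\varphi_{0,t}(y))$. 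The multiplicative weight cocycle~\eqref{eq:Lambda_cocycle_app} then collapses the two weights to $\Lambda_{0,t+u}(y)$.

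With $t$ frozen, the process $u\mapsto\bar\psi_{t,t+u}(\varphi_{0,t}(y))$ is a semimartingale in $u$ with respect to the shifted filtration $(\mathcal{F}_{t+u})_{u\ge0}$. Its drift and diffusion have the form required in Assumption~\ref{ass:discrete_app} or~\ref{ass:SF_app}, evaluated along $\varphi_{t,t+u}$, because the coefficients $a$ and $b$ are consistent. So I can apply Proposition~\ref{prop:disc_weighted_app} (discrete case) or Proposition~\ref{prop:cont_app} (continuous case) on $[0,\infty)$, with initial measure $\mu_t$ and flow $\varphi_{t,t+\cdot}$. The product rule with $d\Lambda_{t,t+u}=\Lambda_{t,t+u}\,h(\varphi_{t,t+u})\,du$ gives the drift and diffusion as population-weighted aggregates of the individual coefficients, which is the ``moreover'' clause.

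\textbf{Main obstacle: $\mu_t$ is random and $\mathcal{F}_t$-measurable.} The integrability conditions are stated for a deterministic $\mu$ on $[0,T]$, so I cannot apply the propositions to $\mu_t$ directly. I would pull everything back to $\mu$ through the identity above. Each condition then becomes the corresponding condition over $[t,t+T]\subset[0,t+T]$ for $\mu$, and that holds by hypothesis because $T$ is arbitrary. This justifies interchanging the $x$-integral with the Lebesgue and stochastic integrals: I work conditionally on $\mathcal{F}_t$, or simply treat the stochastic Fubini step on $[t,t+T]$ directly in calendar time, where the integrands are $\mathcal{F}_{t+u}$-adapted.

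\textbf{Pathwise identification with calendar time.} Assume $\psi$ is $\varphi$-consistent, so $\bar\psi_{t,t+u}(\varphi_{0,t}(y))=\bar\psi_{0,t+u}(y)$ off a null set, uniformly in $y$ and in times up to any horizon. Substituting this into the pulled-back integrand gives
\[
  \psi^{\mu_t}_{0,u}=\int_{\mathbb{D}}\bar\psi_{0,t+u}(y)\,\Lambda_{0,t+u}(y)\,d\mu(y)=\psi^\mu_{t+u}.
\]
Since both sides are continuous in $u$, the identity holds for all $u$ simultaneously almost surely. To make this rigorous, take the exceptional sets $N_T$ over integer $T$ and form their countable union, which is still a null set.
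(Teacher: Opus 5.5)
Your proposal is correct and is essentially the argument the paper intends. The paper states the corollary without a written proof, as a direct consequence of the cocycles in Lemma~\ref{lem:aggregation_cocycle_app} and of Propositions~\ref{prop:disc_weighted_app} and~\ref{prop:cont_app}; your pull-back of the random, $\mathcal{F}_t$-measurable measure $\mu_t=\mu\nu_{0,t}$ to the deterministic $\mu$ via $\Lambda_{0,t}(y)\Lambda_{t,t+u}(\varphi_{0,t}(y))=\Lambda_{0,t+u}(y)$ is a clean way to justify applying those propositions, and the stochastic Fubini step, after time $t$.

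One precision is worth adding. Transferring the $h\,\psi$ integrability term and the decomposition of $\bar\psi_{t,t+u}(\varphi_{0,t}(y))$ back to the hypotheses on $\mu$ uses $\bar\psi_{t,t+v}(\varphi_{0,t}(y))=\bar\psi_{0,t+v}(y)$, i.e.\ consistency of the per-capita field $\bar\psi$. This is the reading you rightly adopt, since consistency of the mass-weighted $\bar\psi\Lambda$ would leave a stray factor $\Lambda_{0,t}$; so in the first part without consistency you are, like the statement itself, tacitly imposing the assumption on the restarted field.
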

\bibliographystyle{plainnat}

\end{document}